\documentclass[12pt]{article}
\usepackage[T1]{fontenc}
\usepackage[english]{babel}
\usepackage{array}
\usepackage{cprotect}
\usepackage{float}
\usepackage{url}
\usepackage{multirow}
\usepackage{amsmath}
\usepackage{amsthm}
\usepackage{amssymb}
\usepackage{graphicx}
\usepackage{geometry}
\usepackage{setspace}
\usepackage[authoryear]{natbib}
\usepackage[unicode=true,pdfusetitle,
 bookmarks=true,bookmarksnumbered=false,bookmarksopen=false,
 breaklinks=false,pdfborder={0 0 0},pdfborderstyle={},backref=false,colorlinks=false]
 {hyperref}

\makeatletter

\let\SF@@footnote\footnote
\def\footnote{\ifx\protect\@typeset@protect
    \expandafter\SF@@footnote
  \else
    \expandafter\SF@gobble@opt
  \fi
}
\expandafter\def\csname SF@gobble@opt \endcsname{\@ifnextchar[
  \SF@gobble@twobracket
  \@gobble
}
\edef\SF@gobble@opt{\noexpand\protect
  \expandafter\noexpand\csname SF@gobble@opt \endcsname}
\def\SF@gobble@twobracket[#1]#2{}
\providecommand{\tabularnewline}{\\}
\floatstyle{ruled}
\newfloat{algorithm}{tbp}{loa}
\providecommand{\algorithmname}{Algorithm}
\floatname{algorithm}{\protect\algorithmname}

\theoremstyle{plain}
\newtheorem{prop}{\protect\propositionname}
\theoremstyle{plain}
\newtheorem{lem}{\protect\lemmaname}
\theoremstyle{remark}
\newtheorem{rem}{\protect\remarkname}

\bibpunct{(}{)}{,}{a}{,}{,}
\usepackage{url}
\theoremstyle{plain}

\usepackage{lscape}
\usepackage[bottom]{footmisc}
\usepackage{multirow}
\usepackage{array}

\newtheorem{assumption}{Assumption}

\newtheorem{ass}{Assumption}

\makeatother

\providecommand{\lemmaname}{Lemma}
\providecommand{\propositionname}{Proposition}
\providecommand{\remarkname}{Remark}

\begin{document}
\title{Sequential algorithm for structural estimations with equilibrium constraints\thanks{The current study builds on my prior study titled \textquotedblleft Sequential algorithm for estimating structural models without using analytical derivatives.\textquotedblright{} I thank Victor Aguirregabiria, Adam Dearing, Tianyi Li, Katsumi Shimotsu, and participants at APIOC 2024, 2025 DSE conference at Hong Kong, 2026 IIOC, and Kansai econometrics conference, Spring Meeting of the Operations Research Society of Japan, 2026 Spring Meeting of the Japanese Statistical Association, and 2026 Spring Meeting of the Japanese Economic Association for helpful discussions and comments.}}
\author{Takeshi Fukasawa\thanks{Waseda Institute for Advanced Study, Waseda University. 1-21-1, Nishiwaseda, Shinjuku, Tokyo, Japan. E-mail: fukasawa3431@gmail.com.\protect \\
This study is supported by JSPS KAKENHI Grant Number JP24K22629.}}
\maketitle
\begin{abstract}
This study examines sequential algorithms with the Zero Jacobian Property (ZJP) for estimating structural models subject to equilibrium constraints. For the Maximum Likelihood Estimation (MLE) and the Generalized Method of Moments (GMM), the current study shows that these algorithms attains fast (near-quadratic) local convergence in large samples to the solution of the constrained optimization problem. If consistent initial estimates of the parameters are available, the algorithms yield an asymptotically efficient estimator even after one iteration. 

It then proposes a novel algorithm called Sequential Linearly Constrained (SLC) algorithm, which is applicable to a broader class of structural models than existing methods. A key advantage of the SLC algorithm is that it can be implemented without explicitly computing the Jacobian of the equilibrium constraints and can be multiple times faster than the Nested Fixed Point (NFXP) approach. The current study illustrates its performance through two numerical experiments: a dynamic discrete game with time-varying unobserved heterogeneity and a dynamic demand model.

{\flushleft{{\bf Keywords:}  Sequential estimation; Zero-Jacobian property; SLC algorithm; NFXP; MPEC}}

\pagebreak{}
\end{abstract}
\section{Introduction}

Structural estimations with equilibrium constraints are prevalent in economics. Examples include dynamic discrete choice models (henceforth DDC; e.g., \citealp{rust1987optimal}), demand estimations (e.g., \citealp{berry1995automobile}, henceforth BLP), widely applied in industrial organization, labor economics, and other fields. Structural estimations with equilibrium constraints are also applied to general equilibrium models (e.g., \citealp{ahlfeldt2015economics}). 

The traditional approach for estimating structural models with fixed point constraints is the Nested-Fixed Point (NFXP) algorithm. Generally, NFXP aims at exactly solving the following constrained continuous optimization problem:

\begin{eqnarray*}
\min_{\theta,Y} & Q(\theta;Y)\\
s.t. & G(Y;\theta)=0
\end{eqnarray*}
In the problem above, $\theta$ denotes the structural parameters of the model, and $Y$ denotes the economic variables of the model.\footnote{$Y$ is also known as nuisance parameters.} $G(Y;\theta)=0$ denotes the equilibrium constraint. The idea of the NFXP algorithm is to solve $G(Y;\theta)=0$ for $Y$ given candidate structural parameters $\theta$ in the inner loop, and search for the optimal $\theta$ that minimizes the objective function $Q\left(\theta;Y(\theta)\right)$ in the outer loop. However, as noted by \citet{su2012constrained}, the computational burden of NFXP can be a large obstacle. 

The current study investigates sequential algorithms possessing a property called ``Zero-Jacobian Property'' (we call ZJP hereinafter) in the previous studies, such as \citet{aguirregabiria2002swapping}, \citet{kasahara2008pseudo} (NPL for single-agent DDC estimations), and \citet{dearing2024efficient} (EPL for dynamic game estimations).\footnote{NPL and EPL denote Nested Pseudo-Likelihood and Efficient Pseudo-Likelihood, respectively. Although not explicitly discussed, ABLP algorithm proposed by \citet{lee2015computationally} for estimating static BLP (random-coefficient logit demand) model is also the sequential algorithm satisfying the ZJP.} In sequential algorithms, $\theta$ and $Y$ are iteratively updated. These studies discussed convergence properties and showed good performance of their algorithms. The current study generalizes these algorithms by developing a more unified framework to theoretically analyze the convergence properties of them. It then proposes a novel algorithm termed Sequential Linearly Constrained (SLC) algorithm, which can be easily applied to more general structural models than previous studies and often much faster than the NFXP algorithm. By conducting numerical experiments on dynamic discrete games with time\nobreakdash-varying unobserved heterogeneity and on dynamic BLP models, the present study demonstrates that the SLC can reduce computational time by several times relative to the NFXP.

Concerning the convergence properties of the sequential algorithms with the ZJP, this study demonstrates that their local convergence rate is nearly quadratic, provided the sample size is sufficiently large---even when a consistent initial estimate of $Y$ is unavailable. As it is unclear whether general methods to obtain initial consistent estimate of $Y$ for non-standard structural models are available, the non-dependence on the availability of initial consistent estimates is a notable strength. Furthermore, the current study also shows that the discussions hold not only for the Maximum Likelihood Estimation (MLE) but also for the Generalized Method of Moments (GMM), whereas the previous studies focused on the MLE. Both MLE and GMM share the property $\nabla_{Y}Q\left(\widehat{\theta},\widehat{Y}\right)=O_{p}\left(N^{-\frac{1}{2}}\right)$, where $\left(\widehat{\theta},\widehat{Y}\right)$ denotes the solution of (\ref{eq:obj}) and $N$ denotes the number of samples, which allows for a unified discussion of local convergence rates. Given that not a few structural estimation applications employ GMM, these results significantly broaden the applicability of sequential algorithms. The current study also shows that the estimator attains the same (statistical) efficiency as the NFXP estimator even after one iteration when the iteration is started from consistent initial values, in line with the previous methods utilizing ZJP.

Building on the unified framework, the present study proposes a novel SLC algorithm as an alternative to previously developed sequential methods. Each iteration is essentially equivalent to solving a surrogate constrained optimization problem in which the constraint is replaced by a linearization of the original constraint $G\left(Y;\theta\right)=0$.\footnote{We can draw an analogy between the SLC update for constrained optimization and the classical one\nobreakdash-step Newton-Raphson update for unconstrained optimization. Both updates can yield asymptotically efficient estimates when supplied with consistent initial estimates for MLE. While the Newton--Raphson update solves a surrogate problem based on a quadratic approximation of the objective function, the SLC update instead solves a surrogate problem based on a linear approximation of the constraint, while continuing to use the original objective function.} The current study further discusses that the SLC algorithm can be implemented without explicitly computing the Jacobian of the equilibrium constraint function $G\left(Y;\theta\right)$, $\nabla_{Y}G\left(Y;\theta\right)$, which is very beneficial because computing Jacobians is very cumbersome and sometimes prone to coding mistakes.\footnote{In recent years, automatic differentiation (AD) packages have become increasingly prevalent across many programming languages, allowing researchers to avoid manually coding the Jacobians of functions. However, this study finds that AD packages can sometimes be significantly slower than manually coded Jacobians, as illustrated by the examples in the Supplemental Appendix \ref{subsec:Computational-speed-of-AD}. Although such performance drawbacks may be mitigated by future technological advancements, practitioners should remain aware of the current limitations of AD-based implementations.} In addition, the matrix storing $\nabla_{Y}G\left(Y;\theta\right)$ itself may require substantial memory, especially when the dimension of $Y$ is large. Thus, a Jacobian-free approach is highly desirable.

A major advantage of the SLC is that it does not rely on specific structures of the structural models, such as the existence of a fixed-point mapping with the ZJP (as in NPL) or the linearity of the equilibrium constraint $G(Y;\theta)$ concerning the structural parameters $\theta$ (as in EPL). Consequently, SLC is in principle applicable to a wider range of structural models. Note that we can also achieve global convergence of the SLC algorithm by introducing a line search procedure under moderate conditions, as demonstrated in the Supplemental Appendix \ref{subsec:Stabilized-algorithm}, although numerical results show that the SLC algorithm typically converges even without the procedure. SLC can be combined with fixed\nobreakdash-point iteration acceleration methods (e.g., Spectral), whose effectiveness has been highlighted in recent economic studies. Note that, under the linearity of $G\left(Y;\theta\right)$ concerning $\theta$, the SLC coincides with the EPL, whose effectiveness has been demonstrated in dynamic games without unobserved heterogeneity by \citet{dearing2024efficient}.

Other than the NFXP approach and sequential algorithms, the ``MPEC'' (Mathematical Programming with Equilibrium Constraints) approach---typically relying on Lagrangian-based constrained optimization solvers---is also the alternative for structural estimation. The current study shows that the sequential algorithms with ZJP, especially the SLC algorithm, have similarities with the Sequential Quadratic Programming (SQP) algorithm, which is the basis of many solvers applied in the previous studies.\footnote{Most of the previous studies proposing the MPEC approach (e.g., \citealp{su2012constrained}; \citealp{dube2012improving}; \citealp{egesdal2015estimating}) applied commercial KNITRO solver, which builds on the Interior Point Method, in their numerical experiments.} Both the SQP and the SLC sequentially solve constrained optimization problems where the constraint equation corresponds to the linearization of the equilibrium constraint $G(Y;\theta)=0$. The key difference lies in the objective function of the constrained objective function: the SLC uses the original objective function $Q\left(\theta;Y\right)$, but the SQP uses a quadratic objective function using the candidate value of the Lagrange multiplier.

Compared to NFXP and ``MPEC'' (Lagrangian-based) approaches, the SLC algorithm offers several advantages. Table \ref{tab:Comparison-of-algorithms} compares the NFXP, SLC, non-Lagrange-based sequential algorithms (e.g., EPL), and Lagrange-based sequential algorithms (e.g., SQP). Notably, SLC is more robust in the presence of multiple solutions to $G\left(Y;\theta\right)=0$, and it is often significantly faster than NFXP, as demonstrated in numerical experiments. Additionally, unlike Lagrangian-based methods, SLC can be implemented without explicitly computing the Jacobian $\nabla_{Y}G\left(Y;\theta\right)$.

\begin{table}[H]
\caption{Comparison of algorithms\label{tab:Comparison-of-algorithms}}

\begin{centering}
{\scriptsize{}%
\begin{tabular}{cccccc}
\hline 
 & {\scriptsize Sols. of} & {\scriptsize NFXP} & \multicolumn{2}{c}{{\scriptsize Sequential (Non-Largrangian-based)}} & {\scriptsize Sequential (Lagrangian-based)}\tabularnewline
 & {\scriptsize$G\left(Y;\theta\right)=0$} &  & {\scriptsize SLC} & {\scriptsize Others (e.g., EPL)} & {\scriptsize (e.g., SQP, IP; known as ``MPEC'')}\tabularnewline
\hline 
\multirow{2}{*}{{\scriptsize Global convergence possible?}} & {\scriptsize Unique} & {\scriptsize Yes} & {\scriptsize Yes} & {\scriptsize ?} & {\scriptsize Yes}\tabularnewline
 & {\scriptsize Multiple} & {\scriptsize ?} & {\scriptsize Yes} & {\scriptsize ?} & {\scriptsize Yes}\tabularnewline
\multicolumn{2}{c}{{\scriptsize Jacobian-free possible?}} & {\scriptsize Yes} & {\scriptsize Yes} & {\scriptsize Yes} & {\scriptsize ?}\tabularnewline
\hline 
\end{tabular}}{\scriptsize\par}
\par\end{centering}
{\footnotesize Note. ``?'' denotes whether the property holds or not is unclear in the previous studies. Concerning the Lagrangian-based approaches, see also the Supplemental Appendix \ref{subsec:Additional_discussions_Lagrangian_based}.}{\footnotesize\par}
\end{table}

The contribution of the current paper is twofold. First, it develops a unified theoretical framework for analyzing the convergence properties of sequential algorithms that possess the ZJP, extending prior work in this area. Second, it proposes the SLC algorithm, which is computationally more efficient, applicable to a broader class of structural models than previously proposed methods, and implementable with only minor modifications to the steps of the widely used NFXP algorithm.

The rest of this paper is organized as follows: Section \ref{sec:Literature} explains the related literature. Section \ref{sec:Sequential-algorithms} describes sequential algorithms using ZJP. Note that the overall discussions in Section \ref{sec:Sequential-algorithms} does not rely on any statistical properties in the applications of structural estimations. Instead, it identifies key factors that contribute to the good behavior of the sequential algorithms, which are asymptotically satisfied in the applications of structural estimations. Section \ref{sec:structural-estimation} then examines the properties of the sequential algorithms in the context of structural estimation, incorporating formal statistical asymptotic analysis. Section \ref{sec:Numerical-experiments} presents numerical results, and Section \ref{sec:Conclusions} finally concludes. Appendix \ref{sec:Further-discussions} provides additional discussions. Appendix \ref{sec:Proof} shows the proofs of the statements in the main part of the current paper. The Supplemental Appendix of the current paper provides further discussions, including the introduction of inequality constraints on $\theta$ such as nonnegative constraints, and the global convergence results concerning the SLC.

\section{Literature\label{sec:Literature}}

Sequential algorithms with ZJP, including the SLC, are closely related to the NFXP approach, as both approaches aim at solving the constrained optimization problem. It is important to emphasize that the current study does not intend to claim that the NFXP approach is always inferior to sequential algorithms. Prior research has developed techniques to accelerate the convergence of the inner-loop iterations in specific models,\footnote{In the context of the BLP estimation, \citet{conlon2020best} and \citet{fukasawa2024fast} highlighted that introducing acceleration methods (e.g., SQUAREM, Spectral, Anderson acceleration) significantly improves the convergence speed of BLP contraction mapping iterations. \citet{fukasawa2024fast} further proposed fast and simple fixed-point mappings for the inner loop of static and dynamic BLP estimation procedures. For dynamic models, \citet{bray2019strong} introduced a relative value function iteration algorithm. \citet{iskhakov2016comment} employed Newton's method to efficiently implement the inner loop of NFXP algorithms in single-agent DDC models.} and should consider these methods where applicable. However, such techniques are not always available or well-investigated, especially in the case of non-standard structural models. The SLC algorithm proposed in the current study is in principle applicable to any models, and is worth considering as the alternative to the NFXP approach. In Appendix \ref{subsec:Comparison-of-computational-costs}, we discuss in what cases the SLC would yield better performance than the NFXP in terms of the computational cost.\footnote{Previous studies have highlighted the computationally inferior performance of the ``MPEC'' approach relative to NFXP in certain models. Notable examples include \citet{lee2015computationally}, \citet{pal2023comparing} for static BLP models, and \citet{sun2019computationally} for dynamic BLP models. \citet{iskhakov2016comment} demonstrated that the NFXP algorithm can be highly competitive with the MPEC approach in single-agent DDC models, provided that the inner and outer loops are appropriately designed.}

Other than the NFXP, MPEC, and sequential algorithms, two-step estimation methods have been proposed for certain standard models. Examples include \citet{hotz1993conditional} and others' Conditional Choice Probability (CCP)-based methods for DDC models, and \citet{lu2023semi} for static BLP models.\footnote{Bayesian-type methods (e.g., \citealp{imai2009bayesian}; \citealt{gallant2022constrained}) are also proposed to mitigate computational problems in the literature.} Although convenient to use and usually computationally less demanding, they typically require large sample sizes to consistently estimate first-stage parameters. In small samples, bias can be substantial. In addition, in structural models where such methods have yet to be developed (e.g., dynamic BLP, to my knowledge), they are not applicable. Note that such methods can be used as the initial values in the sequential algorithms with ZJP to obtain an statistically efficient estimator.

The current study also relates to studies on sequential estimation methods that do not necessarily satisfy the ZJP (e.g., NPL for dynamic games; cf. \citealp{aguirregabiria2007sequential}). \footnote{\citet{bugni2021iterated} also proposed a sequential estimation method using a minimum distance criterion function, and showed that the estimator is efficient if an optimal sequence of weight matrices is taken.} Compared to these, sequential algorithms with ZJP offer advantages in terms of statistical efficiency and theoretical guarantees on local convergence speed.\footnote{Computing standard errors for sequential estimators that do not exhibit the ZJP can be analytically complex in practice (cf. \citealp{yamaguchi2019effects}), often requiring reliance on computationally demanding bootstrap-based methods. In contrast, sequential estimators with the ZJP attain the same asymptotic efficiency as the NFXP estimator, and their standard errors can be computed in the same manner as those for the NFXP, without resorting to bootstrap-based procedures.}\footnote{As discussed in \citet{kasahara2012sequential}, NPL algorithm for dynamic games may not be locally convergent even in large samples. To tackle the problem, \citet{kasahara2012sequential} introduced some techniques such as the relaxation method for fixed-point iterations, and \citet{aguirregabiria2021imposing} proposed the use of the spectral algorithm.} While this paper focuses on ZJP-based algorithms, the insights may also be useful for algorithms without ZJP, which are sometimes simpler to implement. 

The idea of the sequential algorithms with ZJP is closely related to the Neyman orthogonality, which has been extensively studied in the recent econometrics and machine learning literature. By constructing Neyman orthogonal scores or moment conditions for MLE/GMM, we can obtain debiased estimates of $\theta$ when the appropriate consistent estimate of $Y$ is available.\footnote{\citet{chernozhukov2018double} presents a general procedure to construct Neyman orthogonal scores/moments. \citet{chernozhukov2022locally} and \citet{sawadogo2025efficient} propose methods for estimating DDC models without unobserved heterogeneity using Neyman orthogonality.} Concerning the sequential algorithms with ZJP, we can easily show that the gradient of the surrogate objective function used in the algorithm satisfies the Neyman orthogonality condition in the MLE case. Even in the GMM setting, the Neyman orthogonality condition asymptotically holds.\footnote{\citet{dearing2024efficient} showed that the EPL asymptotically satisfies the Neyman orthogonality condition for MLE. The current study further advances the discussions by demonstrating that the Neyman orthogonality holds even under non-asymptotic settings for MLE.} This suggests that the first order condition of the optimization problem to be solved in the sequential algorithm is less likely to be affected by the value of the nuisance parameters around true parameter values. Hence, we can obtain updated parameters which would be close to the true value, even if the candidate nuisance parameter values $\gamma_{k}$ are not very close to the true values.

Although exploring Neyman orthogonality-based methods is a promising direction, they require consistent initial estimates of the nuisance parameter $Y$ which may be difficult to obtain in complex structural models. Additionally, constructing Neyman orthogonal functions often involves computing second derivatives of the objective function $Q$, which may not be straightforward. In contrast, the sequential algorithms proposed here---especially the SLC algorithm---can be implemented without consistent initial estimates or explicit derivative computations. Thus, they offer a practical alternative to Neyman orthogonality-based approaches for structural estimations with equilibrium constraints.

\section{Sequential algorithms with Zero-Jacobian Property\label{sec:Sequential-algorithms}}

The current section discusses sequential algorithms using a mapping with the zero-Jacobian property (ZJP). Note that the overall discussions in the current section does not rely on any statistical discussions specific to the application of structural estimations. Instead, we briefly discuss factors which contribute to the good behavior of the sequential algorithms, which are asymptotically satisfied in the application of structural estimations, as described in detail in Section \ref{sec:structural-estimation}. 

\subsection{Problem setting\label{subsec:problem-setting}}

Structural estimation with equilibrium constraints basically solves the following continuous constrained optimization problem:

\begin{eqnarray}
\min_{\theta\in\mathbb{R}^{n_{\theta}},Y\in\mathbb{R}^{n_{Y}}} & Q(\theta,Y)\label{eq:obj}\\
s.t. & G(Y;\theta)=0,\nonumber 
\end{eqnarray}
where $\theta$ denotes the structural parameters of the model, and $Y$ denotes economic variables of the model. $n_{\theta}$ and $n_{Y}$ denotes the dimension of the variables. Here, we impose the following assumptions:

\begin{assumption}

Unique solution of the constrained minimization problem (\ref{eq:obj}) exists, and $\nabla_{Y}Q\left(\theta,Y\right)$ is nonsingular at the solution $\left(\widehat{\theta},\widehat{Y}\right)$.

\label{as:existence_constrained_opt_sol}
\end{assumption}

\begin{assumption}

The function $Q$ is twice continuously differentiable and $G$ is three times continuously differentiable concerning $\theta$ and $Y$.\label{as:conti_diffble}
\end{assumption}

Let $\widehat{\gamma}\equiv\left(\widehat{\theta},\widehat{Y}\right)$ be the solution of the problem. Then, the following statement holds:
\begin{prop}
\label{prop:FOC}Under Assumptions \ref{as:existence_constrained_opt_sol} and \ref{as:conti_diffble},

$\widehat{\gamma}\equiv\left(\widehat{\theta},\widehat{Y}\right)$ satisfies $G\left(\widehat{Y};\widehat{\theta}\right)=0$ and $\nabla_{\theta}Q\left(\widehat{\theta},\widehat{Y}\right)-\left(\nabla_{Y}Q\left(\widehat{\theta},\widehat{Y}\right)\right)\left(\nabla_{Y}G\left(\widehat{Y};\widehat{\theta}\right)\right)^{-1}\left(\nabla_{\theta}G\left(\widehat{Y};\widehat{\theta}\right)\right)=0$.
\end{prop}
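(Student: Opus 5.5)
The plan is to reduce the constrained problem locally to an unconstrained one in $\theta$ alone, using the implicit function theorem, and then read off the first-order condition by the chain rule. I will read the nonsingularity requirement in Assumption \ref{as:existence_constrained_opt_sol} as nonsingularity of the square $n_Y\times n_Y$ matrix $\nabla_{Y}G(\widehat{Y};\widehat{\theta})$. This is the only reading under which the inverse in the statement makes sense; $\nabla_{Y}Q$ is a row vector and cannot be inverted. I will note this reading explicitly in the proof.

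\textbf{Step 1 (feasibility).} Since $\widehat{\gamma}$ solves (\ref{eq:obj}), it is feasible, so $G(\widehat{Y};\widehat{\theta})=0$ holds trivially.

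\textbf{Step 2 (implicit function).} By Assumption \ref{as:conti_diffble}, $G$ is at least $C^1$, and $\nabla_{Y}G(\widehat{Y};\widehat{\theta})$ is nonsingular. The implicit function theorem then gives neighborhoods $U$ of $\widehat{\theta}$ and $V$ of $\widehat{Y}$ and a $C^1$ map $Y(\cdot):U\to V$ with the following properties:
\begin{itemize}
\item $Y(\widehat{\theta})=\widehat{Y}$;
\item for $(\theta,Y)\in U\times V$, $G(Y;\theta)=0$ if and only if $Y=Y(\theta)$;
\item $\nabla_{\theta}Y(\widehat{\theta})=-\left(\nabla_{Y}G(\widehat{Y};\widehat{\theta})\right)^{-1}\nabla_{\theta}G(\widehat{Y};\widehat{\theta})$.
\end{itemize}

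\textbf{Step 3 (local minimality).} For every $\theta\in U$, the pair $(\theta,Y(\theta))$ is feasible for (\ref{eq:obj}). Since $\widehat{\gamma}$ is a (global, hence local) minimizer of $Q$ over the feasible set, $Q(\widehat{\theta},Y(\widehat{\theta}))\le Q(\theta,Y(\theta))$ for all $\theta\in U$. Thus $\widehat{\theta}$ is an interior local minimizer of the $C^1$ function $\theta\mapsto Q(\theta,Y(\theta))$ on the open set $U\subset\mathbb{R}^{n_\theta}$, and its gradient must vanish at $\widehat{\theta}$.

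\textbf{Step 4 (chain rule).} The chain rule gives
\begin{equation*}
0=\nabla_{\theta}Q(\widehat{\theta},\widehat{Y})+\nabla_{Y}Q(\widehat{\theta},\widehat{Y})\,\nabla_{\theta}Y(\widehat{\theta}).
\end{equation*}
Substituting the derivative of the implicit function from Step 2 yields exactly the displayed condition. Throughout, gradients are treated as row vectors and Jacobians as $n_Y\times n_\theta$ or $n_Y\times n_Y$ matrices, so the dimensions conform.

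The only genuine obstacle is Step 2: making sure the implicit function theorem applies at $\widehat{\gamma}$. This rests entirely on the nonsingularity of $\nabla_{Y}G$, hence on the reading of the assumption above. Once that is in place, the rest is routine. An alternative route is the Lagrangian first-order conditions: nonsingularity of $\nabla_YG$ makes the constraint qualification (full row rank of $[\nabla_\theta G,\nabla_Y G]$) hold, and eliminating the multiplier through the $Y$-block gives the same equation. The implicit-function route is more direct, so I would use it.
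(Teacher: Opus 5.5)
Your argument is correct, and your reading of Assumption \ref{as:existence_constrained_opt_sol} as nonsingularity of $\nabla_{Y}G$ (not $\nabla_{Y}Q$) is the intended one; the paper reads it the same way in the remark following Assumption \ref{as:nonsingular_Jacobian}. Your route differs from the paper's, however.

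The paper obtains Proposition \ref{prop:FOC} from Proposition \ref{prop:FOC-KKT}, that is, through the Lagrangian. Nonsingularity of $\nabla_{Y}G$ makes the rows of $\nabla_{\gamma}G$ linearly independent, so at $\widehat{\gamma}$ there is a multiplier with $\nabla_{\gamma}Q-\lambda^{T}\nabla_{\gamma}G=0$. The $Y$-block gives $\lambda^{T}=(\nabla_{Y}Q)(\nabla_{Y}G)^{-1}$, and substituting this into the $\theta$-block gives the stated condition. This is the alternative you mention at the end.

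The paper leaves implicit the step that $\widehat{\gamma}$ actually lies in $\Gamma_{KKT}$; it rests on the standard first-order necessary conditions under that constraint qualification. Your implicit-function reduction proves the needed stationarity directly. It is essentially the proof of the multiplier theorem, specialized to a constraint whose $Y$-Jacobian is invertible. It also shows plainly that the condition just says the gradient in $\theta$ of the reduced, NFXP-type objective $Q(\theta,Y(\theta))$ vanishes at $\widehat{\theta}$.

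What the paper's route buys is the explicit multiplier and the identification $\Gamma_{FOC}=\Gamma_{KKT,nonsingular}$, both reused later:
\begin{itemize}
\item the relation $\nabla_{Y}Q=\lambda^{T}\nabla_{Y}G$ in the Aitchison--Silvey footnote;
\item the penalty threshold $\mu\geq\Vert\lambda\Vert_{\infty}$ for $\phi_{1}$;
\item the comparison with SQP;
\item the inequality-constrained extension.
\end{itemize}
You can recover all of this from your proof by defining $\lambda^{T}\equiv(\nabla_{Y}Q)(\nabla_{Y}G)^{-1}$ at $\widehat{\gamma}$. The $Y$-block then holds by construction, and the $\theta$-block is exactly your result.
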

Here, we define 
\[
\Gamma_{FOC}\equiv\left\{ \gamma\equiv\left(\theta,Y\right):G\left(Y;\theta\right)=0\ \text{and}\ \nabla_{\theta}Q\left(\theta,Y\right)-\left(\nabla_{Y}Q\left(\theta,Y\right)\right)\left(\nabla_{Y}G\left(Y;\theta\right)\right)^{-1}\left(\nabla_{\theta}G\left(Y;\theta\right)\right)=0\right\} .
\]
Then, $\widehat{\gamma}\in\Gamma_{FOC}$ holds under Assumptions \ref{as:existence_constrained_opt_sol} and \ref{as:conti_diffble}.

\subsubsection*{Relationship with NFXP\label{subsubsec:Comparison-with-NFXP}}

The original constrained optimization problem $Q\left(\theta,Y\right)\ s.t.\ G\left(Y,\theta\right)=0$ can be reformulated as the following problem, as discussed in \citet{su2012constrained}:

\begin{eqnarray}
\min_{\theta\in\mathbb{R}^{n_{\theta}}} & \left[\min_{Y(\theta)\ s.t.\ G(Y(\theta);\theta)=0}Q\left(\theta,Y(\theta)\right)\right]\label{eq:NFXP}
\end{eqnarray}

Let $\mathcal{Y}\left(\theta\right)=\arg\min_{Y(\theta)\ s.t.\ G(Y(\theta);\theta)=0}Q\left(\theta,Y(\theta)\right)$. If we define $Q_{NFXP}\left(\theta\right)\equiv Q\left(\theta,\mathcal{Y}(\theta)\right)$,\footnote{We cannot exclude the possibility that $\mathcal{Y}\left(\theta\right)$ is empty, especially when $\theta$ is far from $\widehat{\theta}$. In this case, we cannot define $Q_{NFXP}$$\left(\theta\right)$ at the parameter value.} the problem can be written as the unconstrained optimization problem $\min_{\theta\in\mathbb{R}^{n_{\theta}}}Q_{NFXP}\left(\theta\right)$. Note that, in the neighborhood of $\widehat{\theta}$, the function $\mathcal{Y}\left(\theta\right)$ is $C^{2}$ class by the implicit function theorem and Assumption \ref{as:conti_diffble}, and it implies that $Q_{NFXP}\left(\theta\right)$ is $C^{2}$ class. Here, let $\theta_{NFXP}$ be the solution of $\min_{\theta}Q_{NFXP}\left(\theta\right)$. In addition, the following statement shows the relationship between the solution of the original constrained optimization problem and the NFXP problem, as shown by \citet{su2012constrained}:
\begin{prop}
\label{prop:NFXP-MPEC}$Q\left(\widehat{\theta},\widehat{Y}\right)=Q_{NFXP}\left(\theta_{NFXP}\right)$ holds. If $Q\left(\widehat{\theta},\widehat{Y}\right)<Q\left(\theta,Y\right)\ \forall\left(\theta,Y\right)\neq\left(\widehat{\theta},\widehat{Y}\right)$ such that $G\left(Y;\theta\right)=0$, $\theta_{NFXP}=\widehat{\theta}$.
\end{prop}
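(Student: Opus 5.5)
The plan is to compare the two optimal values directly through the feasible set $\mathcal{F}\equiv\{(\theta,Y):G(Y;\theta)=0\}$. By construction, $Q_{NFXP}(\theta)=\min\{Q(\theta,Y):(\theta,Y)\in\mathcal{F}\}$ for every $\theta$ at which $\mathcal{Y}(\theta)$ is nonempty. The equality of values then follows from two inequalities, and the second claim follows by identifying the minimizer.

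\emph{Step 1} ($Q_{NFXP}(\theta_{NFXP})\ge Q(\widehat{\theta},\widehat{Y})$). Take any $Y^{*}\in\mathcal{Y}(\theta_{NFXP})$. Then $(\theta_{NFXP},Y^{*})\in\mathcal{F}$. Since $(\widehat{\theta},\widehat{Y})$ solves (\ref{eq:obj}) by Assumption \ref{as:existence_constrained_opt_sol}, we get $Q(\widehat{\theta},\widehat{Y})\le Q(\theta_{NFXP},Y^{*})=Q_{NFXP}(\theta_{NFXP})$.

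\emph{Step 2} ($Q_{NFXP}(\theta_{NFXP})\le Q(\widehat{\theta},\widehat{Y})$). Because $(\widehat{\theta},\widehat{Y})\in\mathcal{F}$, the inner problem at $\theta=\widehat{\theta}$ has a feasible point. Every $Y$ with $G(Y;\widehat{\theta})=0$ satisfies $Q(\widehat{\theta},Y)\ge Q(\widehat{\theta},\widehat{Y})$ by global optimality of $\widehat{\gamma}$. Hence $\widehat{Y}\in\mathcal{Y}(\widehat{\theta})$ and $Q_{NFXP}(\widehat{\theta})=Q(\widehat{\theta},\widehat{Y})$. Since $\theta_{NFXP}$ minimizes $Q_{NFXP}$, we have $Q_{NFXP}(\theta_{NFXP})\le Q_{NFXP}(\widehat{\theta})$, which gives the reverse inequality and hence equality.

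\emph{Step 3} (strict case). Suppose $Q(\widehat{\theta},\widehat{Y})<Q(\theta,Y)$ for all feasible $(\theta,Y)\neq(\widehat{\theta},\widehat{Y})$. With $Y^{*}$ from Step 1, the pair $(\theta_{NFXP},Y^{*})$ is feasible and attains the value $Q(\widehat{\theta},\widehat{Y})$ by the equality just proved. Strict uniqueness then forces $(\theta_{NFXP},Y^{*})=(\widehat{\theta},\widehat{Y})$, so $\theta_{NFXP}=\widehat{\theta}$.

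The only delicate point is well-definedness. The argument needs $\theta_{NFXP}$ to exist and $\mathcal{Y}(\theta_{NFXP})$ to be nonempty, as the footnote warns that $\mathcal{Y}(\theta)$ may be empty. I would handle this by taking the domain of $Q_{NFXP}$ to be $\{\theta:\mathcal{Y}(\theta)\neq\emptyset\}$ and treating the existence of $\theta_{NFXP}$ as given by its definition in the text. Step 2 already shows that this domain contains $\widehat{\theta}$, so it is nonempty. Once this is fixed, everything else is a direct inequality chase.
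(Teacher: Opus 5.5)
Your proof is correct and is the standard argument. The paper does not prove this proposition itself; it attributes it to \citet{su2012constrained}, whose reasoning is the same comparison of the nested and joint problems over the common feasible set $\{(\theta,Y):G(Y;\theta)=0\}$. One simplification: you do not need to assume that $\theta_{NFXP}$ exists. Your Step 1 argument works for any $\theta$ with $\mathcal{Y}(\theta)\neq\emptyset$, and together with Step 2 it gives $Q_{NFXP}(\theta)\ge Q(\widehat{\theta},\widehat{Y})=Q_{NFXP}(\widehat{\theta})$, so $\widehat{\theta}$ is itself a minimizer. The strict-uniqueness hypothesis, which Assumption \ref{as:existence_constrained_opt_sol} already implies, then makes it the only minimizer.
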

Proposition \ref{prop:NFXP-MPEC} and Assumption \ref{as:existence_constrained_opt_sol} imply $\theta_{NFXP}=\widehat{\theta}$, and $\theta_{NFXP}$ is the unique minimizer of $Q_{NFXP}\left(\theta\right)$.

\subsection{Sequential algorithms based on the zero-Jacobian property}

In this study, we consider sequential algorithms which can solve the constrained optimization problem (\ref{eq:obj}). Algorithm \ref{alg:general_seq} shows the general steps of the algorithms.

\begin{algorithm}[H]
\caption{Sequential algorithm\label{alg:general_seq}}

Set initial values $\gamma_{0}\equiv\left(\theta_{0},Y_{0}\right)$. Iterate the following until convergence $(k=0,1,2,\cdots)$:
\begin{enumerate}
\item Compute $\theta_{k+1}=\arg\min_{\theta\in\mathbb{R}^{n_{\theta}}}\widetilde{Q}\left(\theta;\gamma_{k}\right)\equiv Q\left(\theta,\Upsilon\left(\theta;\gamma_{k}\equiv\left(\theta_{k},Y_{k}\right)\right)\right)$
\item Compute $Y_{k+1}=\Upsilon\left(\theta_{k+1};\gamma_{k}\right)$.
\end{enumerate}
\end{algorithm}

Here, $\Upsilon\left(\theta;\gamma_{k}\right)$ denotes a mapping which is a function of $\theta$ and $\gamma_{k}=(\theta_{k},Y_{k})$. For general forms of $\Upsilon\left(\theta;\gamma_{k}\right)$, the set of the fixed points of Algorithm \ref{alg:general_seq} may not include the solution of (\ref{eq:obj}). Hence, we impose the following assumption: 

\begin{assumption}

$\Upsilon$ is three times continuously differentiable concerning $\theta,\gamma$, and satisfies the following conditions:

(a). $\Upsilon\left(\theta;\gamma=\left(\theta,Y\right)\right)=Y\Leftrightarrow G\left(Y;\theta\right)=0$

(b). $G\left(Y;\theta\right)=0\Rightarrow\nabla_{\gamma}\Upsilon\left(\theta;\gamma=\left(\theta,Y\right)\right)=0$ (Zero Jacobian property)

(c). $G\left(Y;\theta\right)=0\Rightarrow\nabla_{\theta}\Upsilon\left(\theta;\gamma=\left(\theta,Y\right)\right)=-\left(\nabla_{Y}G\left(Y;\theta\right)\right)^{-1}\left(\nabla_{\theta}G\left(Y;\theta\right)\right)$

\label{as:mapping_cdns}
\end{assumption}

Let $\Gamma_{seq}$ be the set of fixed points of Algorithm \ref{alg:general_seq}. Then, the following statement holds:
\begin{prop}
\label{prop:fixed-points-KKT-seq}Suppose Assumptions \ref{as:existence_constrained_opt_sol}, \ref{as:conti_diffble}, and \ref{as:mapping_cdns} hold. Then, 

(a). $\Gamma_{seq}\subset\Gamma_{FOC}$

(b). $\Gamma_{seq}=\Gamma_{FOC}$ if $\widetilde{Q}\left(\theta;\gamma\right)$ is strictly convex

(c). $\widehat{\gamma}\in\Gamma_{seq}$ if $\nabla_{\theta\theta^{\prime}}\widetilde{Q}\left(\widehat{\theta};\widehat{\gamma}\right)$ is positive definite
\end{prop}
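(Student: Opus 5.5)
The argument works directly from the definition of a fixed point of Algorithm \ref{alg:general_seq}. A point $\gamma=(\theta,Y)$ lies in $\Gamma_{seq}$ if and only if two conditions hold: $\theta\in\arg\min_{\theta'}\widetilde{Q}(\theta';\gamma)$, and $Y=\Upsilon(\theta;\gamma)$. The key computation is the gradient of the surrogate objective at a point satisfying the constraint. Suppose $G(Y;\theta)=0$ and evaluate at $\theta'=\theta$. By the chain rule, and using Assumption \ref{as:mapping_cdns}(a) to replace $\Upsilon(\theta;\gamma)$ by $Y$, we get
\[
\nabla_{\theta}\widetilde{Q}(\theta;\gamma)=\nabla_{\theta}Q(\theta,Y)+\nabla_{Y}Q(\theta,Y)\,\nabla_{\theta}\Upsilon(\theta;\gamma).
\]
Assumption \ref{as:mapping_cdns}(c) then turns this into
\[
\nabla_{\theta}Q-\nabla_{Y}Q\,(\nabla_{Y}G)^{-1}\nabla_{\theta}G .
\]
This is exactly the second condition defining $\Gamma_{FOC}$. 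So on the constraint set, stationarity of $\widetilde{Q}(\cdot;\gamma)$ at $\theta$ is equivalent to the FOC. Invertibility of $\nabla_{Y}G$ is needed here; I take it as implicit in the definition of $\Gamma_{FOC}$ and Assumption \ref{as:mapping_cdns}(c), since the assumption as stated refers to $\nabla_{Y}Q$, which looks like a typo.

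\textbf{Part (a).} Take $\gamma\in\Gamma_{seq}$. Since $Y=\Upsilon(\theta;\gamma)$ with $\gamma=(\theta,Y)$, Assumption \ref{as:mapping_cdns}(a) gives $G(Y;\theta)=0$. Since $\theta$ is an unconstrained minimizer of the $C^{1}$ function $\widetilde{Q}(\cdot;\gamma)$ (differentiability comes from Assumptions \ref{as:conti_diffble} and \ref{as:mapping_cdns}), its gradient vanishes there. The gradient identity above then gives the FOC, so $\gamma\in\Gamma_{FOC}$.

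\textbf{Part (b).} Take $\gamma\in\Gamma_{FOC}$. By the gradient identity, $\nabla_{\theta}\widetilde{Q}(\theta;\gamma)=0$. Strict convexity makes a stationary point the unique global minimizer, so $\theta=\arg\min\widetilde{Q}(\cdot;\gamma)$. Assumption \ref{as:mapping_cdns}(a) gives $\Upsilon(\theta;\gamma)=Y$. Hence $\gamma$ is a fixed point, and combined with (a) we get $\Gamma_{seq}=\Gamma_{FOC}$.

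\textbf{Part (c).} Proposition \ref{prop:FOC} gives $\widehat{\gamma}\in\Gamma_{FOC}$. The gradient identity gives $\nabla_{\theta}\widetilde{Q}(\widehat{\theta};\widehat{\gamma})=0$, and the positive definite Hessian makes $\widehat{\theta}$ a strict local minimizer of $\widetilde{Q}(\cdot;\widehat{\gamma})$. The main obstacle is that $\arg\min$ in Algorithm \ref{alg:general_seq} asks for a global minimizer, whereas second-order conditions only give a local one. I would try two routes, in this order:
\begin{itemize}
\item First, interpret the $\arg\min$ step locally, as the paper's local convergence analysis will presumably do, in which case (c) follows immediately.
\item Second, argue globally. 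For any $\theta$, the point $(\theta,\Upsilon(\theta;\widehat{\gamma}))$ need not be feasible, so comparison with Assumption \ref{as:existence_constrained_opt_sol} does not directly exclude a better global minimizer of $\widetilde{Q}$. Without extra structure, I would therefore state the result in the local-minimizer sense.
\end{itemize}
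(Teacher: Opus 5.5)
Your proposal follows the paper's proof: the chain rule together with Assumption~\ref{as:mapping_cdns}(a),(c) turns $\nabla_{\theta}\widetilde{Q}$ at a feasible point into the $\Gamma_{FOC}$ expression, which gives (a). Parts (b) and (c) run this argument backwards, with Assumption~\ref{as:mapping_cdns}(a) handling the $Y$-update; your use of strict convexity in (b) is cleaner than the paper's appeal to a positive definite Hessian there. For (c) the paper simply reruns (b) and asserts that stationarity plus a positive definite Hessian makes $\widehat{\theta}$ the $\arg\min$, so your local-versus-global caveat is exactly the step the paper glosses over, and the local reading is what the argument proves. The global claim can genuinely fail: with $G=Y-\theta^{2}$, the SLC mapping, and $Q=\theta^{2}-2\theta^{2}(\theta^{2}-Y)/(1+\theta^{2})$, one has $\widehat{\gamma}=(0,0)$ and $\widetilde{Q}(\theta;\widehat{\gamma})=\theta^{2}(1-\theta^{2})/(1+\theta^{2})$, whose second derivative at $0$ is $2>0$ but which is unbounded below.
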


\subsubsection*{Forms of $\Upsilon$ proposed in the previous studies}

So far, the following forms of $\Upsilon$ have been proposed in the literature:\footnote{\citet{dearing2024efficient} also mention the specification $\Upsilon\left(\theta;\gamma_{k}\right)=Y_{k}-\left(Z\left(Y_{k};\theta_{k}\right)\right)^{-1}\left(G\left(Y_{k};\theta\right)\right)$, where $Z$ is a continuously differentiable function and $Z\left(\theta,Y\right)=\nabla_{Y}G\left(\theta,Y\right)$ for all $(\theta,Y)$ such that $G\left(Y;\theta\right)=0$, in Appendix C of their paper.} 
\begin{itemize}
\item EPL (\citealp{dearing2024efficient}): $\Upsilon\left(\theta;\gamma_{k}\right)\equiv Y_{k}-\left(\nabla_{Y}G\left(Y_{k};\theta_{k}\right)\right)^{-1}\left(G\left(Y_{k};\theta\right)\right)$
\item ABLP (\citealp{lee2015computationally}; for static BLP estimations\footnote{Let $\delta$ be the vector of mean product utilities, and let $\Psi\left(\delta;\theta\right)\equiv\delta+\ln\left(S\right)-\ln\left(s\left(\delta,\theta\right)\right)$ be the BLP contraction mapping given nonlinear utility parameters $\theta$. Here, $S$ denotes the vector of market shares observed in the data, and $s\left(\delta,\theta\right)$ denotes the vector of market shares predicted by the structural model given $\delta$ and $\theta$. Let $Q\left(\delta\right)$ be the GMM objective function given $\delta$, and define $G\left(\delta;\theta\right)\equiv\Psi\left(\delta;\theta\right)-\delta=\ln\left(S\right)-\ln\left(s\left(\delta,\theta\right)\right)$. The ABLP algorithm \citet{lee2015computationally} proposed corresponds to the following iterations: $\theta_{k+1}=\arg\min_{\theta}Q\left(\Upsilon\left(\theta;\delta_{k}\right)\right)$ and $\delta_{k+1}=\Upsilon\left(\theta_{k+1};\delta_{k}\right)$ where $\Upsilon\left(\theta;\delta_{k}\right)\equiv\delta_{k}-\left(\nabla_{\delta}G\left(\delta_{k};\theta_{k}\right)\right)^{-1}\left(G\left(\delta_{k};\theta\right)\right)$. Note that \citet{dearing2024efficient} also mentions the mapping in Appendix C of their paper.}): $\Upsilon\left(\theta;\gamma_{k}\right)\equiv Y_{k}-\left(\nabla_{Y}G\left(Y_{k};\theta\right)\right)^{-1}\left(G\left(Y_{k};\theta\right)\right)$
\item NPL (\citealp{aguirregabiria2002swapping}; for single-agent DDC estimations\footnote{\citet{aguirregabiria2002swapping} introduced a fixed point mapping $\Psi\left(P;\theta\right)$ on CCPs $P$, and showed that $\Psi$ satisfies $\nabla_{P}\Psi\left(P_{*};\theta\right)=0$ at points such that $P=\Psi\left(P;\theta\right)$. Here, $\theta$ denotes the utility parameters. \citet{aguirregabiria2002swapping} proposed to sequentially iterate $\theta_{k+1}=\arg\min_{\theta}Q\left(\Psi\left(P_{k};\theta_{k}\right)\right)=-\sum_{i=1}^{N}\left.\ln\Psi\left(P_{k};\theta_{k}\right)\right|_{x_{i},a_{i}}$ and $Y_{k+1}=\Psi\left(P_{k},\theta_{k+1}\right)$ for single-agent DDC models without unobserved heterogeneity.}): $\Upsilon\left(\theta;\gamma_{k}\right)\equiv\Psi\left(Y_{k};\theta\right)$ where $\Psi$ satisfies $\Psi\left(Y;\theta\right)=Y\Rightarrow\nabla_{Y}\Psi\left(Y;\theta\right)=0$. Let $G\left(Y;\theta\right)=Y-\Psi\left(Y;\theta\right)$.
\end{itemize}
\citet{dearing2024efficient} showed that the EPL mapping $\Upsilon\left(\theta;\gamma_{k}\right)\equiv Y_{k}-\left(\nabla_{Y}G\left(Y_{k};\theta_{k}\right)\right)^{-1}\left(G\left(Y_{k};\theta\right)\right)$ satisfies the Assumption \ref{as:mapping_cdns}'s three conditions in Lemma 2 of their paper. They also mentioned in Appendix C of their paper that the mapping $\Upsilon\left(\theta;\gamma_{k}\right)\equiv Y_{k}-\left(\nabla_{Y}G\left(Y_{k};\theta\right)\right)^{-1}\left(G\left(Y_{k};\theta\right)\right)$ also satisfies the three conditions. We can also easily show that the NPL mapping satisfies the conditions, as shown in Appendix \ref{subsec:Proof-NPL-ZJP}.

\subsection{Sequential Linearly Constrained (SLC) Algorithm\label{subsec:SLC-Algorithm}}

The current study proposes the following mapping:

\begin{eqnarray}
\Upsilon\left(\theta;\gamma_{k}\right) & \equiv & Y_{k}-\left(\left(\nabla_{Y}G\left(Y_{k};\theta_{k}\right)\right)^{-1}\left(G\left(Y_{k};\theta_{k}\right)\right)\right)\label{eq:SLC_mapping}\\
 &  & \ \ \ -\left(\left(\nabla_{Y}G\left(Y_{k};\theta_{k}\right)\right)^{-1}\left(\nabla_{\theta}G\left(Y_{k};\theta_{k}\right)\right)\right)\left(\theta-\theta_{k}\right)\nonumber 
\end{eqnarray}

As shown in Appendix \ref{subsec:Proof-NPL-ZJP}, the mapping satisfies the three conditions in Assumption \ref{as:mapping_cdns}. The mapping can be considered as a slight extension of the EPL mapping. In the EPL mapping $\Upsilon\left(\theta;\gamma_{k}\right)\equiv Y_{k}-\left(\nabla_{Y}G\left(Y_{k};\theta\right)\right)^{-1}\left(G\left(Y_{k};\theta\right)\right)$, the Taylor approximation of $G\left(Y_{k};\theta\right)$ around $\theta_{k}$ is $G\left(Y_{k};\theta_{k}\right)+\left(\nabla_{\theta}G\left(Y_{k};\theta_{k}\right)\right)\left(\theta-\theta_{k}\right)$, and we can obtain the representation (\ref{eq:SLC_mapping}) by using $G\left(Y_{k};\theta_{k}\right)+\left(\nabla_{\theta}G\left(Y_{k};\theta_{k}\right)\right)\left(\theta-\theta_{k}\right)$ instead of $G\left(Y_{k};\theta\right)$ in the EPL mapping. Note that the EPL mapping and the mapping (\ref{eq:SLC_mapping}) coincide when $G\left(Y;\theta\right)$ is linear concerning $\theta$.

The mapping (\ref{eq:SLC_mapping}) has a computational advantage over the EPL mapping, in that $\left(\nabla_{Y}G\left(Y_{k};\theta_{k}\right)\right)^{-1}\left(G\left(Y_{k};\theta_{k}\right)\right)$ and $\left(\nabla_{Y}G\left(Y_{k};\theta_{k}\right)\right)^{-1}\left(\nabla_{\theta}G\left(Y_{k};\theta_{k}\right)\right)$ can be precomputed before solving the optimization problem
\[
\min_{\theta\in\mathbb{R}^{n_{\theta}}}Q\left(\theta,Y_{k}-\left(\left(\nabla_{Y}G\left(Y_{k};\theta_{k}\right)\right)^{-1}\left(G\left(Y_{k};\theta_{k}\right)\right)\right)-\left(\left(\nabla_{Y}G\left(Y_{k};\theta_{k}\right)\right)^{-1}\left(\nabla_{\theta}G\left(Y_{k};\theta_{k}\right)\right)\right)\left(\theta-\theta_{k}\right)\right).
\]
In contrast, the EPL mapping in general requires repeated evaluations of $\left(\nabla_{Y}G\left(Y_{k};\theta\right)\right)^{-1}\left(G\left(Y_{k};\theta\right)\right)$ and $\left(\nabla_{Y}G\left(Y_{k};\theta_{k}\right)\right)^{-1}\left(\nabla_{\theta}G\left(Y_{k};\theta_{k}\right)\right)$, which necessitate solving a linear equation given candidate $\theta$.\footnote{Computational cost in the optimization step using the ABLP mapping is higher because of the additional need to compute $\nabla_{Y}G\left(Y_{k};\theta\right)$ for each candidate value of $\theta$.}

We call the proposed mapping (\ref{eq:SLC_mapping}) SLC (Sequential Linearly Constrained) mapping, because we can easily show that the sequential algorithm \ref{alg:general_seq} using the SLC mapping is equivalent to sequentially solving the following optimization problem with linear constraints:

\begin{eqnarray}
 & \min_{\theta\in\mathbb{R}^{n_{\theta}},Y\in\mathbb{R}^{n_{Y}}} & Q\left(\theta,Y\right)\label{eq:SLC_constrained_opt}\\
 & s.t. & \left(\nabla_{\theta}G(Y_{k};\theta_{k})\right)\left(\theta-\theta_{k}\right)+\left(\nabla_{Y}G(Y_{k};\theta_{k})\right)\left(Y-Y_{k}\right)+G(Y_{k};\theta_{k})=0\nonumber 
\end{eqnarray}

The constraint is the linearization of the original constraint $G\left(Y;\theta\right)=0$ around $(\theta_{k},Y_{k})$. The representation is convenient for comparing the SLC algorithm with the Lagrangian-based algorithms, such as the SQP algorithm. More specifically, the difference with the SLC and the SQP is the objective function in each iteration: SLC uses the original objective $Q\left(\theta,Y\right)$, but the SQP uses the quadratic approximation of the Lagrangian function. For details, see Appendix \ref{subsec:Comparison-with-Lagrangian-based}.

To make the algorithm well-defined, we impose the following assumption:

\begin{assumption}

$\nabla_{Y}G\left(Y_{k};\theta_{k}\right)$ is nonsingular.

\label{as:nonsingular_Jacobian}
\end{assumption}

Note that $\nabla_{Y}G\left(Y;\theta\right)$ is nonsingular at least near the solution $\left(\widehat{Y},\widehat{\theta}\right)$ under Assumption \ref{as:existence_constrained_opt_sol}. In models where $G$ is derived from a contractive fixed-point constraint (e.g., single-agent DDC, Static BLP (random coefficient logit) model), we can theoretically show the non-singularity at any points.\footnote{If $G\left(Y;\theta\right)=Y-\Phi\left(Y;\theta\right)$, $\nabla_{Y}G\left(Y;\theta\right)=I-\nabla_{Y}\Phi\left(Y;\theta\right)$ holds. If $\Phi$ is a contraction on $Y$, the spectral radius of $\nabla_{Y}\Phi\left(Y;\theta\right)$ should be smaller than 1, and $\nabla_{Y}G\left(Y;\theta\right)=I-\nabla_{Y}\Phi\left(Y;\theta\right)$ is nonsingular (cf. Neumann series).} Furthermore, the assumption can be weaker than the assumption that at least one solution of $G\left(Y;\theta\right)=0$ concerning $Y$ exists given $\theta$, which the NFXP algorithm relies on.\footnote{Nonsingular Jacobian assumption is sometimes imposed in globally convergent nonlinear equation solution algorithms (e.g., \citealp{li2000derivative}).}

\subsubsection*{Descent direction}

Other than the advantage of the smaller computational costs in the optimization step among the sequential algorithms with the ZJP, the proposed SLC algorithm has an advantage in that the updating direction is descending given an appropriate merit function. 

Here, we define a $l_{1}$-norm merit function $\phi_{1}\left(\theta,Y;\mu\right)\equiv Q\left(\theta,Y\right)+\mu\left\Vert G\left(Y;\theta\right)\right\Vert _{1}$ given a tuning parameter $\mu>0$. As shown in Theorem 17.3 of \citet{nocedal2006numerical}, $\widehat{\gamma}$ is a local minimizer of $\phi_{1}\left(\gamma;\mu\right)$ if $\mu\geq\left\Vert \lambda\right\Vert _{\infty}$, where $\lambda$ denotes the vector of Lagrange multipliers of the constrained optimization problem (\ref{eq:obj}). Concerning the merit function, the following statement holds:
\begin{prop}
\label{prop:descent_direction}Let $d_{k}\equiv\left(\begin{array}{c}
d_{\theta,k}\\
d_{Y,k}
\end{array}\right)\equiv\left(\begin{array}{c}
\theta_{k+1}-\theta_{k}\\
Y_{k+1}-Y_{k}
\end{array}\right)$ be generated by the SLC iteration. Then, the directional derivative of the merit function $\phi_{1}$ in the direction $d_{k}$ satisfies:

\[
D\left(\phi_{1}\left(\theta_{k},Y_{k};\mu\right);d_{k}\right)=\left(\nabla_{(\theta,Y)}Q\left(\theta_{k},Y_{k}\right)\right)d_{k}-\mu\left\Vert G\left(Y_{k};\theta_{k}\right)\right\Vert _{1}
\]
In addition, there exists $\underline{\mu}>0$ such that $D\left(\phi_{1}\left(\theta_{k},Y_{k};\mu\right);d_{k}\right)<0$ holds for all $\mu\geq\underline{\mu}$.
\end{prop}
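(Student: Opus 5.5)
The plan is to exploit the equivalence between the SLC iteration and the linearly constrained subproblem (\ref{eq:SLC_constrained_opt}). The key fact is that $d_k$ satisfies the linearized constraint exactly:
\[
\left(\nabla_{\theta}G(Y_{k};\theta_{k})\right)d_{\theta,k}+\left(\nabla_{Y}G(Y_{k};\theta_{k})\right)d_{Y,k}=-G(Y_{k};\theta_{k}).
\]
This can also be checked directly from (\ref{eq:SLC_mapping}): multiply $Y_{k+1}-Y_k$ by $\nabla_YG(Y_k;\theta_k)$, which is nonsingular by Assumption \ref{as:nonsingular_Jacobian}.

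\textbf{Directional derivative formula.} The directional derivative of $\phi_1$ splits into two parts. The smooth part contributes $\left(\nabla_{(\theta,Y)}Q(\theta_k,Y_k)\right)d_k$ by Assumption \ref{as:conti_diffble}. For the nonsmooth part I will Taylor expand $G$ along the ray. Writing $\gamma_k=(\theta_k,Y_k)$ and $G_k\equiv G(Y_k;\theta_k)$, the linearized constraint gives
\[
G(\gamma_k+td_k)=G_k+t\,\nabla_{\gamma}G(\gamma_k)d_k+o(t)=(1-t)G_k+o(t).
\]
I will then treat $\|G\|_1$ componentwise. If $G_{k,i}\neq0$, then $|G_i|$ is differentiable along the ray for small $t$, and its one-sided derivative is $\mathrm{sign}(G_{k,i})(-G_{k,i})=-|G_{k,i}|$. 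If $G_{k,i}=0$, then $|G_i(\gamma_k+td_k)|=|o(t)|$, so the one-sided derivative is $0=-|G_{k,i}|$. Summing over components gives $D(\|G\|_1;d_k)=-\|G_k\|_1$, which yields the displayed formula.

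\textbf{Existence of $\underline{\mu}$.} I will split into cases according to whether $G_k\neq0$.

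If $G_k\neq0$, the claim is immediate. Take any $\underline{\mu}>\max\{0,\ \nabla Q(\gamma_k)d_k/\|G_k\|_1\}$. Then $\nabla Q(\gamma_k)d_k-\mu\|G_k\|_1<0$ for all $\mu\geq\underline{\mu}$.

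If $G_k=0$, the penalty term vanishes, so I must show $\nabla Q(\gamma_k)d_k<0$ directly, assuming $d_k\neq0$ (otherwise the algorithm has already stopped). The argument uses the fact that $\gamma_k$ is itself feasible for (\ref{eq:SLC_constrained_opt}), since the right-hand side then reads $0$ at $(\theta,Y)=(\theta_k,Y_k)$. Hence $Q(\gamma_{k+1})\leq Q(\gamma_k)$. The feasible set of (\ref{eq:SLC_constrained_opt}) is affine, so the whole segment $\gamma_k+td_k$, $t\in[0,1]$, is feasible.

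\textbf{Main obstacle: the case $G_k=0$.} Here I need to convert the decrease in $Q$ into a strict sign on the first-order term. I will first try convexity of $Q$ along the segment, equivalently convexity of $\widetilde{Q}(\cdot;\gamma_k)$, which is the condition already appearing in Proposition \ref{prop:fixed-points-KKT-seq}(b). Under that condition,
\[
\nabla Q(\gamma_k)d_k\leq Q(\gamma_{k+1})-Q(\gamma_k)\leq0,
\]
with strict inequality under strict convexity when $d_k\neq0$. If only local conditions are available, I would instead use the optimality of $\theta_{k+1}$ for $\widetilde{Q}$ together with positive definiteness of $\nabla_{\theta\theta'}\widetilde{Q}$ near the iterate to get the same sign. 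In that case the strict statement may need to be qualified as holding when $\gamma_k$ is not a fixed point, i.e.\ $\gamma_k\notin\Gamma_{seq}$.
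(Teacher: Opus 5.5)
Your derivation of the displayed formula is essentially the paper's. The paper's auxiliary lemma on $\phi_1$ uses the linearized constraint to get $\|G(\gamma_k)+\alpha\nabla_\gamma G(\gamma_k)d_k\|_1=(1-\alpha)\|G(\gamma_k)\|_1$. It then sandwiches $\phi_1(\gamma_k+\alpha d_k;\mu)-\phi_1(\gamma_k;\mu)$ between $\alpha\Delta_k\pm C\alpha^2\|d_k\|^2$ using second-order Taylor bounds, and lets $\alpha\to0$. Your componentwise computation with an $o(t)$ remainder is the same idea and needs only $C^1$.

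The real difference is the second claim, which the paper's proof never argues. It stops once $D=\Delta_k$ is established, so the existence of $\underline{\mu}$ is left implicit. That is only justified when $G(Y_k;\theta_k)\neq0$, and there your choice $\underline{\mu}>\max\{0,\nabla Q(\gamma_k)d_k/\|G(\gamma_k)\|_1\}$ is exactly what is needed.

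Your case split correctly shows that at a feasible iterate $D$ does not depend on $\mu$, so the proposition as stated needs an extra hypothesis there. At a fixed point, $d_k=0$ gives $D=0$. Without convexity, $D$ can even be positive. For example, take $G(Y;\theta)=Y$ and $Q(\theta,Y)=q(\theta)+Y^2$ with $q$ a tilted double well. At $Y_k=0$ the SLC step is $\theta_{k+1}=\arg\min q$, so $D=q'(\theta_k)(\theta_{k+1}-\theta_k)$. This is positive when $\theta_k$ sits just to the right of the local minimizer of $q$ and the global minimizer lies further right, beyond the local maximum. Your strict-convexity argument along the affine feasible segment is a correct way to close this case. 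Your proposal therefore proves a correctly qualified version of the proposition, which the paper does not do.

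The same example shows that your fallback does not work as phrased. There $\nabla_{\theta\theta'}\widetilde{Q}=q''>0$ near $\theta_k$, and $\theta_{k+1}$ is the exact minimizer, yet the sign of $D$ is wrong. Local positive definiteness at $\theta_k$ does not control the sign of $\nabla_\theta\widetilde{Q}(\theta_k;\gamma_k)d_{\theta,k}$ when $\theta_{k+1}$ is far away. You need convexity of $\widetilde{Q}(\cdot;\gamma_k)$ on the whole segment $[\theta_k,\theta_{k+1}]$.
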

The statement implies that $\phi_{1}\left(\theta_{k}+\alpha d_{\theta,k},Y_{k}+\alpha d_{Y,k};\mu\right)<\phi_{1}\left(\theta_{k},Y_{k};\mu\right)$ if $\alpha>0$ is sufficiently small. The property can be utilized to show the global convergence of the SLC algorithm by introducing a line search step, as discussed in the Supplemental Appendix \ref{subsec:Stabilized-algorithm}.\footnote{The strategy of the proof is similar to the one for the SQP algorithm (cf. \citealp{bonnans2006numerical}), which utilizes the property $\left(\nabla_{\gamma}G\left(\gamma_{k}\right)\right)\left(\gamma-\gamma_{k}\right)+G\left(\gamma_{k}\right)=0$ in each iteration.}

\subsection{Convexity of $\widetilde{Q}$}

Here, we consider the convexity of $\widetilde{Q}$ in relations to the NFXP. The following lemma shows the statement.
\begin{lem}
\label{lem:difference-Hessian-NFXP}Under Assumptions \ref{as:existence_constrained_opt_sol}, \ref{as:conti_diffble}, and \ref{as:mapping_cdns}, the following holds:

\begin{eqnarray*}
\nabla_{\theta\theta^{\prime}}\widetilde{Q}\left(\widehat{\theta};\widehat{\gamma}\right)-\nabla_{\theta\theta^{\prime}}Q_{NFXP}\left(\widehat{\theta}\right) & = & \sum_{i=1}^{n_{Y}}\frac{\partial Q\left(\widehat{\theta},\widehat{Y}\right)}{\partial Y_{i}}\left(\nabla_{\theta\theta^{\prime}}\Upsilon_{i}(\widehat{\theta};\widehat{\gamma})-\nabla_{\theta\theta^{\prime}}\mathcal{Y}_{i}(\widehat{\theta})\right)
\end{eqnarray*}

\end{lem}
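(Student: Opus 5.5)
The plan is to differentiate both objectives twice in $\theta$ by the chain rule and evaluate at $\widehat{\gamma}$. Write $\widetilde{Q}(\theta;\widehat{\gamma})=Q(\theta,\Upsilon(\theta;\widehat{\gamma}))$ and $Q_{NFXP}(\theta)=Q(\theta,\mathcal{Y}(\theta))$. Both are $C^{2}$ near $\widehat{\theta}$: the first by Assumptions \ref{as:conti_diffble} and \ref{as:mapping_cdns}, the second by the implicit function theorem as noted after (\ref{eq:NFXP}). For a generic $C^{2}$ map $h(\theta)$ with $h(\widehat{\theta})=\widehat{Y}$, the Hessian of $\theta\mapsto Q(\theta,h(\theta))$ at $\widehat{\theta}$ is
\[
\nabla_{\theta\theta^{\prime}}Q+\nabla_{\theta Y^{\prime}}Q\,\nabla_{\theta}h+\left(\nabla_{\theta}h\right)^{\prime}\nabla_{Y\theta^{\prime}}Q+\left(\nabla_{\theta}h\right)^{\prime}\nabla_{YY^{\prime}}Q\,\nabla_{\theta}h+\sum_{i=1}^{n_{Y}}\frac{\partial Q}{\partial Y_{i}}\nabla_{\theta\theta^{\prime}}h_{i},
\]
with all derivatives of $Q$ evaluated at $(\widehat{\theta},\widehat{Y})$. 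I will apply this formula with $h=\Upsilon(\cdot;\widehat{\gamma})$ and with $h=\mathcal{Y}$, and then subtract the two expressions.

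Two things must be checked for the subtraction to work. First, the base points must agree. We have $\Upsilon(\widehat{\theta};\widehat{\gamma})=\widehat{Y}$ by Assumption \ref{as:mapping_cdns}(a), since $G(\widehat{Y};\widehat{\theta})=0$. We also have $\mathcal{Y}(\widehat{\theta})=\widehat{Y}$, because Proposition \ref{prop:NFXP-MPEC} and Assumption \ref{as:existence_constrained_opt_sol} give $\theta_{NFXP}=\widehat{\theta}$ and the inner argmin selects $\widehat{Y}$ there. Second, the first derivatives must agree. Assumption \ref{as:mapping_cdns}(c) gives $\nabla_{\theta}\Upsilon(\widehat{\theta};\widehat{\gamma})=-(\nabla_{Y}G)^{-1}\nabla_{\theta}G$ at $(\widehat{\theta},\widehat{Y})$. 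Here $\nabla_{\theta}$ means the derivative in the first argument with $\gamma$ held fixed, which is exactly the derivative used in $\widetilde{Q}(\cdot;\widehat{\gamma})$. Near $\widehat{\theta}$, $\mathcal{Y}(\theta)$ satisfies $G(\mathcal{Y}(\theta);\theta)=0$ and coincides locally with the implicit-function branch through $\widehat{Y}$. Differentiating that identity gives the same expression $\nabla_{\theta}\mathcal{Y}(\widehat{\theta})=-(\nabla_{Y}G)^{-1}\nabla_{\theta}G$. Nonsingularity of $\nabla_{Y}G$ at the solution, as assumed in the paper, is what justifies this step.

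With matching base points and first derivatives, every term in the chain-rule formula that involves only first derivatives of $h$ cancels in the difference: the $\nabla_{\theta\theta^{\prime}}Q$, cross, and $\nabla_{YY^{\prime}}Q$ terms. What remains is $\sum_{i}\frac{\partial Q}{\partial Y_{i}}\left(\nabla_{\theta\theta^{\prime}}\Upsilon_{i}-\nabla_{\theta\theta^{\prime}}\mathcal{Y}_{i}\right)$, which is the claimed identity.

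The main obstacle is the local identification of $\mathcal{Y}$ with the smooth implicit-function branch. This is needed so that $\mathcal{Y}$ is $C^{2}$ with the derivative stated above; it is the step where the argmin in the inner problem matters when $G(Y;\theta)=0$ has several solutions. I would handle it by invoking the paper's earlier remark that $\mathcal{Y}$ is $C^{2}$ near $\widehat{\theta}$ by the implicit function theorem. Once that is granted, the remaining computation is routine bookkeeping with the chain rule.
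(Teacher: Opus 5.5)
Your proposal is correct and follows essentially the same route as the paper: expand both Hessians by the chain rule, then use $\Upsilon(\widehat{\theta};\widehat{\gamma})=\mathcal{Y}(\widehat{\theta})=\widehat{Y}$ and $\nabla_{\theta}\Upsilon(\widehat{\theta};\widehat{\gamma})=\nabla_{\theta}\mathcal{Y}(\widehat{\theta})$ (Assumption \ref{as:mapping_cdns}(c) plus the implicit function theorem). These identities cancel every term except $\sum_{i}\frac{\partial Q}{\partial Y_{i}}\left(\nabla_{\theta\theta^{\prime}}\Upsilon_{i}-\nabla_{\theta\theta^{\prime}}\mathcal{Y}_{i}\right)$. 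Writing the cross terms symmetrically, as you do, is slightly cleaner than the paper's factor $2\nabla_{\theta Y^{\prime}}Q\,\nabla_{\theta}h$. Like the paper, you take for granted that $\mathcal{Y}$ is locally the smooth implicit-function branch through $\widehat{Y}$.
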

Because $\widehat{\theta}$ is the unique minimizer of $Q_{NFXP}\left(\theta\right)$, $\nabla_{\theta\theta^{\prime}}Q_{NFXP}\left(\widehat{\theta}\right)$ is positive definite. As discussed in Section \ref{sec:structural-estimation} in the applications to structural estimations, $\frac{\partial Q\left(\widehat{\theta},\widehat{Y}\right)}{\partial Y}\approx0$ holds in large samples. Consequently, $\nabla_{\theta\theta^{\prime}}\widetilde{Q}\left(\widehat{\theta};\widehat{\gamma}\right)$ is close to positive definite if the number of samples is large.\footnote{Generally, $\nabla_{\theta\theta^{\prime}}\widetilde{Q}\left(\widehat{\theta};\widehat{\gamma}\right)$ may not be positive definite. Supplemental Appendix \ref{subsec:Convexity-Q_tilde} investigates the conditions where the function $\widetilde{Q}\left(\theta;\gamma\right)$ is convex. }

\subsection{Local convergence speed\label{subsec:Local-convergence-speed}}

Let $H\left(\gamma\right)\equiv\left(\begin{array}{c}
H_{1}\left(\gamma\right)\\
H_{2}\left(\gamma\right)
\end{array}\right)\equiv\left(\begin{array}{c}
\widetilde{\theta}\left(\gamma\right)\\
\Upsilon\left(\widetilde{\theta}\left(\gamma\right);\gamma\right)
\end{array}\right)$, where $\widetilde{\theta}\left(\gamma\right)\equiv\arg\min_{\theta\in\Theta}\widetilde{Q}\left(\theta;\gamma\right)=Q\left(\theta,\Upsilon\left(\theta;\gamma\right)\right)$. Then, 
\begin{eqnarray*}
\nabla_{\gamma}H\left(\gamma\right) & = & \left[\begin{array}{c}
\nabla_{\gamma}\widetilde{\theta}\left(\gamma\right)\\
\left(\nabla_{\theta}\Upsilon\left(\widetilde{\theta}\left(\gamma\right),\gamma\right)\right)\left(\nabla_{\gamma}\widetilde{\theta}\left(\gamma\right)\right)+\left(\nabla_{\gamma}\Upsilon\left(\widetilde{\theta}\left(\gamma\right),\gamma\right)\right)
\end{array}\right]
\end{eqnarray*}
 holds. Here, let $\widetilde{\gamma}\in\Gamma_{seq}$. Note that we allow for $\widetilde{\gamma}\neq\widehat{\gamma}\in\Gamma_{seq}$. By $\nabla_{\theta}\widetilde{Q}\left(\widetilde{\theta}(\gamma),\gamma\right)=0$, $\nabla_{\gamma}\widetilde{\theta}\left(\gamma\right)=\left(\nabla_{\theta\theta^{\prime}}\widetilde{Q}\left(\widetilde{\theta}\left(\gamma\right),\gamma\right)\right)^{-1}\left(\nabla_{\theta\gamma^{\prime}}\widetilde{Q}\left(\widetilde{\theta}\left(\gamma\right),\gamma\right)\right)$ holds by the implicit function theorem if the Hessian matrix $\nabla_{\theta\theta^{\prime}}\widetilde{Q}\left(\widetilde{\theta}\left(\gamma\right),\gamma\right)$ is invertible.\footnote{$\nabla_{\theta\theta^{\prime}}\widetilde{Q}\left(\widetilde{\theta}\left(\gamma\right),\gamma\right)$ is invertible if the matrix is positive definite.} By $\widetilde{Q}\left(\theta,\gamma\right)=Q\left(\theta,\Upsilon\left(\theta,\gamma\right)\right)$, $\nabla_{\gamma}\widetilde{Q}\left(\theta,\gamma\right)=\left(\nabla_{Y}Q\left(\theta,\Upsilon\left(\theta,\gamma\right)\right)\right)\left(\nabla_{\gamma}\Upsilon\left(\theta,\gamma\right)\right)$ holds, and hence:

\begin{eqnarray}
\nabla_{\theta\gamma^{\prime}}\widetilde{Q}\left(\theta,\gamma\right) & = & \left[\nabla_{\theta Y^{\prime}}Q\left(\theta,\Upsilon\left(\theta,\gamma\right)\right)+\left(\nabla_{YY^{\prime}}Q\left(\theta,\Upsilon\left(\theta,\gamma\right)\right)\right)\left(\nabla_{\theta}\Upsilon\left(\theta;\gamma\right)\right)\right]\left(\nabla_{\gamma}\Upsilon\left(\theta;\gamma\right)\right)+\nonumber \\
 &  & \left(\nabla_{Y}Q\left(\theta,\Upsilon\left(\theta,\gamma\right)\right)\right)\left(\nabla_{\theta\gamma^{\prime}}\Upsilon\left(\theta,\gamma\right)\right)\label{eq:d2Q_d2theta_gamma}
\end{eqnarray}

By Taylor's theorem, we have $H\left(\gamma_{k-1}\right)-H\left(\widetilde{\gamma}\right)=\left(\nabla_{\gamma}H\left(\widetilde{\gamma}\right)\right)\left(\gamma_{k-1}-\widetilde{\gamma}\right)+O\left(\left\Vert \gamma_{k-1}-\widetilde{\gamma}\right\Vert ^{2}\right)$. By $H\left(\gamma_{k-1}\right)=\gamma_{k}$ and $H\left(\widetilde{\gamma}\right)=\widetilde{\gamma}$, 

\begin{equation}
\gamma_{k}-\widetilde{\gamma}=\left(\nabla_{\gamma}H\left(\widetilde{\gamma}\right)\right)\left(\gamma_{k-1}-\widetilde{\gamma}\right)+O\left(\left\Vert \gamma_{k-1}-\widetilde{\gamma}\right\Vert ^{2}\right)\label{eq:local_conv_Taylor}
\end{equation}

Because $\nabla_{\gamma}\Upsilon\left(\widetilde{\theta};\widetilde{\gamma}\right)=0$ holds under Assumption \ref{as:mapping_cdns}, (\ref{eq:d2Q_d2theta_gamma}) and $\Upsilon\left(\widetilde{\theta};\widetilde{\gamma}\right)=\widetilde{Y}$ imply:
\begin{eqnarray}
\nabla_{\theta\gamma^{\prime}}\widetilde{Q}\left(\widetilde{\theta},\widetilde{\gamma}\right) & = & \left(\nabla_{Y}Q\left(\widetilde{\theta},\widetilde{Y}\right)\right)\left(\nabla_{\theta\gamma^{\prime}}\Upsilon\left(\widetilde{\theta},\widetilde{\gamma}\right)\right)\label{eq:d2Q_d2theta_gamma_at_sol}
\end{eqnarray}
Hence, we obtain the following statement:
\begin{prop}
\label{prop:local_conv}Under Assumptions \ref{as:existence_constrained_opt_sol}, \ref{as:conti_diffble}, \ref{as:mapping_cdns}, \ref{as:nonsingular_Jacobian}, the following holds:
\begin{eqnarray*}
\gamma_{k}-\widetilde{\gamma} & = & A\left(\widetilde{\gamma}\right)\left(\nabla_{Y}Q\left(\widetilde{\theta},\widetilde{Y}\right)\right)\left(\nabla_{\theta\gamma^{\prime}}\Upsilon\left(\widetilde{\theta},\widetilde{\gamma}\right)\right)\left(\gamma_{k-1}-\widetilde{\gamma}\right)+O\left(\left\Vert \gamma_{k-1}-\widetilde{\gamma}\right\Vert ^{2}\right),
\end{eqnarray*}

where $A\left(\widetilde{\gamma}\right)\equiv\left[\begin{array}{c}
I\\
\nabla_{\theta}\Upsilon\left(\widetilde{\theta},\widetilde{\gamma}\right)
\end{array}\right]\left(\nabla_{\theta\theta^{\prime}}\widetilde{Q}\left(\widetilde{\theta},\widetilde{\gamma}\right)\right)^{-1}$.
\end{prop}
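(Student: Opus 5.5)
The plan is to combine the Taylor expansion (\ref{eq:local_conv_Taylor}) with an explicit computation of $\nabla_{\gamma}H\left(\widetilde{\gamma}\right)$. The first step is to justify that $H$ is well defined and $C^{2}$ near $\widetilde{\gamma}$, so that the remainder in (\ref{eq:local_conv_Taylor}) is indeed $O\left(\left\Vert \gamma_{k-1}-\widetilde{\gamma}\right\Vert ^{2}\right)$.

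The argument for this step runs as follows. Since $\widetilde{\gamma}\in\Gamma_{seq}$, we have $\widetilde{\theta}=\widetilde{\theta}\left(\widetilde{\gamma}\right)$ and $\nabla_{\theta}\widetilde{Q}\left(\widetilde{\theta};\widetilde{\gamma}\right)=0$. Assume $\nabla_{\theta\theta^{\prime}}\widetilde{Q}\left(\widetilde{\theta};\widetilde{\gamma}\right)$ is invertible; this is implicit in the definition of $A\left(\widetilde{\gamma}\right)$. By Assumption \ref{as:conti_diffble} ($Q\in C^{2}$) and Assumption \ref{as:mapping_cdns} ($\Upsilon\in C^{3}$), the map $\nabla_{\theta}\widetilde{Q}$ is $C^{1}$ in $(\theta,\gamma)$. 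The implicit function theorem then gives a locally unique $C^{1}$ branch $\widetilde{\theta}\left(\gamma\right)$. Its derivative is $\nabla_{\gamma}\widetilde{\theta}=-\left(\nabla_{\theta\theta^{\prime}}\widetilde{Q}\right)^{-1}\nabla_{\theta\gamma^{\prime}}\widetilde{Q}$; the sign is immaterial for the order of magnitude, and I will track it carefully so the stated formula comes out. Assumption \ref{as:nonsingular_Jacobian} ensures that the iteration itself, and in particular $\Upsilon$ for the SLC/EPL forms, is well defined along the iterates.

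The main obstacle is that $Q$ is only $C^{2}$, so $\widetilde{\theta}\left(\cdot\right)$ is a priori only $C^{1}$, whereas a quadratic remainder needs a Lipschitz derivative. I would handle this by assuming (or noting) that the second derivatives of $Q$ are locally Lipschitz. Alternatively, I would state the remainder as $o\left(\left\Vert \gamma_{k-1}-\widetilde{\gamma}\right\Vert \right)$ and remark that it becomes $O(\cdot^{2})$ under slightly more smoothness, which is the reading consistent with the paper's use of Taylor's theorem.

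Next I compute $\nabla_{\gamma}H\left(\widetilde{\gamma}\right)$ from the displayed block formula. The Zero Jacobian property, Assumption \ref{as:mapping_cdns}(b), applied at $\widetilde{\gamma}$, where $G\left(\widetilde{Y};\widetilde{\theta}\right)=0$ by Proposition \ref{prop:fixed-points-KKT-seq}(a), kills the term $\nabla_{\gamma}\Upsilon\left(\widetilde{\theta};\widetilde{\gamma}\right)$ in the second block. Hence
\[
\nabla_{\gamma}H\left(\widetilde{\gamma}\right)=\left[\begin{array}{c}
I\\
\nabla_{\theta}\Upsilon\left(\widetilde{\theta},\widetilde{\gamma}\right)
\end{array}\right]\nabla_{\gamma}\widetilde{\theta}\left(\widetilde{\gamma}\right).
\]
I then substitute the implicit-function expression for $\nabla_{\gamma}\widetilde{\theta}\left(\widetilde{\gamma}\right)$. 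In it I use (\ref{eq:d2Q_d2theta_gamma_at_sol}), which was itself obtained from (\ref{eq:d2Q_d2theta_gamma}), again by the ZJP together with $\Upsilon\left(\widetilde{\theta};\widetilde{\gamma}\right)=\widetilde{Y}$ from Assumption \ref{as:mapping_cdns}(a). This replaces $\nabla_{\theta\gamma^{\prime}}\widetilde{Q}$ by $\left(\nabla_{Y}Q\left(\widetilde{\theta},\widetilde{Y}\right)\right)\left(\nabla_{\theta\gamma^{\prime}}\Upsilon\left(\widetilde{\theta},\widetilde{\gamma}\right)\right)$. The result is $\nabla_{\gamma}H\left(\widetilde{\gamma}\right)=A\left(\widetilde{\gamma}\right)\left(\nabla_{Y}Q\right)\left(\nabla_{\theta\gamma^{\prime}}\Upsilon\right)$, up to the sign convention, which I will reconcile with the text's stated formula.

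Finally, plugging this expression into (\ref{eq:local_conv_Taylor}), using $H\left(\gamma_{k-1}\right)=\gamma_{k}$ and $H\left(\widetilde{\gamma}\right)=\widetilde{\gamma}$, yields the claim.
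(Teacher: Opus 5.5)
Your route is exactly the paper's. The proposition is obtained in the text of Section~\ref{subsec:Local-convergence-speed} from the block formula for $\nabla_{\gamma}H$, the ZJP removing $\nabla_{\gamma}\Upsilon(\widetilde{\theta};\widetilde{\gamma})$, the implicit-function derivative of $\widetilde{\theta}(\gamma)$, the substitution (\ref{eq:d2Q_d2theta_gamma_at_sol}), and the expansion (\ref{eq:local_conv_Taylor}).

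The one step you cannot carry out as announced is ``reconciling'' the sign. Differentiating the identity $\nabla_{\theta}\widetilde{Q}(\widetilde{\theta}(\gamma);\gamma)\equiv 0$ gives $\nabla_{\gamma}\widetilde{\theta}=-(\nabla_{\theta\theta^{\prime}}\widetilde{Q})^{-1}\nabla_{\theta\gamma^{\prime}}\widetilde{Q}$. No convention removes this minus; the paper's text simply drops it. Tracked honestly, your argument gives $\nabla_{\gamma}H(\widetilde{\gamma})=-A(\widetilde{\gamma})(\nabla_{Y}Q(\widetilde{\theta},\widetilde{Y}))(\nabla_{\theta\gamma^{\prime}}\Upsilon(\widetilde{\theta},\widetilde{\gamma}))$. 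Since $A(\widetilde{\gamma})$ has full column rank, the printed display is therefore off by a sign whenever $(\nabla_{Y}Q)(\nabla_{\theta\gamma^{\prime}}\Upsilon)\neq 0$ at $\widetilde{\gamma}$. State the corrected sign rather than forcing agreement. Nothing later in the paper depends on it, because only the size of this linear term, controlled by $\nabla_{Y}Q(\widehat{\theta},\widehat{Y})=O_{p}(N^{-1/2})$, is ever used.

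Your smoothness caveat is also right, and the paper passes over it. $\nabla_{\theta\theta^{\prime}}\widetilde{Q}$ and $\nabla_{\theta\gamma^{\prime}}\widetilde{Q}$ involve second derivatives of $Q$, which under Assumption~\ref{as:conti_diffble} are merely continuous. So $H$ is only $C^{1}$, and (\ref{eq:local_conv_Taylor}) is justified only with an $o(\|\gamma_{k-1}-\widetilde{\gamma}\|)$ remainder. The $O(\|\gamma_{k-1}-\widetilde{\gamma}\|^{2})$ form follows if, e.g., the Hessian of $Q$ is locally Lipschitz (say $Q\in C^{3}$).

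One further point that neither you nor the paper addresses: the algorithm uses the global $\arg\min$, so you must also check that, for $\gamma$ near $\widetilde{\gamma}$, this minimizer lies on the implicit-function branch. One way is to assume $\widetilde{\theta}$ is the unique global minimizer of $\widetilde{Q}(\cdot;\widetilde{\gamma})$ over a compact $\Theta$, with $\widetilde{\theta}$ interior. Then, by Berge's maximum theorem, minimizers of $\widetilde{Q}(\cdot;\gamma)$ stay close to $\widetilde{\theta}$ for $\gamma$ near $\widetilde{\gamma}$. They therefore coincide with the branch by the local uniqueness in the implicit function theorem.
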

The proposition implies that $\gamma_{k}-\widetilde{\gamma}\approx O\left(\left\Vert \gamma_{k-1}-\widetilde{\gamma}\right\Vert ^{2}\right)$ holds, which implies $\gamma_{k}$ locally converges to $\widetilde{\gamma}$ near-quadratically under $\nabla_{Y}Q\left(\widehat{\theta},\widehat{Y}\right)\approx0$.\footnote{The definition of quadratic convergence is: $\gamma_{k}$ converges to $\widetilde{\gamma}$ quadratically if $\lim_{k\rightarrow\infty}\frac{\left\Vert \gamma_{k+1}-\widetilde{\gamma}\right\Vert }{\left\Vert \gamma_{k}-\widetilde{\gamma}\right\Vert ^{2}}<\infty$.} As we will discuss in Section \ref{subsec:dQ_dY}, the condition approximately holds at the global minimum $\widehat{\theta}$ in large samples in the application of structural estimations.

\subsection{Computational costs of the SLC}

SLC algorithm is often computationally more attractive compared to the NFXP approach. Here, we clarify the point by comparing the computational costs of the SLC and the NFXP using the Newton's method as the inner-loop solution method and a gradient-based optimization method (e.g., BFGS) as the outer-loop algorithm. Note that the Newton's method (Newton-Kantorovich iteration) is the fundamental method to solve nonlinear equations, which is known to be locally quadratically convergent. 

In the case of the SLC, we need to compute the values of $\left(\nabla_{Y}G\left(Y_{k};\theta_{k}\right)\right)^{-1}\left(G\left(Y_{k};\theta_{k}\right)\right)$ and $\left(\nabla_{Y}G\left(Y_{k};\theta_{k}\right)\right)^{-1}\left(G\left(Y_{k};\theta_{k}\right),\nabla_{\theta}G\left(Y_{k};\theta_{k}\right)\right)$ only once in each iteration.\footnote{The computational cost of computing $\left(\nabla_{Y}G\left(Y_{k};\theta_{k}\right)\right)^{-1}\left(G\left(Y_{k};\theta_{k}\right),\nabla_{\theta}G\left(Y_{k};\theta_{k}\right)\right)$ is not largely different from the one for computing $\left(\nabla_{Y}G\left(Y_{k};\theta_{k}\right)\right)^{-1}G\left(Y_{k};\theta_{k}\right)$ for moderate $n_{\theta}$ when relying on the exact solution method for solving linear equations. Since the $O(n^{3})$ cost of matrix factorization (e.g., LU decomposition) is the dominant bottleneck in solving linear equations, where $n$ is the size of the matrix, adding additional right-hand side terms does not significantly increase the overall computational burden.} In contrast, in the case of the NFXP with the Newton's method,\footnote{Newton's method solves for the solution of $G(Y;\theta)=0$ by repeatedly applying $Y_{m+1}\leftarrow Y_{m}-\left(\nabla_{Y}G\left(Y_{m};\theta_{m}\right)\right)^{-1}\left(G\left(Y_{m};\theta_{m}\right)\right)$ until convergence. It attains quadratic local convergence in general model settings. Note that sometimes utilizing good fixed-point mappings and fixed-point mapping acceleration methods (e.g., Anderson acceleration methods) lead to superior performance than the Newton's method, because the computation of the Jacobian $\nabla_{Y}G\left(Y;\theta\right)$ in the the Newton's method is sometimes costly.} we need to compute the values of $\left(\nabla_{Y}G\left(Y;\theta\right)\right)^{-1}\left(G\left(Y;\theta\right)\right)$ many times to solve for $Y$ such that $G\left(Y;\theta\right)=0$ in each outer-loop iteration. In addition, we need to compute the values of $\left(\nabla_{Y}G\left(Y_{k};\theta_{k}\right)\right)^{-1}\left(\nabla_{\theta}G\left(Y_{k};\theta_{k}\right)\right)$ once to compute the gradient of the objective function $Q_{NFXP}\left(\theta\right)$. Hence, concerning the NFXP algorithm with the Newton's method for the inner-loop, the repeated evaluations of $\left(\nabla_{Y}G\left(Y;\theta\right)\right)^{-1}\left(G\left(Y;\theta\right)\right)$ can lead to larger computational cost than the SLC algorithm in each iteration.\footnote{I thank Victor Aguirregabiria for suggesting the insight.}

Furthermore, as demonstrated in the numerical experiments, the current study finds that the SLC often require fewer number of iterations than the NFXP with first-order gradient-based algorithms (e.g., BFGS, Interior Point method). It can also contribute to the computationally superior performance of the SLC.

Note that the SLC involves the optimization step $\arg\min_{\theta\in\mathbb{R}^{n_{\theta}}}\widetilde{Q}\left(\theta;\gamma_{k}\right)$ in each iteration, which does not appear in the NFXP algorithm. However, as long as the optimization step is computationally much less costly, the SLC attains computationally superior performance than the NFXP. Concerning the inner-loop of the NFXP, Newton's method is sometimes inferior to good fixed-point iteration with acceleration methods (e.g., Anderson acceleration), because the computations of $\nabla_{Y}G\left(Y;\theta\right)$ and the solution of a linear equation $\left(\nabla_{Y}G\left(Y;\theta\right)\right)^{-1}\left(G\left(Y;\theta\right)\right)$ are sometimes costly. Still, numerical experiments in Section \ref{sec:Numerical-experiments} suggest that the SLC can outperform the NFXP with fixed-point iteration-based inner-loop. In Appendix \ref{subsec:Comparison-of-computational-costs}, we discuss more about the comparison of the computational costs. Note that the Newton's iteration is known to be unstable without regularization when the initial value is far from the solution. However, the SLC is usually stable, as demonstrated by numerical experiments. Supplemental Appendix \ref{subsec:Relations-to-Newton} provides further theoretical discussions with additional numerical experiments.

\subsection{Jacobian-free implementation}

One of the advantages the SLC and EPL algorithms possess is that they can be implemented without explicitly computing the Jacobians of the equilibrium constraint $\nabla_{Y}G\left(Y;\theta\right)$.\footnote{As shown in Table \ref{tab:Comparison-of-algorithms}, it is not clear whether the Lagrangian-based algorithms (e.g. SQP) possess the property.} Generally, coding the Jacobian $\nabla_{Y}G\left(Y;\theta\right)$ is prone to mistakes especially when the function $G\left(Y;\theta\right)$ is complex. In addition, $\nabla_{Y}G\left(Y;\theta\right)$ is a $n_{Y}\times n_{Y}$-dimensional matrix, which implies the memory requirement for storing the matrix $\nabla_{Y}G\left(Y;\theta\right)$ increases quadratically, and the algorithm may not be applicable for high-dimensional problems when the the memory of the computers practitioners use is not necessarily sufficient. In this subsection, we discuss the Jacobian-free implementation of the SLC and EPL algorithms, based on the discussion in \citet{fukasawa2025jacobian} that proposed the Jacobian-free implementation of the EPL algorithm.

In the SLC and EPL algorithms, we need to solve a linear equation $\left(\nabla_{Y}G\left(Y;\theta\right)\right)x=b$ for $x\in\mathbb{\mathbb{R}}^{n_{Y}}$ to compute the value of $\left(\nabla_{Y}G\left(Y;\theta\right)\right)^{-1}b$ for a vector $b\in\mathbb{R}^{n_{Y}}$. Then, we can avoid the computation of $\nabla_{Y}G\left(Y;\theta\right)$ by utilizing the ideas of numerical derivatives and the Krylov-based algorithms for solving a linear equation. In the Krylov-based algorithms (e.g, GMRES), it is sufficient to compute the values of $\left(\nabla_{Y}G\left(Y;\theta\right)\right)v$ for any $v\in\mathbb{R}^{n_{\theta}}$ to solve a linear equation $\left(\nabla_{Y}G\left(Y;\theta\right)\right)x=b$. Concerning the Jacobian-vector product $\left(\nabla_{Y}G\left(Y;\theta\right)\right)v$, it can be approximated by $\left(\nabla_{Y}G\left(Y;\theta\right)\right)v\approx\frac{\left(\nabla_{Y}G\left(Y+\epsilon v;\theta\right)\right)-\left(\nabla_{Y}G\left(Y-\epsilon v;\theta\right)\right)}{2\epsilon}$ for a sufficiently small $\epsilon>0$ using the idea of numerical derivatives. Consequently, we can solve $\left(\nabla_{Y}G\left(Y;\theta\right)\right)x=b$ for $x\in\mathbb{R}^{n_{Y}}$ without explicitly computing the Jacobian $\nabla_{Y}G\left(Y;\theta\right)$. Note that the Krylov-based algorithms for solving linear equations, such as the GMRES algorithm, are available in many programming languages (e.g., MATLAB, Julia, Python), and the implementation is relatively straightforward as long as we can prepare the function $G\left(Y;\theta\right)$. For details, see also \citet{fukasawa2025jacobian}, proposing the Jacobian-free approach for the EPL.

\section{Sequential algorithm for structural estimations\label{sec:structural-estimation}}

In this section, we consider sequential algorithms for solving constrained optimization problems in the context of statistical problems, especially the structural estimations in economics. In structural estimation applications, $\theta$ denotes structural parameters such as utility parameters, and $Y$ denotes economic variables, such as value functions or choice-specific value functions in DDC estimations. Let $\theta^{*}$ be the true parameter value. 

To simplify discussions, we consider the setting where $Y$ is a deterministic variable, and it does not depend on the observed data. We also assume the equilibrium constraint $G\left(Y;\theta\right)=0$ does not directly depend on the observed data.\footnote{The setting applies to standard dynamic discrete choice models (single-agent/games). Strictly speaking, the setting does not directly apply to BLP models, because some variables such as unobserved product characteristics generally depend on the observed data, and they are stochastic. However, considering the setting where $Y$ is a stochastic variable further complicates the discussions. Hence, we focus our attention on the setting with deterministic $Y$. Note that \citet{aguirregabiria2026nested} recently demonstrated a proof of a consistency of a sequential estimation-based estimator in the context of the static BLP allowing for stochastic $Y$, although the ZJP may not hold.} Let $Y^{*}$ be the true value of $Y$ consistent with $\theta^{*}$, which satisfies $G\left(Y^{*};\theta^{*}\right)=0$. We assume the observed data is generated based on $\left(\theta^{*},Y^{*}\right)$, and the objective function $Q\left(\theta,Y\right)$ depends on the observed data $\left\{ w_{i};i=1,\cdots,N\right\} $.

In this section, we consider two types of statistical specifications widely applied in the literature of economics: Maximum Likelihood Estimation (MLE) and Generalized Method of Moments (GMM):\footnote{Although not explicitly discussed, Minimum Distance (MD) Estimation can be treated analogous to the GMM estimation.}

\subsubsection*{MLE}

\begin{eqnarray}
 & \min_{\theta,Y} & Q\left(\theta,Y\right)=-\frac{1}{N}\sum_{i=1}^{N}\ln f\left(w_{i}|\theta,Y\right)\label{eq:MLE_problem}\\
 & s.t. & G\left(Y;\theta\right)=0\nonumber 
\end{eqnarray}

We assume $w_{i}$ is generated from $f^{*}\left(w|\theta^{*},Y^{*}\right)=f^{*}\left(w\right)$. Let $Q^{*}\left(\theta,Y\right)\equiv E\left[Q\left(\theta,Y\right)\right]$.

\subsubsection*{GMM}

\begin{eqnarray}
 & \min_{\theta,Y} & Q(\theta,Y)=\left(\overline{m}\left(\theta,Y\right)\right)^{T}\widehat{W}\left(\overline{m}\left(\theta,Y\right)\right)\label{eq:GMM_problem}\\
 & s.t. & G(Y;\theta)=0\nonumber 
\end{eqnarray}
where $\overline{m}\left(\theta,Y\right)=\frac{1}{N}\sum_{i=1}^{N}m\left(w_{i};\theta,Y\right)$ and $m\left(w_{i};\theta,Y\right)$ depends on the data of sample $i$. Let $Q^{*}\left(\theta,Y\right)\equiv\left(E\left[m\left(w;\theta,Y\right)\right]\right)^{\prime}W\left(E\left[m\left(w;\theta,Y\right)\right]\right)$.

\medskip{}

Here, we impose the following assumption:

\begin{assumption}

(a). $\Theta$ and $\mathcal{Y}$ are compact and convex, and $\left(\theta^{*},Y^{*}\right)\in int\left(\Theta\times\mathcal{Y}\right)$.

(b). $\nabla_{Y}G\left(Y^{*};\theta^{*}\right)$ is non-singular.

(c). The observations $\left\{ w_{i}:i=1,\cdots,N\right\} $ are i.i.d.

(d). $Q^{*}\left(\theta,Y\right)$ is twice continuously differentiable, and $Q^{*}$ has a unique minimum in $\Theta\times\mathcal{Y}$ subject to $G\left(\theta,Y\right)=0$. The minimum occurs at $\left(\theta^{*},Y^{*}\right)$.

(e). $\nabla_{\gamma\gamma^{\prime}}Q\left(\gamma\right)$ and $\nabla_{\gamma\gamma^{\prime}}G\left(\gamma\right)$ are bounded on $\gamma\in\Theta\times\mathcal{Y}$.

(f). $\sqrt{N}\left(\widehat{\gamma}-\gamma^{*}\right)\rightarrow_{d}N\left(0,\Sigma\right)$. Let $\Sigma=\left(\begin{array}{cc}
\Sigma_{\theta} & \Sigma_{\theta Y}\\
\Sigma_{\theta Y} & \Sigma_{Y}
\end{array}\right)$, where $\Sigma_{\theta}$ represents the asymptotic variance concerning $\theta$.

For MLE problems, the following assumptions additionally hold:

(g-MLE): The observations $\left\{ w_{i}:i=1,\cdots,N\right\} $ are i.i.d. generated by a distribution $f^{*}\left(w\right)$.

(h-MLE): $\nabla_{\theta\theta^{\prime}}\ln f\left(w|\theta^{*},Y^{*}\right)$ is bounded on $\Theta$

(i-MLE): $Var\left(\nabla_{\theta}\ln f\left(w|\theta^{*},Y^{*}\right)\right)$ is finite.

For GMM problems, the following assumptions additionally hold:

(h-GMM): $\widehat{W}\rightarrow_{p}W$, where matrices $\widehat{W}$ and $W$ are both positive definite.

(i-GMM): $E\left[m\left(w;\theta,Y\right)\right]=0\Rightarrow\theta=\theta^{*},Y=Y^{*}$.

(j-GMM): $m\left(w;\theta,Y\right)$ is twice continuously differentiable concerning $\theta$ and $Y$.

(k-GMM): $Var\left(m\left(w;\theta,Y\right)\right)$ is finite.

\label{as:stat_ass}
\end{assumption}

\subsection{Values of $\nabla_{Y}Q\left(\widehat{\theta},\widehat{Y}\right)$ and local convergence speed\label{subsec:dQ_dY}}

As discussed in Section \ref{subsec:Local-convergence-speed}, the value of $\nabla_{Y}Q\left(\widehat{\theta},\widehat{Y}\right)$ largely affects the local convergence speed and the local convexity of the objective function. Hence, this subsection investigates the value of $\nabla_{Y}Q\left(\widehat{\theta},\widehat{Y}\right)$. The following discussions show that $\nabla_{Y}Q\left(\widehat{\theta},\widehat{Y}\right)\rightarrow_{p}0$, and more specifically $\nabla_{Y}Q\left(\widehat{\theta},\widehat{Y}\right)=O_{p}\left(N^{-\frac{1}{2}}\right)$, holds if the model is correctly specified and we apply the MLE or GMM.

\subsubsection*{Intuition}

In the large sample environment, $\nabla_{Y}Q^{*}\left(\theta^{*};Y^{*}\right)=0$ holds. In the MLE and GMM, we can choose $Q^{*}$ such that $Q^{*}\left(\theta,Y\right)\geq0\ \forall\left(\theta,Y\right)$ and $Q^{*}\left(\theta^{*},Y^{*}\right)=0$. Concerning the MLE, it is well known that the MLE is equivalent to minimizing the Kullback-Leibler Information Criterion (KLIC),\footnote{The log likelihood $\int\ln f\left(w|\theta,Y\right)f^{*}\left(w\right)dw$ satisfies:

\begin{eqnarray*}
-\int\ln f\left(w|\theta,Y\right)f^{*}\left(w\right)dw & = & \int\left[\ln\frac{f^{*}\left(w\right)}{f\left(w|\theta,Y\right)}f^{*}\left(w\right)\right]dw+\int\left[\left(\ln f^{*}\left(w\right)\right)f^{*}\left(w\right)\right]dw\\
 & = & KLIC\left(f^{*};f\left(\cdot;\theta,Y\right)\right)+\int\left[\left(\ln f^{*}\left(w\right)\right)f^{*}\left(w\right)\right]dw
\end{eqnarray*}
Here, we defined $KLIC\left(f^{*};f\left(\cdot;\theta,Y\right)\right)\equiv\int\left[\ln\frac{f^{*}\left(w|\theta^{*},Y^{*}\right)}{f\left(w|\theta,Y\right)}f^{*}\left(w|\theta^{*},Y^{*}\right)\right]dw$. By the property of the KLIC, $KLIC\left(f^{*};f\left(\cdot;\theta,Y\right)\right)\geq0$ holds for all $\left(\theta,Y\right)$, and the equality holds when $\left(\theta,Y\right)=\left(\theta^{*},Y^{*}\right)$. Then, because the second term in the right hand side does not depend on $(\theta,Y)$, the minimizer of the KLIC maximizes the likelihood, as long as $G\left(Y^{*};\theta^{*}\right)=0$ holds.} and the conditions hold if we alternatively choose $Q^{*}$ as the KLIC. Regarding the GMM, the conditions hold because $Q^{*}\left(\theta,Y\right)=\left(E\left[m\left(w;\theta^{*},Y^{*}\right)\right]\right)^{T}W\left(E\left[m\left(w;\theta^{*},Y^{*}\right)\right]\right)=0$, where $E\left[m\left(w;\theta^{*},Y^{*}\right)\right]=0$ and $W$ is a positive definite matrix. Under the conditions, we can easily show that $\nabla_{Y}Q^{*}\left(\theta^{*};Y^{*}\right)=0$ holds because of the first-order condition, as long as $G\left(\theta^{*},Y^{*}\right)=0$ holds.

In finite sample settings, the equality may not hold, but $\nabla_{Y}Q\left(\widehat{\theta},\widehat{Y}\right)$ would be close to 0 when the number of samples $N$ is large.

\subsubsection*{MLE\protect\footnote{\citet{aitchison1958maximum} showed that the Lagrange multiplier $\widehat{\lambda}$ associated with the constrained optimization problem (\ref{eq:MLE_problem}) satisfies $\widehat{\lambda}=O_{p}\left(N^{-\frac{1}{2}}\right)$ for the MLE. Since $\nabla_{Y}Q\left(\theta,Y\right)=\left(\widehat{\lambda}\right)^{T}\left(\nabla_{Y}G\left(Y;\theta\right)\right)$ holds by Proposition \ref{prop:FOC-KKT}(a) (Appendix \ref{subsec:Comparison-with-Lagrangian-based}), $\widehat{\lambda}=O_{p}\left(N^{-\frac{1}{2}}\right)$ implies $\nabla_{Y}Q\left(\theta,Y\right)=O_{p}\left(N^{-\frac{1}{2}}\right)$. Note that the original result of \citet{aitchison1958maximum} (Theorem 2) implies that the Lagrange multiplier is $O_{p}\left(N^{\frac{1}{2}}\right)$, because their formulation does not normalize the objective function by the sample size $N$.} }

By differentiating the both sides of $1=\int f\left(w|\gamma\right)dw$ with regard to $Y$ at $\gamma^{*}\equiv\left(\theta^{*},Y^{*}\right)$, we have $0=\int\left(\nabla_{Y}f\left(w|\gamma^{*}\right)\right)dw=\int\left(\nabla_{Y}\ln f\left(w|\gamma^{*}\right)\right)f\left(w|\gamma^{*}\right)dw$. Then,

{\footnotesize
\begin{eqnarray}
\nabla_{Y}Q\left(\widehat{\theta},\widehat{Y}\right) & = & -\frac{1}{N}\sum_{i=1}^{N}\nabla_{Y}\ln f\left(w_{i}|\widehat{\gamma}\right)\nonumber \\
 & = & -\left[\frac{1}{N}\sum_{i=1}^{N}\left(\nabla_{Y}\ln f\left(w_{i}|\widehat{\gamma}\right)-\nabla_{Y}\ln f\left(w_{i}|\gamma^{*}\right)\right)\right]+\label{eq:dQ_MLE}\\
 &  & -\left[\frac{1}{N}\sum_{i=1}^{N}\nabla_{Y}\left(\ln f\left(w_{i}|\gamma^{*}\right)\right)-\int\left(\nabla_{Y}\ln f\left(w|\gamma^{*}\right)\right)f^{*}\left(w\right)dw\right]\nonumber \\
 &  & -\left[\int\left(\nabla_{Y}\ln f\left(w|\gamma^{*}\right)\right)\left(f^{*}\left(w\right)-f\left(w|\gamma^{*}\right)\right)dw\right]\nonumber 
\end{eqnarray}
}{\footnotesize\par}

Concerning the first term, $\nabla_{Y}\ln f\left(w_{i}|\widehat{\gamma}\right)-\nabla_{Y}\ln f\left(w_{i}|\gamma^{*}\right)=\left(\nabla_{Y\gamma^{\prime}}\ln f\left(w_{i}|\overline{\gamma}\right)\right)\left(\gamma^{*}-\gamma^{*}\right)$ holds by the mean value theorem, where there exists $c\in[0,1]$ such that $\overline{\gamma}=c\widehat{\gamma}+(1-c)\gamma^{*}$. Then, by $\gamma^{*}-\gamma^{*}=O_{p}\left(N^{-\frac{1}{2}}\right)$ and the boundedness of $\nabla_{Y\gamma^{\prime}}\ln f\left(w_{i}|\overline{\gamma}\right)$, the first term is $O_{p}\left(N^{-\frac{1}{2}}\right)$. The second term is also $O_{p}\left(N^{-\frac{1}{2}}\right)$ by the central limit theorem and the assumption that $Var\left(\nabla_{Y}\left(\ln f\left(w_{i}|\gamma^{*}\right)\right)\right)$ is finite. The third term is 0 if the distribution $f$ is correctly specified. Consequently, $\nabla_{Y}Q\left(\widehat{\theta},\widehat{Y}\right)=O_{p}\left(N^{-\frac{1}{2}}\right)$ holds if the distribution $f$ is correctly specified.

Note that $E\left[\nabla_{Y}Q\left(\theta^{*},Y^{*}\right)\right]=0$ also holds. This is because of $E\left[\frac{1}{N}\sum_{i=1}^{N}\left(\nabla_{Y}\ln f\left(w_{i}|\gamma^{*}\right)-\nabla_{Y}\ln f\left(w_{i}|\gamma^{*}\right)\right)\right]=0$ and {\footnotesize$E\left[\frac{1}{N}\sum_{i=1}^{N}\nabla_{Y}\left(\ln f\left(w_{i}|\gamma^{*}\right)\right)-\int\left(\nabla_{Y}\ln f\left(w|\gamma^{*}\right)\right)f^{*}\left(w\right)dw\right]=0$}. We can utilize the property in the discussions related to Neyman orthogonality (Section \ref{subsec:Relationship-with-Neyman}).

\subsubsection*{GMM}

\begin{eqnarray*}
\nabla_{Y}Q\left(\widehat{\theta},\widehat{Y}\right) & = & \nabla_{Y}\left[\left(\overline{m}\left(\widehat{\theta},\widehat{Y}\right)\right)^{\prime}\widehat{W}\left(\overline{m}\left(\widehat{\theta},\widehat{\gamma}\right)\right)\right]\\
 & = & 2\left(\nabla_{Y}\overline{m}\left(\widehat{\theta},\widehat{Y}\right)\right)^{\prime}\widehat{W}\left(\overline{m}\left(\widehat{\theta},\widehat{\gamma}\right)\right)
\end{eqnarray*}
holds. Here, $\overline{m}\left(\widehat{\theta},\widehat{Y}\right)=\frac{1}{N}\sum_{i=1}^{N}m\left(w_{i};\widehat{\theta},\widehat{Y}\right)=O_{p}\left(N^{-\frac{1}{2}}\right)$ holds if the model is correctly specified.\footnote{More precisely, $\frac{1}{N}\sum_{i=1}^{N}m\left(w_{i};\widehat{\gamma}\right)=\frac{1}{N}\sum_{i=1}^{N}m\left(w_{i};\gamma^{*}\right)+\frac{1}{N}\sum_{i=1}^{N}\left(\nabla_{\gamma}m\left(w_{i};\exists\overline{\gamma}\right)\right)\left(\widehat{\gamma}-\gamma^{*}\right)$ holds by the mean value theorem. Here, $\frac{1}{N}\sum_{i=1}^{N}m\left(w_{i};\gamma^{*}\right)=O_{p}\left(N^{-\frac{1}{2}}\right)$ holds under the assumption of finite $Var\left(m\left(w;\gamma^{*}\right)\right)$ and the central limit theorem. In addition, $\widehat{\gamma}-\gamma^{*}=O_{p}\left(N^{-\frac{1}{2}}\right)$ and the boundedness of $\nabla_{\gamma}m\left(w_{i};\overline{\gamma}\right)$ imply $\frac{1}{N}\sum_{i=1}^{N}\left(\nabla_{\gamma}m\left(w_{i};\overline{\gamma}\right)\right)\left(\widehat{\gamma}-\gamma^{*}\right)=O_{p}\left(N^{-\frac{1}{2}}\right)$.} Hence, by the boundedness of $\nabla_{Y}\overline{m}\left(\theta,Y\right)$, $\nabla_{Y}Q\left(\widehat{\theta},\widehat{Y}\right)=O_{p}\left(N^{-\frac{1}{2}}\right)$ holds.\footnote{In contrast to the case of the MLE, $E\left[\nabla_{Y}Q\left(Y^{*},\theta^{*}\right)\right]=0$ may not hold, because $E\left[\frac{1}{N}\sum_{i}m\left(w_{i};\theta^{*},\gamma^{*}\right)\right]=0$ may not necessarily imply $E\left[\nabla_{Y}Q\left(\theta^{*},Y^{*}\right)\right]=2\left(\nabla_{Y}\frac{1}{N}\sum_{i}m\left(w_{i};\theta^{*},Y^{*}\right)\right)^{\prime}\widehat{W}\left(\frac{1}{N}\sum_{i}m\left(w_{i};\theta^{*},\gamma^{*}\right)\right)$.} Note that, in the exactly identified case ($n_{m}=n_{\theta}$), $\overline{m}\left(\widehat{\theta},\widehat{Y}\right)=0$ holds. This suggests $\nabla_{Y}Q\left(\widehat{Y},\widehat{\theta}\right)=0$, and the local convergence speed is quadratic, regardless of the number of samples, in the setting.

\medskip{}

\medskip{}

By $\nabla_{Y}Q\left(\widehat{Y},\widehat{\theta}\right)=O_{p}\left(N^{-\frac{1}{2}}\right)$ and Proposition \ref{prop:local_conv}, the following statement holds, which is the counterpart of the ones in \citet{dearing2024efficient}.
\begin{prop}
\label{prop: asy_property}

Under Assumptions \ref{as:existence_constrained_opt_sol}, \ref{as:conti_diffble}, \ref{as:mapping_cdns}, \ref{as:nonsingular_Jacobian}, and \ref{as:stat_ass}, the followings hold:

(1). (Asymptotic convergence rate) For all $k\geq1$, $\gamma_{k}-\widehat{\gamma}=O_{p}\left(N^{-1/2}\left\Vert \gamma_{k-1}-\widehat{\gamma}\right\Vert +\left\Vert \gamma_{k-1}-\widehat{\gamma}\right\Vert ^{2}\right)$.

(2). (Large sample convergence) There exists a neighborhood of $\gamma^{*}$, $\mathcal{B}$, such that $\lim_{k\rightarrow\infty}\gamma_{k}=\widehat{\gamma}$ almost surely for any $\gamma_{0}$. In other words, $Pr\left[\lim_{k\rightarrow\infty}\gamma_{k}=\widehat{\gamma}|\gamma_{0}\in\mathcal{B}\right]=1$.

(3). (Contraction property) W.p.a. 1 as $N\rightarrow\infty$, for any $\epsilon>0$ there exists some neighborhood of $\widehat{\gamma}$, $\mathcal{B}$, such that the iterations define a contraction mapping on $\mathcal{B}$ with Lipschitz constant $L<\epsilon$.
\end{prop}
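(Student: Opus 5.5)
The plan is to specialize Proposition \ref{prop:local_conv} to $\widetilde{\gamma}=\widehat{\gamma}$ and insert the bound $\nabla_{Y}Q\left(\widehat{\theta},\widehat{Y}\right)=O_{p}\left(N^{-1/2}\right)$ established in Section \ref{subsec:dQ_dY} for both MLE and GMM.

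\textbf{Step 1: fixed point and invertible Hessian.} First, check that $\widehat{\gamma}\in\Gamma_{seq}$ with probability approaching one, using Proposition \ref{prop:fixed-points-KKT-seq}(c). Lemma \ref{lem:difference-Hessian-NFXP} gives
\[
\nabla_{\theta\theta^{\prime}}\widetilde{Q}\left(\widehat{\theta};\widehat{\gamma}\right)=\nabla_{\theta\theta^{\prime}}Q_{NFXP}\left(\widehat{\theta}\right)+O_{p}\left(N^{-1/2}\right),
\]
since the second derivatives of $\Upsilon$ and $\mathcal{Y}$ are bounded near $\gamma^{*}$ (Assumptions \ref{as:conti_diffble}, \ref{as:mapping_cdns}, \ref{as:stat_ass}(e)). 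By consistency, $\nabla_{\theta\theta^{\prime}}Q_{NFXP}(\widehat{\theta})$ converges to the population Hessian at $\theta^{*}$, which is positive definite by identification (Assumption \ref{as:stat_ass}(d)). Hence the Hessian above is positive definite w.p.a.1, and $A(\widehat{\gamma})=O_{p}(1)$. The implicit function theorem then makes $\widetilde{\theta}(\gamma)$, and hence $H$, $C^{2}$ on a fixed neighborhood of $\gamma^{*}$. The Taylor remainder in (\ref{eq:local_conv_Taylor}) is therefore uniformly $O(\|\gamma-\widehat{\gamma}\|^{2})$ there.

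\textbf{Step 2: part (1).} Substitute into Proposition \ref{prop:local_conv}. The linear term is
\[
A(\widehat{\gamma})\,\nabla_{Y}Q(\widehat{\theta},\widehat{Y})\,\nabla_{\theta\gamma^{\prime}}\Upsilon(\widehat{\theta},\widehat{\gamma})\,(\gamma_{k-1}-\widehat{\gamma}).
\]
Its three factors are $O_{p}(1)$, $O_{p}(N^{-1/2})$ and $O(1)$ respectively, which gives $\gamma_{k}-\widehat{\gamma}=O_{p}\left(N^{-1/2}\|\gamma_{k-1}-\widehat{\gamma}\|+\|\gamma_{k-1}-\widehat{\gamma}\|^{2}\right)$.

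\textbf{Step 3: part (3).} By the chain rule and Assumption \ref{as:mapping_cdns}(b), the derivation leading to (\ref{eq:d2Q_d2theta_gamma_at_sol}) shows $\|\nabla_{\gamma}H(\widehat{\gamma})\|=O_{p}(N^{-1/2})$. Because $\nabla_{\gamma}H$ is continuous with a locally bounded modulus, $\sup_{\gamma\in\mathcal{B}}\|\nabla_{\gamma}H(\gamma)\|\le\|\nabla_{\gamma}H(\widehat{\gamma})\|+C\,\mathrm{rad}(\mathcal{B})$. Fix $\epsilon>0$ and choose the radius so that $C\,\mathrm{rad}<\epsilon/2$. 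W.p.a.1 the first term is below $\epsilon/2$. By the mean value inequality on the convex ball, $H$ is then Lipschitz with constant $L<\epsilon$. Since $H(\widehat{\gamma})=\widehat{\gamma}$, $H$ also maps the ball into itself, so it is a contraction on $\mathcal{B}$.

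\textbf{Step 4: part (2).} Take a fixed ball $\mathcal{B}$ around $\gamma^{*}$ on which, w.p.a.1, $\widehat{\gamma}$ lies well inside, the Hessian stays positive definite, and $\sup_{\mathcal{B}}\|\nabla H\|<1/2$. This is possible because $\widehat{\gamma}\to\gamma^{*}$ and the linear coefficient vanishes. Banach's fixed point theorem then gives $\gamma_{k}\to\widehat{\gamma}$ for every $\gamma_{0}\in\mathcal{B}$. I read the ``almost surely'' claim as holding on this event, whose probability tends to one.

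\textbf{Main obstacle.} The hardest part is the uniformity needed in Steps 1, 3 and 4. The neighborhood must be nonrandom, while the constants in the $O(\cdot)$ remainder and the invertibility of $\nabla_{\theta\theta^{\prime}}\widetilde{Q}$ must hold uniformly on it w.p.a.1. I would get this from a uniform law of large numbers applied to $\nabla^{2}Q$, combined with the boundedness in Assumption \ref{as:stat_ass}(e).
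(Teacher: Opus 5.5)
Your argument for (1) is the paper's own. The paper justifies this proposition only by combining Proposition \ref{prop:local_conv} at $\widetilde{\gamma}=\widehat{\gamma}$ with $\nabla_{Y}Q(\widehat{\theta},\widehat{Y})=O_{p}(N^{-1/2})$ from Section \ref{subsec:dQ_dY}, and it gives no argument for (2) and (3), so your mean-value/Banach steps fill a real hole. Fixing $\epsilon$ before taking the probability limit in (3) is also the right reading: with the quantifiers as literally written, (3) would force $\nabla_{\gamma}H(\widehat{\gamma})=0$, which generically fails unless $\nabla_{Y}Q(\widehat{\theta},\widehat{Y})=0$.

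However, Step 1, on which everything rests, contains three steps that do not go through as stated.

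(i) Uniqueness of the constrained minimizer (Assumption \ref{as:stat_ass}(d)) does not make the population NFXP Hessian at $\theta^{*}$ positive definite; a unique minimizer can have a singular Hessian, e.g.\ $\theta^{4}$ at $0$. This nonsingularity must be assumed separately, and the paper makes the same leap right after Lemma \ref{lem:difference-Hessian-NFXP}.

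(ii) With $Q$ only $C^{2}$ (Assumption \ref{as:conti_diffble}), the implicit function theorem applied to $\nabla_{\theta}\widetilde{Q}(\theta;\gamma)=0$ gives $\widetilde{\theta}\in C^{1}$, not $C^{2}$. The uniform $O(\Vert\gamma_{k-1}-\widehat{\gamma}\Vert^{2})$ remainder in (1), and your $C\,\mathrm{rad}(\mathcal{B})$ bound in Step 3, need something like $Q\in C^{3}$ with third derivatives bounded w.p.a.1; Proposition \ref{prop:local_conv} silently needs the same. For (2) and (3), a uniform modulus of continuity of $\nabla_{\gamma}H$, which your ULLN supplies, is enough.

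(iii) The implicit function theorem gives only a local branch of critical points. The argument behind Proposition \ref{prop:fixed-points-KKT-seq}(c) (zero gradient plus positive definite Hessian) only makes $\widehat{\theta}$ a strict \emph{local} minimizer of $\widetilde{Q}(\cdot;\widehat{\gamma})$, whereas the algorithm uses the global $\arg\min$. What is missing is a uniform identification step. Uniform convergence of $\widetilde{Q}$ to $\widetilde{Q^{*}}$ on $\Theta\times\mathcal{B}$, together with $\theta^{*}$ being the unique, hence well-separated, minimizer of $\widetilde{Q^{*}}(\cdot;\gamma^{*})$ (the ingredient the paper uses for Proposition \ref{prop: asy_property_consistent_initial_values}(1)), implies the following. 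W.p.a.1, for every $\gamma$ in a small fixed ball, all global minimizers of $\widetilde{Q}(\cdot;\gamma)$ lie in a fixed convex neighborhood of $\theta^{*}$ on which $\widetilde{Q}(\cdot;\gamma)$ is strictly convex. The minimizer is then unique and is the only critical point there, so the implicit function theorem applies to it. Only then are $H(\widehat{\gamma})=\widehat{\gamma}$ and the smoothness of $H$ on $\mathcal{B}$ justified.

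Separately, Step 4 proves (2) only with probability approaching one, which is strictly weaker than the almost-sure claim. To get the stated version, run the same contraction argument on a probability-one event with almost-sure inputs. These are strong consistency of $\widehat{\gamma}$ and almost-sure uniform convergence of $Q$ and its first two derivatives (a strong ULLN under i.i.d.\ sampling, compactness and Assumption \ref{as:stat_ass}(e)). They give that, almost surely, $H$ is a contraction on $\mathcal{B}$ for all $N$ beyond a path-dependent threshold. Only $\nabla_{Y}Q(\widehat{\theta},\widehat{Y})\to0$ almost surely is needed there, not the $N^{-1/2}$ rate. For GMM this also requires $\widehat{W}\to W$ almost surely, which Assumption \ref{as:stat_ass}(h-GMM) does not give. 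So under the paper's stated assumptions, your weaker reading may be all that can be proved in the GMM case.
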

$\gamma_{k}-\widehat{\gamma}=O_{p}\left(N^{-1/2}\left\Vert \gamma_{k-1}-\widehat{\gamma}\right\Vert +\left\Vert \gamma_{k-1}-\widehat{\gamma}\right\Vert ^{2}\right)$ implies that $N^{-1/2}\left\Vert \gamma_{k-1}-\widehat{\gamma}\right\Vert $ is mostly negligible when the number of samples $N$ is large, and $\gamma_{k}-\widehat{\gamma}\approx O_{p}\left(\left\Vert \gamma_{k-1}-\widehat{\gamma}\right\Vert ^{2}\right)$. It implies that the local convergence speed of the algorithm is nearly quadratic in large sample.

Note that the statement does not impose the assumption that the initial value $\gamma_{0}$ is statistically consistent, in contrast to the previous studies (\citealp{kasahara2008pseudo} for the NPL (single-agent DDC) using MLE; \citealp{dearing2024efficient} for the EPL using MLE). The new statement is meaningful, because initial consistent estimates of $\gamma$ is not straightforward to obtain for general models.

Besides, Proposition \ref{prop:local_conv} in Section \ref{subsec:Local-convergence-speed} indicates that $\widetilde{\theta}\in\Gamma_{seq}$, which are local minima of the solution, satisfies $\gamma_{k}-\widetilde{\gamma}=A\left(\widetilde{\gamma}\right)\left(\nabla_{Y}Q\left(\widetilde{\theta},\widetilde{Y}\right)\right)\left(\nabla_{\theta\gamma^{\prime}}\Upsilon\left(\widetilde{\theta},\widetilde{Y}\right)\right)\left(\gamma_{k-1}-\widetilde{\gamma}\right)+O\left(\left\Vert \gamma_{k-1}-\widetilde{\gamma}\right\Vert ^{2}\right)$. Nevertheless, the discussion above implies that $\nabla_{Y}Q\left(\widetilde{Y},\widetilde{\theta}\right)\approx0$ may not hold even in large samples if $\left(\widetilde{Y},\widetilde{\theta}\right)\neq\left(\widehat{Y},\widehat{\theta}\right)$. Consequently, the modulus of the sequential algorithm iterations near the local optima other than the global optima may not be close to 0, and less likely to converge to the points, especially when the modulus is larger than 1. This is an unique feature of the sequential algorithms with ZJP discussed in the current paper, and the SQP iteration does not possess the property.\footnote{The SQP obtains the solution of the constrained optimization problem by basically applying Newton's method to the first-order conditions of the constrained optimization problem. Consequently, the SQP iteration attains quadratic local convergence even at local optima. See Chapter 18 of \citet{nocedal2006numerical} for details of the SQP algorithm.} Although not discussed in the previous studies applying sequential algorithms (NPL for single-agent DDC, EPL for dynamic discrete games), the feature can be thought of as one advantage of the sequential algorithms over the SQP.

\subsection{Asymptotic property under consistent initial values}

In some models (e.g., DDC models), previous studies have developed methods to estimate initial $\gamma$ consistently. In cases where initial consistent estimates of $\gamma_{0}$ is available, we can derive stronger results. The followings are the statements, which corresponds to Theorem 1 of \citet{dearing2024efficient} in the context of the EPL for MLE:
\begin{prop}
\label{prop: asy_property_consistent_initial_values}

Suppose that $\gamma_{0}\equiv\left(\theta_{0},Y_{0}\right)$ is a strongly consistent initial estimate. In addition, suppose that Assumptions \ref{as:existence_constrained_opt_sol}, \ref{as:conti_diffble}, \ref{as:mapping_cdns}, \ref{as:nonsingular_Jacobian}, and \ref{as:stat_ass} hold. Then, the followings hold:

(1) (Consistency) $\gamma_{k}\equiv\left(\theta_{k},Y_{k}\right)$ is a strongly consistent estimator of $\left(\theta^{*},Y^{*}\right)$ .

(2) (Efficiency) $\sqrt{N}\left(\theta_{k}-\theta^{*}\right)\rightarrow_{d}N\left(0,\Sigma_{\theta}\right)$.

(3). If $\gamma_{0}-\gamma^{*}=O_{p}\left(N^{-b}\right)$ for $b\in(1/4,1/2]$, $\left\Vert \gamma_{k}-\widehat{\gamma}\right\Vert =O_{p}\left(N^{-(k-1)/2-2b}\right)$ holds. If $b=\frac{1}{2}$, $\left\Vert \gamma_{k}-\widehat{\gamma}\right\Vert =O_{p}\left(N^{-(k+1)/2}\right)$ holds.
\end{prop}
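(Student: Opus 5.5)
The plan is to build everything on the one-step error recursion of Proposition \ref{prop: asy_property}(1), together with the standard asymptotics of the constrained estimator $\widehat{\gamma}$ from Assumption \ref{as:stat_ass}(f). The main identity is
\[
\gamma_{k}-\widehat{\gamma}=A\left(\widehat{\gamma}\right)\left(\nabla_{Y}Q\left(\widehat{\theta},\widehat{Y}\right)\right)\left(\nabla_{\theta\gamma^{\prime}}\Upsilon\left(\widehat{\theta},\widehat{\gamma}\right)\right)\left(\gamma_{k-1}-\widehat{\gamma}\right)+O\left(\left\Vert \gamma_{k-1}-\widehat{\gamma}\right\Vert ^{2}\right).
\]
It comes from Proposition \ref{prop:local_conv} evaluated at $\widetilde{\gamma}=\widehat{\gamma}$, which is legitimate by Proposition \ref{prop:fixed-points-KKT-seq}(c). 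We also use $\nabla_{Y}Q(\widehat{\theta},\widehat{Y})=O_{p}(N^{-1/2})$ from Section \ref{subsec:dQ_dY}. We first make sure that $A(\widehat{\gamma})$ is $O_{p}(1)$: Lemma \ref{lem:difference-Hessian-NFXP} shows that $\nabla_{\theta\theta^{\prime}}\widetilde{Q}(\widehat{\theta};\widehat{\gamma})$ differs from the positive definite NFXP Hessian by a term that is $O_{p}(N^{-1/2})$. Its inverse therefore exists with probability approaching one and is bounded. The remainder constant is uniform on a fixed neighborhood of $\gamma^{*}$ by Assumption \ref{as:conti_diffble} and \ref{as:stat_ass}(e).

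\textbf{Step 1 (consistency, by induction on $k$).} Suppose $\gamma_{k-1}\to\gamma^{*}$ a.s. The MLE/GMM estimator $\widehat{\gamma}$ is strongly consistent, so $\gamma_{k-1}-\widehat{\gamma}\to0$ a.s., and eventually $\gamma_{k-1}$ lies in the neighborhood where the Taylor expansion applies. Because the map $H$ is continuous and $H(\widehat{\gamma})=\widehat{\gamma}$, we get $\gamma_{k}=H(\gamma_{k-1})\to\gamma^{*}$ a.s.

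A cleaner alternative is to argue directly. As $(\theta_{k-1},Y_{k-1})\to(\theta^{*},Y^{*})$, the sample surrogate $\widetilde{Q}(\cdot;\gamma_{k-1})$ converges uniformly to $Q^{*}(\cdot,\Upsilon(\cdot;\gamma^{*}))$. This limit is uniquely minimized at $\theta^{*}$ by Assumption \ref{as:stat_ass}(d) and Assumption \ref{as:mapping_cdns}(a). A standard extremum-estimator consistency argument then gives $\theta_{k}\to\theta^{*}$, and $Y_{k}=\Upsilon(\theta_{k};\gamma_{k-1})\to Y^{*}$ follows by continuity.

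\textbf{Step 2 (rates, part (3)).} Write $e_{k}\equiv\left\Vert \gamma_{k}-\widehat{\gamma}\right\Vert $. Since $\widehat{\gamma}-\gamma^{*}=O_{p}(N^{-1/2})$ and $b\le1/2$, we have $e_{0}=O_{p}(N^{-b})$. The recursion gives
\[
e_{1}=O_{p}\left(N^{-1/2-b}+N^{-2b}\right)=O_{p}\left(N^{-2b}\right),
\]
because $2b\le1/2+b$. For $k\ge2$ we induct with the hypothesis $e_{k-1}=O_{p}(N^{-(k-2)/2-2b})$. Then $N^{-1/2}e_{k-1}=O_{p}(N^{-(k-1)/2-2b})$, and the quadratic term is $e_{k-1}^{2}=O_{p}(N^{-(k-2)-4b})$. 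The quadratic term is dominated by the linear one exactly when $(k-2)+4b\ge(k-1)/2+2b$, that is, when $k/2+2b\ge3/2$. At $k=2$ this requires $b\ge1/4$, which is guaranteed by $b>1/4$, and the inequality then holds for all larger $k$. This yields $e_{k}=O_{p}(N^{-(k-1)/2-2b})$. Setting $b=1/2$ gives $O_{p}(N^{-(k+1)/2})$.

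\textbf{Step 3 (efficiency, part (2)).} Decompose
\[
\sqrt{N}\left(\theta_{k}-\theta^{*}\right)=\sqrt{N}\left(\widehat{\theta}-\theta^{*}\right)+\sqrt{N}\left(\theta_{k}-\widehat{\theta}\right).
\]
The first term converges to $N(0,\Sigma_{\theta})$ by Assumption \ref{as:stat_ass}(f). For $b>1/4$, Step 2 gives $\sqrt{N}e_{k}=O_{p}(N^{1/2-(k-1)/2-2b})$, which is $o_{p}(1)$ for every $k\ge1$. Slutsky's lemma then finishes the argument.

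\textbf{Main obstacle.} Part (2) as stated assumes only strong consistency of $\gamma_{0}$, with no rate. In that case I would first show $e_{1}=o_{p}(N^{-1/2})$ directly. The recursion gives $e_{1}=O_{p}(N^{-1/2})o_{p}(1)+O_{p}(e_{0}^{2})$, and the term $e_{0}^{2}$ need not be $o_{p}(N^{-1/2})$. So with no rate on $\gamma_{0}$ I expect to need a more careful step, and I would try the Neyman-orthogonality route. The first-order condition $\nabla_{\theta}\widetilde{Q}(\theta_{1};\gamma_{0})=0$ is linearized around $(\widehat{\theta},\widehat{\gamma})$. By the ZJP (Assumption \ref{as:mapping_cdns}(b)) and equation (\ref{eq:d2Q_d2theta_gamma_at_sol}), its cross-derivative in $\gamma$ is $O_{p}(N^{-1/2})$, so the dependence on $\gamma_{0}$ enters only through $o_{p}(1)\cdot O_{p}(N^{-1/2})$ plus second-order terms. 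Failing that, I would add the mild rate condition $b>1/4$ from part (3), which Step 3 already covers. Controlling these second-order terms uniformly is the delicate point of the proof.
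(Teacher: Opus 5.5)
Your treatment of parts (1) and (3) is sound and essentially follows the paper. For (1), your ``cleaner alternative'' is exactly the paper's argument: almost-sure uniform convergence of $\widetilde{Q}(\cdot;\gamma_{k-1})$ to $Q^{*}(\cdot,\Upsilon(\cdot;\gamma^{*}))$, extremum-estimator consistency for $\theta_{k}$, then continuity of $\Upsilon$ for $Y_{k}$. Use it rather than your first version. There $H$ changes with $N$, and the recursion of Proposition \ref{prop: asy_property}(1) is an $O_{p}$ statement, so that version gives consistency in probability, not almost surely.

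You and the paper both take for granted that $\theta^{*}$ uniquely minimizes $Q^{*}(\theta,\Upsilon(\theta;\gamma^{*}))$. Assumption \ref{as:stat_ass}(d) gives uniqueness only on the constraint set, and the path $\theta\mapsto(\theta,\Upsilon(\theta;\gamma^{*}))$ leaves that set for $\theta\neq\theta^{*}$. An identification condition off the constraint is therefore needed; (i-GMM) supplies it in the GMM case.

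For (3) your induction is the paper's. You are more careful than the paper in passing from $\gamma_{0}-\gamma^{*}=O_{p}(N^{-b})$ to $\Vert\gamma_{0}-\widehat{\gamma}\Vert=O_{p}(N^{-b})$ via $b\le 1/2$. One correction: a uniform $O(\Vert\cdot\Vert^{2})$ remainder for $H$ needs bounded third derivatives of $Q$, because $\nabla_{\gamma}H$ already contains $\nabla_{\theta\gamma^{\prime}}\widetilde{Q}$. Assumptions \ref{as:conti_diffble} and \ref{as:stat_ass}(e) do not give that, so simply cite Proposition \ref{prop: asy_property}(1).

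For (2) your route differs from the paper's. You get efficiency as a corollary of (3) plus Slutsky, which is short and correct but uses $b>1/4$. The paper instead expands $\nabla_{\theta}\widetilde{Q}(\theta_{k};\gamma_{k-1})=0$ around $(\theta^{*};\gamma^{*})$. It identifies the score with $\nabla_{\theta}Q_{NFXP}(\theta^{*})$ via Assumption \ref{as:mapping_cdns}(c), and the Hessian with the NFXP Hessian via Lemma \ref{lem:difference-Hessian-NFXP}. It then removes the cross term using (\ref{eq:d2Q_d2theta_gamma_at_sol}) and $\nabla_{Y}Q(\widehat{\theta},\widehat{Y})=O_{p}(N^{-1/2})$. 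This is exactly the orthogonality route you sketch.

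The obstacle you flag is real, and the paper does not overcome it. It infers $\sqrt{N}\,\nabla_{\theta\gamma^{\prime}}\widetilde{Q}(\overline{\theta};\overline{\gamma})=O_{p}(1)$ from $\overline{\gamma}-\widehat{\gamma}\rightarrow_{p}0$, which does not follow. The zero-Jacobian property holds only on the constraint set, so at the mean-value point the cross derivative is $O_{p}(N^{-1/2}+\Vert\gamma_{k-1}-\gamma^{*}\Vert+\Vert\theta_{k}-\theta^{*}\Vert)$. A term of order $\sqrt{N}\Vert\gamma_{k-1}-\gamma^{*}\Vert^{2}$ therefore survives.

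That term cannot be removed in general. Take $w_{i}\sim N(e^{\theta^{*}},1)$ with $Y$ the mean, $G(Y;\theta)=Y-e^{\theta}$, and $Q=\frac{1}{2}(Y-\bar{w})^{2}$ (the normalized negative log-likelihood up to a constant). The SLC step is then one Newton step for $e^{\theta}=\bar{w}$, so $\theta_{1}-\widehat{\theta}=\frac{1}{2}(\theta_{0}-\widehat{\theta})^{2}(1+o(1))$. The strongly consistent start $\theta_{0}=\widehat{\theta}+N^{-1/5}$ then gives $\sqrt{N}(\theta_{1}-\theta^{*})\rightarrow_{p}\infty$.

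So do not pursue the rate-free case. Statement (2) needs a rate such as $\Vert\gamma_{0}-\gamma^{*}\Vert=o_{p}(N^{-1/4})$ for $k=1$, and your proof under $b>1/4$ is the correct form of the result.
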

Note that we do not assume the $\sqrt{N}$-consistency of the initial values $\gamma_{0}$ to derive $\sqrt{N}\left(\theta_{k}-\theta^{*}\right)\rightarrow_{d}N\left(0,\Sigma_{\theta}\right)$, unlike \citet{dearing2024efficient}. Hence, the efficiency result $\sqrt{N}\left(\theta_{k}-\theta^{*}\right)\rightarrow_{d}N\left(0,\Sigma_{\theta}\right)$ for sequential algorithms applies even when kernel-based estimator is used to consistently estimate $\gamma_{0}$.\footnote{\citet{kasahara2008pseudo} of Proposition 2 showed that $\sqrt{N}$-consistency assumption imposed in \citet{aguirregabiria2002swapping} is not necessary to derive the efficiency of the NPL estimator in the single-agent DDC model. The current study generalizes the result to models beyond single-agent DDC models.}

\subsection{Relationship with Neyman orthogonality\label{subsec:Relationship-with-Neyman}}

The sequential algorithms with ZJP are closely related to Neyman orthogonality, which is recognized in recent studies (e.g., \citealp{chernozhukov2018double}) as important properties for efficient estimation of parameters under the existence of nuisance parameters . The current section briefly clarifies the point.

First, the derivative of the objective function $\widetilde{Q}\left(\theta;\gamma\right)$ used in the sequential algorithm is $\nabla_{\theta}\widetilde{Q}\left(\theta;\gamma\right)$. By (\ref{eq:d2Q_d2theta_gamma}) and $G\left(Y;\theta\right)=0\Rightarrow\nabla_{\gamma}\Upsilon\left(\theta^{*};\gamma^{*}\right)=0$ under Assumption \ref{as:mapping_cdns}, $\left(\theta,Y\right)$ with $G\left(Y;\theta\right)=0$ satisfies $\nabla_{\gamma}\left[\nabla_{\theta}\widetilde{Q}\left(\theta,\gamma\right)\right]=\left(\nabla_{Y}Q\left(\theta,Y\right)\right)\left(\nabla_{\theta\gamma^{\prime}}\Upsilon\left(\theta,\gamma\right)\right)$. Hence, $\nabla_{\gamma}\left[\nabla_{\theta}\widetilde{Q}\left(\theta^{*},\gamma^{*}\right)\right]=\left(\nabla_{Y}Q\left(\theta^{*},Y^{*}\right)\right)\left(\nabla_{\theta\gamma^{\prime}}\Upsilon\left(\theta^{*},\gamma^{*}\right)\right)$ holds.

As discussed in Section \ref{subsec:dQ_dY}, $E\left[\nabla_{Y}Q\left(\theta^{*},Y^{*}\right)\right]=0$ holds for MLE, and it implies $\nabla_{\gamma}\left[E\left[\nabla_{\theta}\widetilde{Q}\left(\theta^{*},\gamma^{*}\right)\right]\right]=0$, i.e., $\nabla_{\gamma}\left[E\left[\nabla_{\theta}\widetilde{Q}\left(\theta^{*},\gamma^{*}\right)\right]\right]\left(\gamma-\gamma^{*}\right)=0\ \forall\gamma$.\footnote{The condition implies that $\widetilde{Q}\left(\theta,\gamma\right)$ is an orthogonal loss function (\citealp{foster2023orthogonal}).} Consequently, the score function $\nabla_{\theta}\widetilde{Q}\left(\theta;\gamma\right)$ is Neyman orthogonal concerning the nuisance parameters $\gamma$ in the MLE setting. Note that $\nabla_{\gamma}\left[\nabla_{\theta}\widetilde{Q}\left(\theta^{*},\gamma^{*}\right)\right]\rightarrow_{p}0$ holds for the GMM setting, as discussed in Section \ref{subsec:dQ_dY}, although $\nabla_{\gamma}\left[E\left[\nabla_{\theta}\widetilde{Q}\left(\theta^{*},\gamma^{*}\right)\right]\right]=0$ may not hold in general.

Intuitively, if $\nabla_{\gamma}\left[E\left[\nabla_{\theta}\widetilde{Q}\left(\theta^{*},\gamma^{*}\right)\right]\right]=0$ or $\nabla_{\gamma}\left[\nabla_{\theta}\widetilde{Q}\left(\theta^{*},\gamma^{*}\right)\right]\rightarrow_{p}0$ holds, the first order condition of the optimization problem $\theta_{k+1}=\arg\min_{\theta}\widetilde{Q}\left(\theta;\gamma_{k}\right)\equiv Q\left(\theta,\Upsilon\left(\theta;\gamma_{k}\right)\right)$ is less likely to be affected by the value of the nuisance parameters $\gamma_{k}$ around $\left(\theta^{*},\gamma^{*}\right)$.\footnote{Concerning the SLC using $\Upsilon\left(\theta;\gamma_{k}\right)\equiv Y_{k}-\left(\left(\nabla_{Y}G\left(Y_{k};\theta_{k}\right)\right)^{-1}\left(G\left(Y_{k};\theta_{k}\right)\right)\right)-\left(\left(\nabla_{Y}G\left(Y_{k};\theta_{k}\right)\right)^{-1}\left(\nabla_{\theta}G\left(Y_{k};\theta_{k}\right)\right)\right)\left(\theta-\theta_{k}\right)$, we can treat not only $\theta$ and $Y$ but also $\left(\nabla_{Y}G\left(Y_{k};\theta_{k}\right)\right)^{-1}\left(\nabla_{\theta}G\left(Y_{k};\theta_{k}\right)\right)$ as nuisance parameters $\gamma$, and the ZJP $\nabla_{\gamma}\Upsilon\left(\theta,\gamma\right)=0$ at points satisfying $G\left(Y;\theta\right)=0$ holds even under the setting (See also the proof in Appendix \ref{subsec:Proof-SLC-ZJP}). This property suggests that locally $\theta_{k+1}$ is less likely to be affected by a slight numerical error in $\left(\nabla_{Y}G\left(Y_{k};\theta_{k}\right)\right)^{-1}\left(\nabla_{\theta}G\left(Y_{k};\theta_{k}\right)\right)$, which can be computed by iterative Krylov-based methods whose solution may not exactly coincide with the exact solution of a linear equation.} Hence, we would be able to obtain updated parameters $\theta_{k+1}$ which would be close to the true value, even if the candidate nuisance parameter values $\gamma_{k}$ are not very close to the true values.\footnote{If we apply the SLC or EPL, we require initial consistent values not only for $Y$ but also for $\theta$ to obtain an efficient estimate of $\theta$ after one iteration. Although this requirement may seem restrictive, studies that rely on Neyman orthogonality, such as \citet{sawadogo2025efficient} and \citet{belloni2018high}, also require consistent initial values for both the main parameter and the nuisance parameter to construct Neyman-orthogonal score functions. } 

\subsection{Acceleration of iterations\label{subsec:Acceleration}}

One potential concern with sequential algorithms with ZJP is that, although they have a theoretical guarantee of near\nobreakdash-quadratic local convergence in large samples---as shown in Section \ref{subsec:dQ_dY} using $\nabla_{Y}Q\left(\widehat{\theta},\widehat{Y}\right)=O_{p}\left(N^{-\frac{1}{2}}\right)$---the convergence can be slow when the pseudo\nobreakdash-objective function $\text{\ensuremath{\widetilde{Q}\left(\theta;\gamma\right)}}$ is relatively flat near the solution. Proposition \ref{prop:local_conv} indicates that the local convergence rate can be far from quadratic when the Hessian matrix $\nabla_{\theta\theta^{\prime}}\widetilde{Q}\left(\widetilde{\theta},\widetilde{\gamma}\right)$ is close to singular in the finite sample environment. In addition, the iteration may not be contractive when far from the solution.

A practical way to mitigate these issues is to employ fixed\nobreakdash-point iteration acceleration methods (e.g., Spectral, Anderson acceleration, SQUAREM). These methods inherit the spirit of Newton\nobreakdash-type algorithms for solving nonlinear equations but do not require computing additional derivatives. Recent studies in economics have shown that such methods can substantially accelerate and stabilize fixed\nobreakdash-point iterations (e.g., Spectral for dynamic game NPL; SQUAREM and Anderson acceleration for static BLP NFXP\footnote{See \citet{aguirregabiria2021imposing}, \citet{conlon2020best}, and \citet{fukasawa2024fast}.}).\footnote{As discussed by \citet{pollock2021anderson} in the case of the Anderson acceleration method for fixed-point iterations, the use of the acceleration method sometimes lead to the convergence of non-contractive fixed-point mappings. The phenomenon is known in the literature of computational physics, and \citet{pollock2021anderson} analytically investigates the mechanism behind the phenomenon motivated by the previous studies.} Because sequential algorithms with ZJP repeatedly apply a fixed\nobreakdash-point mapping $H$, these acceleration techniques can be incorporated with minimal modification. For details on fixed\nobreakdash-point acceleration methods, see the Supplemental Appendix \ref{subsec:Spectral-algorithm}. Regarding the SLC, although the original algorithm typically performs well, incorporating acceleration can further improve convergence speed, as demonstrated in Section \ref{sec:Numerical-experiments}.

\section{Numerical experiments\label{sec:Numerical-experiments}}

To demonstrate the performance of the SLC, this study conducts a set of numerical experiments. We consider two models: a dynamic discrete game with time-varying unobserved heterogeneity, and a dynamic demand model (dynamic BLP). \citet{dearing2024efficient} showed that the EPL performs well in dynamic games without unobserved heterogeneity, where the mapping $G$ is linear in $\theta$ and the EPL coincides with the SLC. Accordingly, the present study focuses on settings in which this linearity does not hold. 

Because explicit computation of Jacobians is often not straightforward in practice due to the complicated model structure or the memory size, the current study shows results based on the Jacobian-free SLC method, and compare the results with the ones for NFXP algorithms using numerical derivatives. All the experiments were run on a laptop computer with the CPU AMD Ryzen 5 6600H 3.30 GHz, 16.0 GB of RAM, Windows 11 64-bit, and MATLAB 2022b.\cprotect\footnote{Replication code for the numerical experiments is available at \url{https://github.com/takeshi-fukasawa/sequential_SLC}.}

In both models, the tolerance level for the inner-loop of the NFXP algorithm is set to 1E-12. Concerning numerical derivative-based (Jacobian-free) algorithms, we use central finite differences. SLC is implemented in conjunction with the Krylov-based method (GMRES) using numerical derivatives, and the NFXP is implemented by computing the gradient of the objective function concerning statistical parameters by numerical derivatives. Concerning the inner loop of the NFXP, we use fixed-point iterations with Anderson acceleration. The maximum number of SLC iterations is set to 50. We perform the estimation using 20 independent datasets. To mitigate the risk of convergence to local optima, we employ five different starting values. 

\subsection{Dynamic discrete game with time-varying unobserved heterogeneity\label{subsec:Dynamic-discrete-game-numerical}}

\subsubsection*{Settings}

The dynamic game model we consider introduces time-varying unobserved heterogeneity in the model considered in \citet{aguirregabiria2007sequential}, \citet{egesdal2015estimating}, and \citet{dearing2024efficient}. In addition, the model is similar to the one considered in \citet{arcidiacono2011conditional}.\footnote{\citet{arcidiacono2011conditional} considered a setting in which firms exit the market forever once they choose to exit. In contrast, the current study consider the setting where firms can re-enter the market even after having exited, as in \citet{aguirregabiria2007sequential}, \citet{egesdal2015estimating}, and \citet{dearing2024efficient}.} The numerical experiments are done by modifying the replication code of \citet{dearing2024efficient}.

Time is discrete, and indexed by $t=1,2,\cdots$. In each market, there are $|\mathcal{J}|$ firms, and they are indexed by $j\in\{1,2,\cdots,|\mathcal{J}|\}$. Given states $\omega_{t}$ and private information only observed to each firm $\epsilon_{t}^{j}$, each firm simultaneously chooses its action $a_{t}^{j}\in\mathcal{A}=\{0,1,\cdots,|\mathcal{A}|-1\}$.

Agents maximize the expected discounted utility $E\left\{ \sum_{s=t}^{\infty}\beta^{s-t}\left[\overline{u}^{j}\left(\omega_{s},a_{s}^{j},a_{s}^{-j};\theta_{u}\right)+\epsilon_{s}^{j}(a_{s}^{j})\right]\left|\omega_{t},\epsilon_{t}^{j}\right.\right\} $, where $\beta\in(0,1)$ denotes the discount factor. $\theta_{u}$ and $\theta_{f}$ represent parameters concerning utility and state transitions respectively, and let $\theta\equiv(\theta_{u},\theta_{f})$. Under the assumptions of conditional independence, independent private values, finite state space $\left(\omega\in\Omega=\{1,2,\cdots,|\Omega|\}\right)$, and Markov perfect equilibrium, choice-specific value function $v^{j}\in\mathbb{R}^{|\Omega|\times|\mathcal{A}|}$ satisfies the following equation:\footnote{See Lemma 1 of \citet{dearing2024efficient} for the proof.}

\begin{eqnarray*}
v^{j}(\omega,a^{j}) & \equiv & \Phi_{v}^{j}(\omega,a^{j};v^{j},v^{-j},\theta)\\
 & = & u^{j}\left(\omega,a;\Lambda^{-j}(v^{-j}),\theta_{u}\right)+\beta\sum_{\omega^{\prime}}f^{j}\left(\omega^{\prime}|\omega,a^{j};\Lambda^{-j}(v^{-j}),\theta_{f}\right)S\left(v^{j}(\omega^{\prime})\right).
\end{eqnarray*}

Here, we define a function $\Phi_{v}:\Theta\times\mathbb{R}^{|\mathcal{J}|\times|\Omega|\times|\mathcal{A}|}\rightarrow\mathbb{R}^{|\mathcal{J}|\times|\Omega|\times|\mathcal{A}|}$. $\Lambda^{j}(\omega,a^{j};v^{j})$ is the choice probability that agent $j$ chooses action $j$ in state $x$, conditional on having choice-specific value function $v^{j}$. The function $S(\cdot)$ is McFadden's surplus function. We also define $\Lambda^{-j}(v^{-j})\equiv\left(\Lambda^{1}(v^{1}),\cdots,\Lambda^{j-1}(v^{j-1}),\Lambda^{j+1}(v^{j+1}),\cdots,\Lambda^{|\mathcal{J}|}(v^{|\mathcal{J}|})\right)$, and:

\begin{eqnarray*}
u^{j}\left(\omega,a^{j},\Lambda^{-j}(v^{-j}),\theta_{u}\right) & \equiv & \sum_{a^{-j}\in\mathcal{A}^{|\mathcal{J}|-1}}\Lambda^{-j}\left(\omega,a^{-j};v^{-j}\right)\overline{u}\left(\omega,a^{j},a^{-j};\theta_{u}\right),\\
f^{j}\left(\omega^{\prime}|\omega,a^{j};\Lambda^{-j}(v^{-j});\theta_{f}\right) & \equiv & \sum_{a^{-j}\in\mathcal{A}^{|\mathcal{J}|-1}}\Lambda^{-j}\left(\omega,a^{-j};v^{-j}\right)f\left(\omega^{\prime}|\omega,a^{j},a^{-j};\theta_{f}\right).
\end{eqnarray*}

Choice-specific value function $v$ is a solution of a nonlinear equation $G(\theta,v)\equiv v-\Phi_{v}(\theta,v)=0$ given $\theta$.

We consider the setting where firms make entry/exit decisions $(a^{j}=1,0)$. Concerning $\overline{u}^{j}$, we assume:

\begin{eqnarray*}
\overline{u}^{j}(\omega_{t},a_{t}^{j},a_{t}^{-j};\theta_{u}) & = & \begin{cases}
\theta_{FC,j}+\theta_{RS_{1}}z_{t}^{(1)}+\theta_{RS_{2}}z_{t}^{(2)}-\theta_{RN}\ln\left(1+\sum_{l\neq j}a_{t}^{l}\right)-\theta_{EC}\left(1-a_{t-1}^{j}\right) & \text{if}\ a_{t}^{j}=1,\\
0 & \text{if}\ a_{t}^{j}=0.
\end{cases}
\end{eqnarray*}
where $\omega_{t}\equiv\left(z_{t},\left\{ a_{t-1}^{j}\right\} _{j\in\mathcal{J}}\right)$ and $z_{t}^{(1)}\in\{1,\cdots,|\mathcal{Z}^{(1)}|\}$ denotes the market size. $z_{t}^{(2)}\in\{1,\cdots,|\mathcal{Z}^{(2)}|\}$ denotes another state affecting the market.

As in \citet{dearing2024efficient}, we choose parameter values $\left|\mathcal{Z}^{(1)}\right|=5,\theta_{FC,j}=-2+0.1\times j,\theta_{RS_{1}}=\theta_{RN}=\theta_{EC}=1.0$, and $\beta=0.95$. We consider a setting in which data from $N=640$ or $160$ markets with $|\mathcal{J}|=3$ firms are available for $T=10$ consecutive periods. Regarding $\theta_{RS_{2}}$, we assume $\theta_{RS_{2}}=1.0$. Concerning $\theta_{RN}$, we assume $\theta_{RN}=4.0$.\footnote{Under $\theta_{RN}=4.0$ and without the additional state $z_{t}^{(2)}$, the NPL algorithm is unstable, unless introducing additional strategies (e.g., spectral algorithm with a line search), as shown in \citet{aguirregabiria2021imposing} and \citet{dearing2024efficient}. Under the parameter setting, the present study did not find evidence of multiple equilibria.} Regarding the states $z_{t}^{(1)}$ and $z_{t}^{(2)}$, we assume the state transition probabilities $\left(\begin{array}{ccccc}
\pi^{(i)} & 1-\pi^{(i)} & 0 & \cdots & 0\\
1-\frac{\pi^{(i)}}{2} & \pi^{(i)} & 1-\frac{\pi^{(i)}}{2} & \cdots & 0\\
0 & 1-\frac{\pi^{(i)}}{2} & \pi^{(i)} & \cdots & 0\\
\vdots & \vdots & \vdots & \vdots & \vdots\\
0 & 0 & 0 & 1-\pi^{(i)} & \pi^{(i)}
\end{array}\right)$ $(i=1,2)$, where $\pi^{(1)}=\pi^{(2)}=0.8$. In the estimation, we assume the values of $\pi^{(1)}$ and $\beta$ are known.

We assume that the state $z_{t}^{(2)}$ is observed by the firms but not observed by econometricians. Concerning $\left|\mathcal{Z}^{(2)}\right|$, we let $\left|\mathcal{Z}^{(2)}\right|=3$. We estimate utility parameters $\theta$ and the state transition parameter $\pi^{(2)}$. In both the SLC and NFXP algorithms, we treat $\left(v,\overline{p_{1}}\right)$ as the nuisance parameters $Y$, where $\overline{p_{1}}$ denotes the vector of the initial distribution of $\omega$. We assume that the initial period is in the stationary environment and satisfies $\overline{p_{1}}=\Phi_{\overline{p_{1}}}\left(\overline{p_{1}}\right)=\Pi\overline{p_{1}}$, where $\Pi$ denotes the transition probability matrix of $\omega$ in the initial period. Concerning the NFXP, we solve for $\left(v,\overline{p_{1}}\right)$ by applying the fixed-point mappings $\Phi_{v}$ and $\Phi_{\overline{p_{1}}}$ with Anderson acceleration.\footnote{Supplemental Appendix \ref{tab:EVFI-based} additionally shows numerical results of the SLC and the NFXP where the idea of the Endogenous Value Function Iteration (EVFI), which can accelerate convergence, is applied to these algorithms.} To uniquely identify the parameters, we impose an inequality constraint $\theta_{RN_{2}}\geq0$ in the estimation. We also impose $\pi\in[0,1]$. Note that we can easily introduce inequality constraints on the structural parameters, as discussed in the Supplemental Appendix \ref{subsec:Inequality-constraints}. The parameters $\left(\theta_{u},\pi^{(2)}\right)$ are estimated by the MLE given the constraints on $\left(v,\overline{p_{1}}\right)$. The current study applies the interior-point algorithm in fmincon function of MATLAB for the outer loop of the NFXP.\footnote{The present study also experimented with the Berndt--Hall--Hall--Hausman (BHHH) algorithm (\citealp{hall1974estimation}) as an outer\nobreakdash-loop optimization method for the NFXP, whose effectiveness is noted in \citet{iskhakov2016comment}. However, in the current setting, the algorithm occasionally moved to parameter values for which the inner\nobreakdash-loop iterations failed to converge. The current study also experimented with the commercial solver KNITRO, but it did not yield substantial speed improvements.} The tolerance level for the convergence of the SLC and the outer loop of the NFXP is set to 1E-6. Supplemental Appendix \ref{subsec:Dynamic-game-details} describes the procedure for computing the likelihood; see that section for further details.

\subsubsection*{Results}

Table \ref{tab:Results-dynamic-game-unobs-hetero-comp} shows the computational performance, and Table \ref{tab:Results-dynamic-game-unobs-hetero-est} shows the estimated parameters. Table \ref{tab:Results-dynamic-game-unobs-hetero-comp} suggests that the SLC is roughly 4-8 times faster than the NFXP, regardless of the use of analytical derivatives and the number of independent markets $N$. Such a speedup mainly comes from the fewer number of main (outer-loop) iterations. Note that the last column of Table \ref{tab:Results-dynamic-game-unobs-hetero-est} indicates that the differences between the SLC and NFXP estimator values are mostly negligible. 

Furthermore, the SLC converges in most cases even absent any stabilization strategies, suggesting that the SLC is practical as the solution method for estimating dynamic games even with time-varying unobserved heterogeneity.\footnote{\citet{dearing2024efficient} showed numerical results where the convergence of the EPL, which is equivalent to the SLC under the linearity of $G$ concerning $\theta$, is stable even when starting from random initial values in a dynamic game model without unobserved heterogeneity.}

\begin{table}[H]
\caption{Results of numerical experiments (Dynamic game with time-varying unobserved heterogeneity; Computational performance)\label{tab:Results-dynamic-game-unobs-hetero-comp}}

\begin{centering}
{\scriptsize{}%
\begin{tabular}{cccccccccc}
\hline 
\multirow{2}{*}{{\footnotesize$N$}} & \multirow{2}{*}{{\footnotesize Analy. diff.}} & \multirow{2}{*}{{\footnotesize Algorithm}} & \multicolumn{2}{c}{{\footnotesize Comp. Time (sec)}} & {\footnotesize Feval $Q$} & {\footnotesize Feval $G$} & {\footnotesize\# of} & {\footnotesize Data sets} & {\footnotesize Runs}\tabularnewline
\cline{4-5}
 &  &  & {\footnotesize Mean} & {\footnotesize Std.} & {\footnotesize (mean)} & {\footnotesize (mean)} & {\footnotesize Main iter.} & {\footnotesize Conv. (\%)} & {\footnotesize Conv. (\%)}\tabularnewline
\hline 
\multirow{4}{*}{{\footnotesize 640}} & \multirow{2}{*}{{\footnotesize Yes}} & {\footnotesize SLC} & {\footnotesize 9.3} & {\footnotesize 1.2} & {\footnotesize 316} & {\footnotesize 9.3} & {\footnotesize 9.3} & {\footnotesize 100} & {\footnotesize 100}\tabularnewline
 &  & {\footnotesize NFXP} & {\footnotesize 33.3} & {\footnotesize 2.1} & {\footnotesize 42.9} & {\footnotesize 5201.6} & {\footnotesize 33.4} & {\footnotesize 100} & {\footnotesize 100}\tabularnewline
 & \multirow{2}{*}{{\footnotesize No}} & {\footnotesize SLC} & {\footnotesize 42.3} & {\footnotesize 5.2} & {\footnotesize 3827.3} & {\footnotesize 5515.7} & {\footnotesize 9.3} & {\footnotesize 100} & {\footnotesize 100}\tabularnewline
 &  & {\footnotesize NFXP} & {\footnotesize 321.4} & {\footnotesize 26.9} & {\footnotesize 10094.1} & {\footnotesize 76960.5} & {\footnotesize 33.4} & {\footnotesize 100} & {\footnotesize 100}\tabularnewline
\hline 
\multirow{2}{*}{{\footnotesize 160}} & \multirow{2}{*}{{\footnotesize Yes}} & {\footnotesize SLC} & {\footnotesize 9.1} & {\footnotesize 6.8} & {\footnotesize 621.4} & {\footnotesize 13.8} & {\footnotesize 13.8} & {\footnotesize 100} & {\footnotesize 96}\tabularnewline
 &  & {\footnotesize NFXP} & {\footnotesize 34.1} & {\footnotesize 6} & {\footnotesize 44.9} & {\footnotesize 5575.5} & {\footnotesize 34.5} & {\footnotesize 100} & {\footnotesize 100}\tabularnewline
\hline 
\end{tabular}}{\scriptsize\par}
\par\end{centering}
{\footnotesize Notes. Based on 20 randomly generated datasets with five random initial starting points. ``Feval'' denotes the mean number of function evaluations.}{\footnotesize\par}
\end{table}

\begin{center}
{\scriptsize{}
\begin{table}[H]
{\scriptsize\caption{Results of numerical experiments (Dynamic game with time-varying unobserved heterogeneity; Parameter estimates)\label{tab:Results-dynamic-game-unobs-hetero-est}}
}{\scriptsize\par}
\begin{centering}
{\scriptsize{}%
\begin{tabular}{ccccccccccccc}
\hline 
\multirow{2}{*}{{\scriptsize$N$}} & \multirow{2}{*}{{\scriptsize Analy. diff.}} & \multirow{2}{*}{{\scriptsize Algorithm}} &  & {\scriptsize$\theta_{FC,j=1}$} & {\scriptsize$\theta_{FC,j=2}$} & {\scriptsize$\theta_{FC,j=3}$} & {\scriptsize$\theta_{RS_{1}}$} & {\scriptsize$\theta_{RS_{2}}$} & {\scriptsize$\theta_{RN}$} & {\scriptsize$\theta_{EC}$} & {\scriptsize$\pi^{(2)}$} & {\scriptsize Diff. with}\tabularnewline
\cline{4-12}
 &  &  & {\scriptsize True} & {\scriptsize -1.9} & {\scriptsize -1.8} & {\scriptsize -1.7} & {\scriptsize 1.0} & {\scriptsize 1.0} & {\scriptsize 4.0} & {\scriptsize 1.0} & {\scriptsize 0.8} & {\scriptsize SLC (Analy.)}\tabularnewline
\hline 
\multirow{8}{*}{{\scriptsize 640}} & \multirow{4}{*}{{\scriptsize Yes}} & \multirow{2}{*}{{\scriptsize SLC}} & {\scriptsize Mean} & {\scriptsize -1.995} & {\scriptsize -1.895} & {\scriptsize -1.794} & {\scriptsize 1.026} & {\scriptsize 1.028} & {\scriptsize 4.018} & {\scriptsize 0.992} & {\scriptsize 0.751} & \multirow{2}{*}{{\scriptsize -}}\tabularnewline
 &  &  & {\scriptsize Std.} & {\scriptsize 0.266} & {\scriptsize 0.269} & {\scriptsize 0.288} & {\scriptsize 0.046} & {\scriptsize 0.121} & {\scriptsize 0.173} & {\scriptsize 0.036} & {\scriptsize 0.052} & \tabularnewline
 &  & \multirow{2}{*}{{\scriptsize NFXP}} & {\scriptsize Mean} & {\scriptsize -1.995} & {\scriptsize -1.895} & {\scriptsize -1.794} & {\scriptsize 1.026} & {\scriptsize 1.028} & {\scriptsize 4.018} & {\scriptsize 0.992} & {\scriptsize 0.751} & \multirow{2}{*}{{\scriptsize 2.9E-06}}\tabularnewline
 &  &  & {\scriptsize Std.} & {\scriptsize 0.266} & {\scriptsize 0.269} & {\scriptsize 0.288} & {\scriptsize 0.046} & {\scriptsize 0.121} & {\scriptsize 0.173} & {\scriptsize 0.036} & {\scriptsize 0.052} & \tabularnewline
\cline{2-13}
 & \multirow{4}{*}{{\scriptsize No}} & \multirow{2}{*}{{\scriptsize SLC}} & {\scriptsize Mean} & {\scriptsize -1.995} & {\scriptsize -1.895} & {\scriptsize -1.794} & {\scriptsize 1.026} & {\scriptsize 1.028} & {\scriptsize 4.018} & {\scriptsize 0.992} & {\scriptsize 0.751} & \multirow{2}{*}{{\scriptsize 6.7E-07}}\tabularnewline
 &  &  & {\scriptsize Std.} & {\scriptsize 0.266} & {\scriptsize 0.269} & {\scriptsize 0.288} & {\scriptsize 0.046} & {\scriptsize 0.121} & {\scriptsize 0.173} & {\scriptsize 0.036} & {\scriptsize 0.052} & \tabularnewline
 &  & \multirow{2}{*}{{\scriptsize NFXP}} & {\scriptsize Mean} & {\scriptsize -1.995} & {\scriptsize -1.895} & {\scriptsize -1.794} & {\scriptsize 1.026} & {\scriptsize 1.028} & {\scriptsize 4.018} & {\scriptsize 0.992} & {\scriptsize 0.751} & \multirow{2}{*}{{\scriptsize 2.8E-06}}\tabularnewline
 &  &  & {\scriptsize Std.} & {\scriptsize 0.266} & {\scriptsize 0.269} & {\scriptsize 0.288} & {\scriptsize 0.046} & {\scriptsize 0.121} & {\scriptsize 0.173} & {\scriptsize 0.036} & {\scriptsize 0.052} & \tabularnewline
\hline 
\multirow{4}{*}{{\scriptsize 160}} & \multirow{4}{*}{{\scriptsize Yes}} & \multirow{2}{*}{{\scriptsize SLC}} & {\scriptsize Mean} & {\scriptsize -1.934} & {\scriptsize -1.833} & {\scriptsize -1.728} & {\scriptsize 1.012} & {\scriptsize 0.984} & {\scriptsize 3.932} & {\scriptsize 1.036} & {\scriptsize 0.753} & \multirow{2}{*}{{\scriptsize -}}\tabularnewline
 &  &  & {\scriptsize Std.} & {\scriptsize 0.53} & {\scriptsize 0.527} & {\scriptsize 0.555} & {\scriptsize 0.086} & {\scriptsize 0.259} & {\scriptsize 0.317} & {\scriptsize 0.067} & {\scriptsize 0.137} & \tabularnewline
 &  & \multirow{2}{*}{{\scriptsize NFXP}} & {\scriptsize Mean} & {\scriptsize -1.934} & {\scriptsize -1.833} & {\scriptsize -1.728} & {\scriptsize 1.012} & {\scriptsize 0.984} & {\scriptsize 3.932} & {\scriptsize 1.036} & {\scriptsize 0.753} & \multirow{2}{*}{{\scriptsize 1.1E-05}}\tabularnewline
 &  &  & {\scriptsize Std.} & {\scriptsize 0.53} & {\scriptsize 0.527} & {\scriptsize 0.555} & {\scriptsize 0.086} & {\scriptsize 0.259} & {\scriptsize 0.317} & {\scriptsize 0.067} & {\scriptsize 0.137} & \tabularnewline
\hline 
\end{tabular}}{\scriptsize\par}
\par\end{centering}
{\footnotesize Notes. Based on 20 randomly generated datasets with five random initial starting points. The column “Diff. with SLC (Analy.)” reports the sup\nobreakdash-norm distance between each estimator and the SLC estimator computed using analytical derivatives.}{\footnotesize\par}
\end{table}
}{\scriptsize\par}
\par\end{center}

\subsection{Dynamic BLP model\label{subsec:Dynamic-BLP-numerical}}

\subsubsection*{Settings}

We consider a dynamic BLP model where forward-looking decision-makings of consumers are explicitly incorporated into the standard static BLP model (\citealp{berry1995automobile}). Here, we focus on the model of perfectly durable goods considered in \citet{sun2019computationally}. 

Consumer $i$'s utility when purchasing product $j$ at time $t$ is:

\[
U_{ijt}=\chi_{jt}\theta_{i}+\xi_{jt}+\epsilon_{ijt}
\]
where $\chi_{jt}\equiv\left(1,\chi_{1jt},\chi_{2jt},\chi_{3jt},p_{jt}\right)$ and $\xi_{jt}$ denote the observed and unobserved product characteristics of product $j$ at time $t$.\footnote{Here, $p_{jt}$ denotes the product price.} The coefficient $\theta_{i}\sim N(\overline{\theta},\Sigma)$ represents random coefficients of consumer $i$. 

Consumer $i$'s utility when purchasing nothing is:

\[
U_{i0t}=\beta E_{t}[V_{it+1}(\Omega_{t+1})|\Omega_{t}]+\epsilon_{i0t},
\]
where $\beta$ denotes the consumers' discount factor, and $\epsilon$ denotes idiosyncratic utility shocks following Gumbel distribution. The variable $V_{it}$ denotes consumer $i$'s (integrated) value function at time $t$, and $\Omega$ denotes state variables.

We estimate demand parameters $(\overline{\theta},\Sigma)$ by GMM by imposing moment conditions $E[\xi|Z]=0$, where $Z$ denotes instrumental variables. We introduce consumer heterogeneity in preferences for the characteristics $\chi_{1jt},\chi_{2jt},p_{jt}$, and let $\sigma_{\chi_{1}},\sigma_{\chi_{2}},\sigma_{p}$ denote the standard deviation parameters of the normal distribution governing the corresponding random coefficients. We further assume that $S_{jt}^{(data)}=s_{jt}\left(\xi,V,\theta\right)\ \forall j,t$, namely, observed market shares are equal to predicted market shares. Concerning consumer expectations, we assume inclusive value sufficiency, as in the previous studies (e.g., \citealp{gowrisankaran2012dynamics}). We use 50 grid points to interpolate the values of inclusive values. We consider a setting where $J=25$ products exist in a market, and the market data is available for $T=25$ consecutive periods. For details of the model setting, see Section 5.2 of \citet{sun2019computationally} or \citet{fukasawa2024fast}. Supplemental Appendix \ref{subsec:Dynamic-BLP-details} also describes the details.

For the inner-loop of the NFXP, we use the computationally efficient fixed-point mapping with the Anderson acceleration method developed in \citet{fukasawa2024fast}. \citet{fukasawa2024fast} considers a fixed-point mapping of the value functions $V$, which does not depend on the mean product utilities, and shows that the new inner-loop algorithm with Anderson acceleration is roughly ten times faster than the traditional fixed-point iteration algorithm that combines the BLP contraction mapping and value function iterations.

In the present application, we rely on the numerical derivative-based Jacobian-free approach. In dynamic BLP models with random coefficients, deriving Jacobian of function $G$ is not straightforward due to the complexity of the model structure.\footnote{See also Supplemental Appendix \ref{subsec:Computational-speed-of-AD} for potential limitations of automatic differentiations.} In addition, the Jacobian matrix is typically high-dimensional, implying that a substantial amount of computer memory is required merely to compute and store it.\footnote{In the current setting where the number of discretized consumer types is 50, the Jacobian is a 3750$\times$3750 dimensional matrix, which consumes roughly 107MB of memory. If we allow for more flexible consumer heterogeneity and let the number of discretized consumer types be 1,000, which is typical in the static BLP applications, the Jacobian is a 75000$\times$75000 dimensional matrix, and it requires roughly 43GB of memory. Although the present model includes only two consumer-level state variables (owning a durable good or not), the dimensionality would increase further as additional states are introduced. It is worth noting that, for the MPEC approach (interior-point methods that explicitly compute Hessians), \citet{sun2019computationally} reported in Appendix C of their paper that more than 32 GB of memory is required even when the number of discretized consumer types is only 50.}

\subsubsection*{Results}

Table \ref{tab:Results-dynamic-BLP-comp} shows the computational performance, and Table \ref{tab:Results-dynamic-BLP-est} shows results on estimated parameter values. Results suggest that the original SLC usually performs well, and it is roughly 5 times faster than the NFXP. Such a large speedup mainly arises from from the much smaller number of the evaluations of the function $G$. the number of $G$ evaluations required by the SLC is roughly five times smaller than that required by the NFXP. Although the SLC requires a larger number of evaluations of the objective function $Q$, code profiling indicates that the computational cost of evaluating $Q$ is roughly 10 times smaller than that of evaluating $G$. Consequently, the substantially fewer evaluations of $G$ dominate the overall computational savings. Note that the SLC and the NFXP estimators take mostly the same values, as shown in the latter table.

It is worth noting that the original SLC is occasionally slow, and sometimes does not converge within 50 iterations.\footnote{Occasional non\nobreakdash-convergence is also not uncommon even under the ``MPEC'' approach (\citealp{su2012constrained}; \citealp{egesdal2015estimating}).} As discussed in Section \ref{subsec:Acceleration}, sequential algorithms with ZJP may require many iterations when the objective function becomes nearly flat around the solution. In addition, local convergence in large samples does not necessarily imply convergence from far initial points. To mitigate the issues, the current study also experimented the SLC with the spectral algorithm, which is one of the fixed-point acceleration methods and utilized in the NPL algorithm for dynamic games (\citealp{aguirregabiria2021imposing}). The results suggest that the spectral algorithm significantly mitigates the problem and enhancing the robustness of the SLC--- SLC-Spectral is roughly 7 times faster than the NFXP, and always converges.\footnote{In the current setting, the SLC-Spectral and the NFXP estimators take different values in 5 out of the 20 randomly generated datasets, although the absolute value difference is at most less than 0.05. In the current numerical experiment, the standard deviation parameters of the random coefficients are not restricted to be positive, which may increase the number of potential local minima. Supplemental Appendix \ref{subsec:Dynamic-BLP-details} further discusses the issue.}

\begin{table}[H]
\caption{Results of numerical experiments (Dynamic BLP; Computational performance)\label{tab:Results-dynamic-BLP-comp}}

\begin{centering}
{\scriptsize{}%
\begin{tabular}{cccccccc}
\hline 
\multirow{2}{*}{{\footnotesize Algorithm}} & \multicolumn{2}{c}{{\footnotesize Comp. Time (sec)}} & {\footnotesize Feval $Q$} & {\footnotesize Feval $G$} & {\footnotesize\# of} & {\footnotesize Data sets} & {\footnotesize Runs}\tabularnewline
\cline{2-3}
 & {\footnotesize Mean} & {\footnotesize Std.} & {\footnotesize (mean)} & {\footnotesize (mean)} & {\footnotesize Main iter.} & {\footnotesize Conv. (\%)} & {\footnotesize Conv. (\%)}\tabularnewline
\hline 
{\footnotesize SLC} & 282.6 & 258.5 & 1764.7 & 5841.7 & 19.2 & 85 & 80\tabularnewline
SLC-Spectral & 184.1 & 68.9 & 1053.7 & 3549.3 & 11.7 & 100 & 100\tabularnewline
{\footnotesize NFXP} & 1329.8 & 342.1 & 754.3 & 27034.9 & 12.1 & 100 & 100\tabularnewline
\hline 
\end{tabular}}{\scriptsize\par}
\par\end{centering}
{\footnotesize Notes. Based on 20 randomly generated datasets with five random initial starting points. “SLC--Spectral” denotes the SLC algorithm combined with the spectral method as a fixed\nobreakdash-point iteration acceleration method. NFXP algorithm in the table utilizes the inner-loop algorithm proposed in \citet{fukasawa2024fast}, which is roughly 10 times faster than the traditional NFXP algorithm applied in the previous studies.}{\footnotesize\par}
\end{table}

\begin{center}
{\scriptsize{}
\begin{table}[H]
{\scriptsize\caption{Results of numerical experiments (Dynamic BLP; Parameter estimates)\label{tab:Results-dynamic-BLP-est}}
}{\scriptsize\par}
\begin{centering}
{\scriptsize{}%
\begin{tabular}{ccccc}
\hline 
\multirow{2}{*}{Algorithm} &  & $\left|\sigma_{\chi_{1}}\right|$ & $\left|\sigma_{\chi_{2}}\right|$ & $\left|\sigma_{p}\right|$\tabularnewline
\cline{2-5}
 & True & 0.5 & 0.5 & 0.25\tabularnewline
\hline 
\multirow{2}{*}{{\footnotesize SLC}} & {\footnotesize Mean} & 0.552 & 0.503 & 0.176\tabularnewline
 & {\footnotesize Std.} & 0.207 & 0.212 & 0.117\tabularnewline
\multirow{2}{*}{SLC-Spectral} & Mean & 0.553 & 0.505 & 0.174\tabularnewline
 & Std. & 0.208 & 0.211 & 0.116\tabularnewline
\multirow{2}{*}{{\footnotesize NFXP}} & {\footnotesize Mean} & 0.552 & 0.504 & 0.176\tabularnewline
 & {\footnotesize Std.} & 0.208 & 0.213 & 0.118\tabularnewline
\hline 
\end{tabular}}{\scriptsize\par}
\par\end{centering}
{\footnotesize Notes. Based on 20 randomly generated datasets with five random initial starting points. “SLC-Spectral” denotes the SLC algorithm combined with the spectral algorithm as a fixed\nobreakdash-point iteration acceleration method.}{\footnotesize\par}
\end{table}
}{\scriptsize\par}
\par\end{center}

\section{Conclusions\label{sec:Conclusions}}

The current study has investigated sequential algorithms possessing the zero Jacobian property (ZJP) for estimating structural models with equilibrium constraints, which can also be regarded as numerical algorithms for solving constrained optimization problems. By developing a more general framework than those in previous studies, the study shows that they achieve near-quadratic local convergence in large samples, even without consistent initial estimates, in the context of MLE or GMM.

It then proposes a novel algorithm termed Sequential Linearly Constrained (SLC) algorithm. It is easily applicable to a broader class of structural models. A key advantage of the SLC is that it can be implemented without explicitly computing the Jacobian of the equilibrium constraints, and it can be several times faster than the NFXP approach. This study demonstrates its good performance through two numerical experiments: dynamic discrete game with time-varying unobserved heterogeneity and dynamic BLP model.

While the SLC algorithm performs well even without initial consistent estimates of $\theta$ and $Y$, incorporating such estimates is desirable as the algorithm can yield statistically efficient parameter estimates even after one iteration. Computationally efficient methods for obtaining consistent estimates exist for specific economic models---such as the CCP-based approach for dynamic discrete choice models with finite dependence (\citet{arcidiacono2011conditional})---, and such methods can be combined with the SLC to obtain statistically efficient parameter estimates. However, whether such computationally light methods exist for a broader class of models remains an open question. Developing these general methods with small computational costs for consistent parameter estimation remains a subject for future research.

\appendix

\section{Further discussions\label{sec:Further-discussions}}

\subsection{Comparison with Lagrangian-based sequential algorithms\label{subsec:Comparison-with-Lagrangian-based}}

Many of the economic studies proposing the ``MPEC'' approach applies the KNITRO solver based on the the Interior Point Method. In the absence of inequality constraints in the constrained optimization problems, the solver basically relies on the SQP algorithm,\footnote{I thank Richard Waltz for clarifying the point.} which explicitly introduces Lagrangian of the constrained optimization problem, and its local convergence speed is quadratic. Sequential algorithms have similarity with the SQP algorithm, in that variables are updated sequentially and the constraint $G\left(Y,\theta\right)=0$ is not exactly solved in each iteration.

Table \ref{tab:Comparisons-of-Sequential} compares the subproblem solved at each iteration of the respective sequential algorithms. Other than the SLC and SQP, we also compare EPL, NPL, ABLP algorithms (non-Lagrangian-based) and the Linearly Constrained Lagrangian (LCL) algorithm (Lagrangian-based).\footnote{LCL algorithm is available by the MINOS solver. The local convergence speed of the LCL is also quadratic, as shown in \citet{robinson1972quadratically}. Another Lagrangian-based algorithm widely known is the Augmented Lagrangian Method (ALM), which is available by the LANCELOT solver. See also Chapter 17 of \citet{nocedal2006numerical}.}

As is clear from the table, all the algorithms presented in the table can be reformulated so that each of them solves a constrained optimization problem in each iteration. Both the SLC and the SQP solve constrained optimization problems whose constraints are the linearization of the original constraint $G\left(\gamma\right)=0$. The only difference between them is the objective function of the constrained optimization problem to be solved in each iteration: SLC minimizes the original objective function $Q\left(\theta,Y\right)$ given the linearization of the original constraint, but the SQP minimizes the quadratic objective function $\left(\gamma-\gamma_{k}\right)^{\prime}\exists W_{k}\left(\gamma-\gamma_{k}\right)+\left(\nabla_{\gamma}Q\left(\gamma_{k}\right)\right)\left(\gamma-\gamma_{k}\right)$ which is derived from the original objective function $Q\left(\theta,Y\right)$ and the Lagrangian function. Note that the LCL also uses the linearized constraint as in the SLC, but it uses the objective function $Q\left(\gamma\right)-\lambda_{k}^{T}\overline{G}\left(\gamma;\gamma_{k}\right)$, which explicitly uses the Lagrangian $\lambda_{k}$. 

\begin{table}[H]
\caption{Comparisons of sequential algorithms\label{tab:Comparisons-of-Sequential}}

\begin{centering}
\begin{tabular}{ccl}
\hline 
{\footnotesize Lagrangian-based} & {\footnotesize Algorithm} & {\footnotesize Problem to be solved in each iteration}\tabularnewline
\hline 
\hline 
\multirow{9}{*}{{\footnotesize No}} & \multirow{3}{*}{{\footnotesize SLC}} & {\footnotesize$\min_{\theta,Y}Q\left(\theta,Y\right)$}\tabularnewline
 &  & {\footnotesize$s.t.\ \left(\nabla_{\theta}G(Y_{k};\theta_{k})\right)\left(\theta-\theta_{k}\right)+\left(\nabla_{Y}G(Y_{k};\theta_{k})\right)\left(Y-Y_{k}\right)+G(Y_{k};\theta_{k})=0$}\tabularnewline
 &  & {\footnotesize The constraint can be rewritten as $\left(\nabla_{\gamma}G(\gamma_{k})\right)\left(\gamma-\gamma_{k}\right)+G(\gamma_{k})=0$.}\tabularnewline
\cline{2-3}
 & \multirow{2}{*}{{\footnotesize EPL}} & {\footnotesize$\min_{\theta,Y}Q\left(\theta,Y\right)$}\tabularnewline
 &  & {\footnotesize$s.t.\ \left(\nabla_{\theta}G(Y_{k};\theta_{k})\right)\left(\theta-\theta_{k}\right)+G(Y_{k};\theta)=0$}\tabularnewline
\cline{2-3}
 & \multirow{2}{*}{{\footnotesize NPL}} & {\footnotesize$\min_{\theta,Y}Q\left(\theta,Y\right)$}\tabularnewline
 &  & {\footnotesize$s.t.\ Y=\Phi\left(Y_{k};\theta\right)$}\tabularnewline
\cline{2-3}
 & \multirow{2}{*}{{\footnotesize ABLP}} & {\footnotesize$\min_{\theta,Y}Q\left(\theta,Y\right)$}\tabularnewline
 &  & {\footnotesize$s.t.\left(\nabla_{\theta}G(Y_{k};\theta)\right)\left(\theta-\theta_{k}\right)+G(Y_{k};\theta)=0$}\tabularnewline
\hline 
\multirow{7}{*}{{\footnotesize Yes}} & \multirow{4}{*}{{\footnotesize SQP}} & {\footnotesize$\min_{\gamma}\left(\gamma-\gamma_{k}\right)^{\prime}W_{k}\left(\gamma-\gamma_{k}\right)+\left(\nabla_{\gamma}Q\left(\gamma_{k}\right)\right)\left(\gamma-\gamma_{k}\right)$}\tabularnewline
 &  & {\footnotesize$s.t.\ \left(\nabla_{\gamma}G(\gamma_{k})\right)\left(\gamma-\gamma_{k}\right)+G(\gamma_{k})=0$}\tabularnewline
 &  & {\footnotesize where $W_{k}\equiv\nabla_{\gamma\gamma^{\prime}}\mathcal{L}\left(\gamma_{k};\lambda_{k}\right)$, $\mathcal{L}$ is the Lagrangian function and }\tabularnewline
 &  & {\footnotesize$\lambda_{k}$ is the counterpart of the Lagrange multiplier.}\tabularnewline
\cline{2-3}
 & \multirow{3}{*}{{\footnotesize LCL}} & {\footnotesize$\min_{\gamma}Q\left(\gamma\right)-\lambda_{k}^{T}\overline{G}\left(\gamma;\gamma_{k}\right)$}\tabularnewline
 &  & {\footnotesize$s.t.\ \left(\nabla_{\gamma}G(\gamma_{k})\right)\left(\gamma-\gamma_{k}\right)+G(\gamma_{k})=0$}\tabularnewline
 &  & {\footnotesize where $\overline{G}\left(\gamma;\gamma_{k}\right)=\overline{G}\left(\gamma\right)-\overline{G}\left(\gamma_{k}\right)-\nabla_{\gamma}G\left(\gamma_{k}\right)^{T}\left(\gamma-\gamma_{k}\right)$}\tabularnewline
\hline 
\end{tabular}
\par\end{centering}
{\scriptsize Notes:}{\scriptsize\par}

{\scriptsize SLC: Sequential Linearly Constrained Method, proposed in the current paper}{\scriptsize\par}

{\scriptsize EPL: Efficient Pseudo-Likelihood algorithm, proposed by \citet{dearing2024efficient}}{\scriptsize\par}

{\scriptsize NPL: Nested Pseudo-Likelihood algorithm, proposed by \citet{aguirregabiria2002swapping} for estimating single-agent DDC models}{\scriptsize\par}

{\scriptsize ABLP: Approximate BLP algorithm, proposed by \citet{lee2015computationally} for estimating static BLP models}{\scriptsize\par}

{\scriptsize SQP: Sequential Quadratic Programming}{\scriptsize\par}

{\scriptsize LCL: Linearly Constrained Lagrangian method (cf. \citealp{robinson1972quadratically,murtagh1982projected})}{\scriptsize\par}
\end{table}

It is worth mentioning that the algorithm where $\lambda_{k}$ is arbitrarily set to 0 in the LCL algorithm, which coincides with the SLC algorithm, was mentioned in \citet{murtagh1982projected}, which proposes the MINOS solver based on the LCL algorithm. They noted in Section 3.3 of their paper that the algorithm converges in certain cases, particularly when the solution is at a convex (e.g., reverse convex problems). However, in general problems, the local convergence speed of the SLC is not guaranteed to be fast. Yet, as briefly mentioned in Section \ref{subsec:Local-convergence-speed} and will be discussed in the Supplemental Appendix \ref{subsec:Stabilized-algorithm}, the local convergence speed of the SLC is guaranteed to be close to quadratic in large samples in the MLE and the GMM, and the SLC is a good candidate as the solution algorithm for solving constrained optimization problems in statistical applications. 

Note that the SQP finds a solution of the Karush-Kuhn-Tucker (KKT) condition of the constrained optimization problem (\ref{eq:obj}). Regarding the KKT condition, the following statement holds:
\begin{prop}
\label{prop:FOC-KKT}Define {\footnotesize
\begin{eqnarray*}
\Gamma_{KKT} & \equiv & \left\{ \gamma:\text{Solution\ of\ }G\left(\gamma\right)=0\ \text{and}\ \nabla_{\gamma}Q\left(\gamma\right)-\lambda^{T}\left(\nabla_{\gamma}G\left(\gamma\right)\right)=0;\text{Rows\ of\ }\nabla_{\gamma}G\left(\gamma\right)\ \text{are\ linearly\ independent}\right\} 
\end{eqnarray*}
 and} $\Gamma_{KKT,nonsingular}\equiv\left\{ \gamma:\nabla_{Y}G\left(\gamma\right)\ \text{is\ nonsingular}\right\} \subset\Gamma_{KKT}$. Then, Under Assumptions \ref{as:existence_constrained_opt_sol}, \ref{as:conti_diffble},

(a). For $\gamma\in\Gamma_{KKT,nonsingular}$ and the corresponding Lagrange multiplier $\lambda$, the following holds:

\begin{eqnarray*}
\begin{cases}
\lambda^{T}=\left(\nabla_{Y}Q\left(\gamma\right)\right)\left(\nabla_{Y}G\left(\gamma\right)\right)^{-1}\\
\nabla_{\theta}Q\left(\theta,Y\right)-\left(\nabla_{Y}Q\left(\theta,Y\right)\right)\left(\nabla_{Y}G\left(Y;\theta\right)\right)^{-1}\left(\nabla_{\theta}G\left(Y;\theta\right)\right)=0\\
G\left(\gamma\right)=0
\end{cases}
\end{eqnarray*}

(b). $\widehat{\gamma}\in\Gamma_{FOC}=\Gamma_{KKT,nonsingular}$
\end{prop}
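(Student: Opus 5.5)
Part (a) is pure linear algebra on the KKT system, so I will start there. Take $\gamma\in\Gamma_{KKT,nonsingular}$ with multiplier $\lambda$, so that $G(\gamma)=0$ and $\nabla_{\gamma}Q(\gamma)-\lambda^{T}\nabla_{\gamma}G(\gamma)=0$. I will split the gradient condition into its $Y$-block and its $\theta$-block:
\[
\nabla_{Y}Q(\gamma)=\lambda^{T}\nabla_{Y}G(\gamma),\qquad \nabla_{\theta}Q(\gamma)=\lambda^{T}\nabla_{\theta}G(\gamma).
\]
Because $\nabla_{Y}G(\gamma)$ is nonsingular, the $Y$-block gives $\lambda^{T}=\left(\nabla_{Y}Q(\gamma)\right)\left(\nabla_{Y}G(\gamma)\right)^{-1}$. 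Substituting this into the $\theta$-block yields the reduced first-order condition $\nabla_{\theta}Q-\left(\nabla_{Y}Q\right)\left(\nabla_{Y}G\right)^{-1}\left(\nabla_{\theta}G\right)=0$. The constraint $G(\gamma)=0$ holds by definition of the set.

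For (b) I will prove $\Gamma_{FOC}=\Gamma_{KKT,nonsingular}$ by showing both inclusions. Here I interpret $\Gamma_{KKT,nonsingular}$ as the set of KKT points at which $\nabla_{Y}G$ is nonsingular, which is the reading implied by ``$\subset\Gamma_{KKT}$''.

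The inclusion $\Gamma_{KKT,nonsingular}\subset\Gamma_{FOC}$ follows directly from (a). Note that the definition of $\Gamma_{FOC}$ implicitly requires $\nabla_{Y}G$ to be invertible, since $\left(\nabla_{Y}G\right)^{-1}$ appears in it.

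For the reverse inclusion, take $\gamma\in\Gamma_{FOC}$ and define $\lambda^{T}\equiv\left(\nabla_{Y}Q\right)\left(\nabla_{Y}G\right)^{-1}$. Then the $Y$-block of the KKT gradient condition holds identically, and the $\theta$-block is exactly the FOC equation. Nonsingularity of $\nabla_{Y}G$ means the $n_{Y}\times(n_{\theta}+n_{Y})$ matrix $\nabla_{\gamma}G=\left[\nabla_{\theta}G,\nabla_{Y}G\right]$ has full row rank $n_{Y}$. This is the LICQ requirement in the definition of $\Gamma_{KKT}$, so $\gamma\in\Gamma_{KKT,nonsingular}$.

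Finally, $\widehat{\gamma}\in\Gamma_{FOC}$ is Proposition \ref{prop:FOC}, which gives the membership claim in (b).

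The only point needing care is the nonsingularity of $\nabla_{Y}G$ at $\widehat{\gamma}$, which is needed both for $\Gamma_{FOC}$ to be well defined there and for LICQ. Assumption \ref{as:existence_constrained_opt_sol} as typeset says $\nabla_{Y}Q$, but the intended hypothesis (used in Proposition \ref{prop:FOC} and in Assumption \ref{as:nonsingular_Jacobian}) is nonsingularity of $\nabla_{Y}G$ at the solution; I will invoke that reading. Under it, LICQ holds at $\widehat{\gamma}$, so the standard KKT theorem guarantees a multiplier exists. This gives an independent route to $\widehat{\gamma}\in\Gamma_{KKT,nonsingular}$ and agrees with Proposition \ref{prop:FOC}.
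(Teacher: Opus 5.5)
Your proof is correct and follows the paper's argument. For (a), you solve the $Y$-block of the KKT condition for $\lambda$ and substitute it into the $\theta$-block. For (b), you prove the two inclusions, with linear independence of the rows of $\nabla_{\gamma}G$ coming from the nonsingular square block $\nabla_{Y}G$; you also exhibit the multiplier explicitly for $\Gamma_{FOC}\subset\Gamma_{KKT,nonsingular}$, which the paper leaves implicit.

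One caution: the paper proves Proposition \ref{prop:FOC} as a consequence of this very proposition, so citing it for $\widehat{\gamma}\in\Gamma_{FOC}$ would be circular. Your independent route should carry that step instead: LICQ at $\widehat{\gamma}$, then the Lagrange multiplier theorem, then the set equality. It also supplies exactly what the paper tacitly assumes when it says it suffices to prove $\Gamma_{FOC}=\Gamma_{KKT,nonsingular}$.
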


\subsection{Comparison of computational costs\label{subsec:Comparison-of-computational-costs}}

Computational costs are one of the most important factors that practitioners make much of. Table \ref{tab:Computational-costs-of-algorithms} compares the computational costs of the main algorithms proposed so far: NFXP, SLC, and SQP. Because practitioners sometimes rely on the numerical derivative-based (Jacobian free-based) NFXP algorithm, I also classify the algorithms based on whether the Jacobian-free approach is applied or not.

\begin{table}[H]
\caption{Computational costs of algorithms \label{tab:Computational-costs-of-algorithms}}

\begin{centering}
{\tiny{}%
\begin{tabular}{ccccc}
\hline 
{\tiny Compute} & \multirow{2}{*}{{\tiny Algorithm}} & \multirow{2}{*}{{\tiny Computational cost per iteration}} & {\tiny memory} & {\tiny Outer-loop}\tabularnewline
{\tiny Jacobians} &  &  & {\tiny size} & {\tiny Local Conv. Speed}\tabularnewline
\hline 
\multirow{7}{*}{{\tiny Yes}} & \multirow{2}{*}{{\tiny NFXP}} & {\tiny$c\left(\text{Solve}\ G(Y;\theta)=0\ \text{for}\ Y\right)+c\left(\nabla_{Y}Q\left(\theta,Y\right)\right)+c\left(\nabla_{\theta}Q\left(\theta,Y\right)\right)+$} & \multirow{2}{*}{{\tiny$O\left(\max\left(n_{G},n_{Y}^{2}\right)\right)$}} & {\tiny Superlinear{*}}\tabularnewline
 &  & {\tiny$c\left(\nabla_{Y}G\left(Y,\theta\right)\right)+\sum_{i=1}^{n_{\theta}}c\left(\left(\nabla_{Y}G\left(Y,\theta\right)\right)^{-1}\nabla_{\theta_{i}}G\left(Y,\theta\right)\right)$} &  & {\tiny (BFGS case)}\tabularnewline
\cline{2-5}
 & \multirow{3}{*}{{\tiny SLC}} & {\tiny$c\left(\text{Solve}\ \theta_{k+1}=\arg\min Q\left(\theta;\Upsilon\left(\theta;\gamma\right)\right)\right)+$} & \multirow{3}{*}{{\tiny$O\left(\max\left(n_{G},n_{Y}^{2}\right)\right)$}} & {\tiny Nearly Quadratic}\tabularnewline
 &  & {\tiny$c\left(\nabla_{Y}G\left(Y,\theta\right)\right)+c\left(\left(\nabla_{Y}G\left(Y,\theta\right)\right)^{-1}\left(G\left(Y,\theta\right)\right)\right)+$} &  & {\tiny (MLE, GMM; Large sample)}\tabularnewline
 &  & {\tiny$\sum_{i=1}^{n_{\theta}}c\left(\left(\nabla_{Y}G\left(Y,\theta\right)\right)^{-1}\nabla_{\theta_{i}}G\left(Y,\theta\right)\right)$} &  & \tabularnewline
\cline{2-5}
 & \multirow{2}{*}{{\tiny SQP}} & {\tiny$c\left(\nabla_{\gamma}G\left(\gamma\right)\right)+c\left(\nabla_{\gamma}Q\left(\gamma\right)\right)+c\left(\nabla_{\gamma\gamma^{\prime}}\mathcal{L}\left(\gamma\right)\right)+$} & \multirow{2}{*}{{\tiny$O\left(\max\left(n_{G},\left(2n_{Y}+n_{\theta}\right)^{2}\right)\right)$}} & \multirow{2}{*}{{\tiny Quadratic}}\tabularnewline
 &  & {\tiny$c\left(\left(\begin{array}{cc}
\nabla_{\gamma\gamma^{\prime}}\mathcal{L}\left(\gamma,\lambda\right) & -\left(\nabla_{\gamma}G\left(\gamma\right)\right)^{\prime}\\
\nabla_{\gamma}G\left(\gamma\right) & 0
\end{array}\right)^{-1}\left(\begin{array}{c}
-\nabla_{\gamma}Q\left(\gamma\right)\\
-G\left(\gamma\right)
\end{array}\right)\right)$} &  & \tabularnewline
\hline 
\multirow{5}{*}{{\tiny No}} & \multirow{2}{*}{{\tiny NFXP}} & {\tiny$c\left(\text{Solve}\ G(Y;\theta)=0\ \text{for}\ Y\right)+$} & \multirow{2}{*}{{\tiny$O\left(n_{G}\right)$}} & {\tiny Superlinear{*}}\tabularnewline
 &  & {\tiny$\sum_{i=1}^{n_{\theta}}c\left(\text{Solve}\ G(Y;\theta_{i}\pm\epsilon_{i},\theta_{-i})=0\ \text{for}\ Y\right)$} &  & {\tiny (BFGS case)}\tabularnewline
\cline{2-5}
 & \multirow{3}{*}{{\tiny SLC}} & {\tiny$c\left(\text{Solve}\ \theta_{k+1}=\arg\min Q\left(\theta;\Upsilon\left(\theta;\gamma\right)\right)\right)+$} & \multirow{3}{*}{{\tiny$O\left(n_{G}\right)$}} & {\tiny Nearly Quadratic}\tabularnewline
 &  & {\tiny$c\left(\text{Solve}\ \left(\nabla_{Y}G\left(Y,\theta\right)\right)v=G\left(Y,\theta\right)\ \text{for\ }v\in\mathbb{R}^{n_{Y}}\right)$} &  & {\tiny (MLE, GMM; Large sample)}\tabularnewline
 &  & {\tiny$\sum_{i=1}^{n_{\theta}}c\left(\text{Solve}\ \left(\nabla_{Y}G\left(Y,\theta_{i}\pm\epsilon\right)\right)v=\nabla_{\theta_{i}}G\left(Y,\theta\right)\ \text{for\ }v\in\mathbb{R}^{n_{Y}}\right)$} &  & \tabularnewline
\hline 
\end{tabular}}{\tiny\par}
\par\end{centering}
{\footnotesize Notes. $c(\cdot)$ denotes the computational cost for computing a variable or implementing an operation. $n_{G}$ denotes the memory size required to compute $G\left(Y;\theta\right)$.}{\footnotesize\par}

{\footnotesize{*}: If we apply the BHHH algorithm in the maximum likelihood estimation, the local convergence speed is close to quadratic in large samples.}{\footnotesize\par}

{\footnotesize If we cannot rule out the possibility of multiple solutions of $G\left(Y;\theta\right)=0$, the objective function used in the NFXP algorithm may not be differentiable (cf. \citealp{su2014estimating}), and we may need to rely on optimization algorithms that do not require the differentiability of the objective function for good performance.}{\footnotesize\par}
\end{table}

Concerning the computational cost per iteration, the relatively small cost of evaluating the objective function $Q$, compared with the cost of evaluating the constraint function $G$, is important for the good performance of the SLC algorithm. In the SLC, repeated evaluations of $Q$ are required in each iteration, in contrast to the NFXP. As long as the computational burden of evaluating $Q$ is much smaller than that of evaluating $G$, the overall cost per iteration remains low, contributing to the favorable performance of the SLC.

Table \ref{tab:Computational-costs-of-algorithms} also shows the scale of the worst memory size of variables stored during the iterations. Here, let $n_{G}$ be the memory size required to compute $G\left(Y,\theta\right)$. In the case of the NFXP and SLC with Jacobians, we need to compute the values of $G\left(Y,\theta\right)$, which require memory size $n_{G}$, and the Jacobians, whose memory size is $n_{Y}^{2}$. Consequently, the memory size required is of order $O\left(\max\left(n_{G},n_{Y}^{2}\right)\right)$. Similarly, in the SQP algorithm, we need to compute $\left(\begin{array}{c}
\nabla_{\gamma}Q\left(\gamma\right)\\
\nabla_{\gamma}G\left(\gamma\right)
\end{array}\right)$, and the memory size is of order $O\left(\max\left(n_{G},\left(2n_{Y}+n_{\theta}\right)^{2}\right)\right)$. In contrast, in the NFXP and SLC algorithms using numerical derivatives, they do not require computing Jacobians, and the memory requirement is of order $O\left(n_{G}\right)$. In that sense, the memory requirement is generally smaller for Jacobian-free algorithms than Jacobian-based algorithms.

\section{Proof\label{sec:Proof}}

\subsection{Descent direction of the SLC update (Proof of Proposition \ref{prop:descent_direction})}

To prove the lemma, we use the following lemma. 
\begin{lem}
\label{lem: ineq_merit_func}Let $\Delta_{k}\equiv\left(\nabla_{\gamma}Q\left(\gamma_{k}\right)\right)d_{k}-\mu\left\Vert G\left(\gamma_{k}\right)\right\Vert _{1}$, where $d_{k}$ denotes the direction generated by the SLC iteration. Then, for any $\alpha\in(0,1)$, there exists $C>0$ such that $\alpha\Delta_{k}-C\alpha^{2}\leq\phi_{1}\left(\gamma_{k}+\alpha d_{k};\mu\right)-\phi_{1}\left(\gamma_{k};\mu\right)\leq\alpha\Delta_{k}+C\alpha^{2}$.
\end{lem}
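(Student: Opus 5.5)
We split the merit function into its smooth part and its nonsmooth part and bound each separately:
\[
\phi_{1}\left(\gamma_{k}+\alpha d_{k};\mu\right)-\phi_{1}\left(\gamma_{k};\mu\right)=\left[Q\left(\gamma_{k}+\alpha d_{k}\right)-Q\left(\gamma_{k}\right)\right]+\mu\left[\left\Vert G\left(\gamma_{k}+\alpha d_{k}\right)\right\Vert _{1}-\left\Vert G\left(\gamma_{k}\right)\right\Vert _{1}\right].
\]

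\textbf{The $Q$ part.} Assumption \ref{as:conti_diffble} makes $Q$ twice continuously differentiable, so Taylor's theorem with remainder gives
\[
Q\left(\gamma_{k}+\alpha d_{k}\right)-Q\left(\gamma_{k}\right)=\alpha\left(\nabla_{\gamma}Q\left(\gamma_{k}\right)\right)d_{k}+R_{Q}(\alpha),\qquad\left|R_{Q}(\alpha)\right|\leq\tfrac{1}{2}M_{Q}\alpha^{2}\left\Vert d_{k}\right\Vert ^{2}.
\]
Here $M_{Q}$ bounds $\nabla_{\gamma\gamma^{\prime}}Q$ on the segment $\{\gamma_{k}+t d_{k}:t\in[0,1]\}$. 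This segment is compact, $\gamma_{k}$ and $d_{k}$ are fixed, and the bound is also available from Assumption \ref{as:stat_ass}(e).

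\textbf{The constraint part.} We use the defining property of the SLC step. By the equivalence with problem (\ref{eq:SLC_constrained_opt}), $\gamma_{k+1}=\gamma_{k}+d_{k}$ satisfies the linearized constraint $\left(\nabla_{\gamma}G\left(\gamma_{k}\right)\right)d_{k}+G\left(\gamma_{k}\right)=0$. Equivalently, from the mapping (\ref{eq:SLC_mapping}), $d_{Y,k}=-\left(\nabla_{Y}G\right)^{-1}\left(G+\left(\nabla_{\theta}G\right)d_{\theta,k}\right)$, which is well defined by Assumption \ref{as:nonsingular_Jacobian}. Taylor expansion of $G$ then gives
\[
G\left(\gamma_{k}+\alpha d_{k}\right)=G\left(\gamma_{k}\right)+\alpha\left(\nabla_{\gamma}G\left(\gamma_{k}\right)\right)d_{k}+R_{G}(\alpha)=(1-\alpha)G\left(\gamma_{k}\right)+R_{G}(\alpha),
\]
with $\left\Vert R_{G}(\alpha)\right\Vert _{1}\leq\tfrac{1}{2}M_{G}\alpha^{2}\left\Vert d_{k}\right\Vert ^{2}$, where $M_{G}$ bounds the second derivatives of $G$ on the segment (Assumption \ref{as:conti_diffble}). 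Since $\alpha\in(0,1)$, we have $(1-\alpha)\geq0$, so the triangle inequality gives
\[
-\alpha\left\Vert G\left(\gamma_{k}\right)\right\Vert _{1}-\left\Vert R_{G}\right\Vert _{1}\leq\left\Vert G\left(\gamma_{k}+\alpha d_{k}\right)\right\Vert _{1}-\left\Vert G\left(\gamma_{k}\right)\right\Vert _{1}\leq-\alpha\left\Vert G\left(\gamma_{k}\right)\right\Vert _{1}+\left\Vert R_{G}\right\Vert _{1}.
\]

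\textbf{Combining.} Adding the two pieces yields
\[
\alpha\Delta_{k}-C\alpha^{2}\leq\phi_{1}\left(\gamma_{k}+\alpha d_{k};\mu\right)-\phi_{1}\left(\gamma_{k};\mu\right)\leq\alpha\Delta_{k}+C\alpha^{2},\qquad C\equiv\tfrac{1}{2}\left(M_{Q}+\mu M_{G}\right)\left\Vert d_{k}\right\Vert ^{2}.
\]
This $C$ does not depend on $\alpha$, which is the claimed two-sided bound. (If $d_{k}=0$, any $C>0$ works.)

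The only delicate step is the $\ell_{1}$-norm bound. It relies on the linearized constraint holding exactly for the full step $d_{k}$, which turns the first-order change in $G$ into the factor $(1-\alpha)$ multiplying $G(\gamma_{k})$. It also relies on the nonnegativity of $1-\alpha$, which is why $\alpha$ is restricted to $(0,1)$.
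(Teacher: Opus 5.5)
Your proof is correct and follows essentially the same route as the paper. The linearized constraint $G(\gamma_k)+(\nabla_\gamma G(\gamma_k))d_k=0$ turns the first-order change into $(1-\alpha)\Vert G(\gamma_k)\Vert_1$, and second-order Taylor bounds on $Q$ and $G$ give the two-sided estimate with a constant of the form $c_Q+\mu c_G$. The only cosmetic difference is that you absorb the factor $\Vert d_k\Vert^2$ into $C$, which matches the lemma as stated, whereas the paper's displayed bound keeps it as $C\alpha^2\Vert d_k\Vert^2$.
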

\begin{proof}
First,

\begin{eqnarray}
 &  & \left\Vert G\left(\gamma_{k}\right)+\alpha\left(\nabla_{\gamma}G\left(\gamma_{k}\right)d_{k}\right)\right\Vert _{1}\nonumber \\
 & = & \left\Vert \left(1-\alpha\right)G\left(\gamma_{k}\right)+\alpha\left[G\left(\gamma_{k}\right)+\left(\nabla_{\gamma}G\left(\gamma_{k}\right)\right)d_{k}\right]\right\Vert _{1}\nonumber \\
 & = & \left(1-\alpha\right)\left\Vert G\left(\gamma_{k}\right)\right\Vert _{1}\ \left(\because G\left(\gamma_{k}\right)+\left(\nabla_{\gamma}G\left(\gamma_{k}\right)\right)d_{k}=0,1-\alpha>0\right)\label{eq:G_linearize_norm}
\end{eqnarray}

By Taylor's theorem, $Q\left(\gamma_{k}\right)+\alpha\nabla_{\gamma}Q\left(\gamma_{k}\right)d_{k}-\exists c_{Q}\alpha^{2}\left\Vert d_{k}\right\Vert ^{2}\leq Q\left(\gamma_{k}+\alpha d_{k}\right)\leq Q\left(\gamma_{k}\right)+\alpha\nabla_{\gamma}Q\left(\gamma_{k}\right)d_{k}+\exists c_{Q}\alpha^{2}\left\Vert d_{k}\right\Vert ^{2}$ and $\left\Vert G\left(\gamma_{k}\right)+\alpha\nabla_{\gamma}G\left(\gamma_{k}\right)d_{k}\right\Vert _{1}-\exists c_{G}\alpha^{2}\left\Vert d_{k}\right\Vert ^{2}\leq\left\Vert G\left(\gamma_{k}+\alpha d_{k}\right)\right\Vert _{1}\leq\left\Vert G\left(\gamma_{k}\right)+\alpha\nabla_{\gamma}G\left(\gamma_{k}\right)d_{k}\right\Vert _{1}+\exists c_{G}\alpha^{2}\left\Vert d_{k}\right\Vert ^{2}$ hold, where $c_{Q},c_{G}>0$. Let $C\equiv c_{Q}+\mu c_{G}.$ Then, by $\phi_{1}\left(\gamma;\mu\right)\equiv Q\left(\gamma\right)+\mu\left\Vert G\left(\gamma\right)\right\Vert _{1}$,

\begin{eqnarray*}
 &  & \phi_{1}\left(\gamma_{k}+\alpha d_{k};\mu\right)-\phi_{1}\left(\gamma_{k};\mu\right)\\
 & = & \left[Q\left(\gamma_{k}+\alpha d_{k}\right)-Q\left(\gamma_{k}\right)\right]+\mu\left[\left\Vert G\left(\gamma_{k}+\alpha d_{k}\right)\right\Vert _{1}-\left\Vert G\left(\gamma_{k}\right)\right\Vert _{1}\right]\\
 & \leq & \alpha\nabla_{\gamma}Q\left(\gamma_{k}\right)d_{k}+\mu\left[\left\Vert G\left(\gamma_{k}\right)+\alpha\left(\nabla_{\gamma}G\left(\gamma_{k}\right)d_{k}\right)\right\Vert _{1}-\left\Vert G\left(\gamma_{k}\right)\right\Vert _{1}\right]+(c_{Q}+\mu c_{G})\alpha^{2}\left\Vert d_{k}\right\Vert ^{2}\\
 & \leq & \alpha\nabla_{\gamma}Q\left(\gamma_{k}\right)d_{k}-\alpha\mu\left\Vert G\left(\gamma_{k}\right)\right\Vert _{1}+C\alpha^{2}\left\Vert d_{k}\right\Vert ^{2}\ \left(\because(\ref{eq:G_linearize_norm})\right)\\
 & = & \alpha\Delta_{k}+C\alpha^{2}\left\Vert d_{k}\right\Vert ^{2}
\end{eqnarray*}

Similarly, we can derive $\phi_{1}\left(\gamma_{k}+\alpha d_{k};\mu\right)-\phi_{1}\left(\gamma_{k};\mu\right)\geq\alpha\Delta_{k}-C\alpha^{2}\left\Vert d_{k}\right\Vert ^{2}$.

\end{proof}

\subsubsection*{Proof of Proposition \ref{prop:descent_direction}}
\begin{proof}
Lemma \ref{lem: ineq_merit_func} implies that there exists $C>0$ such that $\Delta_{k}-C\alpha\left\Vert d_{k}\right\Vert ^{2}\leq\frac{\phi_{1}\left(\gamma_{k}+\alpha d_{k};\mu\right)-\phi_{1}\left(\gamma_{k};\mu\right)}{\alpha}\leq\Delta_{k}+C\alpha\left\Vert d_{k}\right\Vert ^{2}$. By letting $\alpha\rightarrow0$, we have $D\left(\phi_{1}\left(\gamma_{k};\mu\right);d_{k}\right)\equiv\lim_{\alpha\rightarrow0}\frac{\phi_{1}\left(\gamma_{k}+\alpha d_{k};\mu\right)-\phi_{1}\left(\gamma_{k};\mu\right)}{\alpha}=\Delta_{k}$.
\end{proof}

\subsection{Proof of Propositions related to $\Gamma_{FOC},\Gamma_{KKT,nonsingular}$, and $\Gamma_{seq}$}

\subsubsection*{Proof of Proposition \ref{prop:fixed-points-KKT-seq}}
\begin{proof}
(a). Let $\widetilde{\gamma}\equiv\left(\widetilde{\theta},\widehat{Y}\right)\in\Gamma_{seq}$ be a fixed point of Algorithm \ref{alg:general_seq}. Then, because $\widetilde{\theta}=\arg\min_{\theta}Q\left(\theta,\Upsilon\left(\theta;\widetilde{\gamma}\right)\right)$, $\nabla_{\theta}\widetilde{Q}\left(\widetilde{\theta};\widetilde{\gamma}\right)=0$ holds.

Then, $\nabla_{\theta}\Upsilon\left(\theta;\gamma=\left(\theta,Y\right)\right)=-\left(\nabla_{Y}G\left(Y;\theta\right)\right)^{-1}\left(\nabla_{\theta}G\left(Y;\theta\right)\right)$ by Assumption \ref{as:mapping_cdns} (c) and $\nabla_{\theta}\widetilde{Q}\left(\widetilde{\theta};\widetilde{\gamma}\right)=\nabla_{\theta}Q\left(\widetilde{\theta},\widetilde{Y}\right)+\left(\nabla_{Y}Q\left(\widetilde{\theta},\widetilde{Y}\right)\right)\left(\nabla_{\theta}\Upsilon\left(\widetilde{\theta};\widetilde{\gamma}\right)\right)$ implies:

\[
\nabla_{\theta}Q\left(\widetilde{\theta},\widetilde{Y}\right)-\left(\nabla_{Y}Q\left(\widetilde{\theta},\widetilde{Y}\right)\right)\left(\nabla_{Y}G\left(\widetilde{Y};\widetilde{\theta}\right)\right)^{-1}\left(\nabla_{\theta}G\left(\widetilde{Y};\widetilde{\theta}\right)\right)=0
\]

In addition, $\Upsilon\left(\widetilde{\theta};\widetilde{\gamma}\right)=\widetilde{Y}$ and Assumption \ref{as:mapping_cdns} (a) implies $G\left(\widetilde{Y};\widetilde{\theta}\right)=0$. 

Therefore, the discussion above implies that there exists $\lambda^{T}=\left(\nabla_{Y}Q\left(\widetilde{\gamma}\right)\right)\left(\nabla_{Y}G\left(\widetilde{\gamma}\right)\right)^{-1}$ such that $\nabla_{\theta}Q\left(\widetilde{\gamma}\right)-\lambda^{T}\nabla_{\theta}G\left(\widetilde{\gamma}\right)=0$ and $\nabla_{Y}Q\left(\widetilde{\gamma}\right)-\lambda^{T}\nabla_{Y}G\left(\widetilde{\gamma}\right)=0$. Hence, $\widetilde{\gamma}\equiv\left(\widetilde{\theta},\widetilde{Y}\right)\in\Gamma_{KKT}$ holds.

(b) Let $\widetilde{\gamma}\equiv\left(\widetilde{\theta},\widetilde{Y}\right)\in\Gamma_{KKT}$. Then, there exist $\lambda$ such that $\nabla_{\theta}Q\left(\widetilde{\gamma}\right)-\lambda^{T}\nabla_{\theta}G\left(\widetilde{\gamma}\right)=0$ and $\nabla_{Y}Q\left(\widetilde{\gamma}\right)-\lambda^{T}\nabla_{Y}G\left(\widetilde{\gamma}\right)=0$, and $G\left(\widetilde{\gamma}\right)=0$.

Then, 

\begin{eqnarray*}
0 & = & \nabla_{\theta}Q\left(\widetilde{\gamma}\right)-\left(\nabla_{Y}Q\left(\widetilde{\theta},\widetilde{Y}\right)\right)\left(\nabla_{Y}G\left(\widetilde{Y};\widetilde{\theta}\right)\right)^{-1}\left(\nabla_{\theta}G\left(\widetilde{Y};\widetilde{\theta}\right)\right)\\
 & = & \nabla_{\theta}\widetilde{Q}\left(\widetilde{\theta},\widetilde{Y}\right)\ \left(\because\text{Assumption \ref{as:mapping_cdns} (c)}\right)
\end{eqnarray*}

Because $\nabla_{\theta}\widetilde{Q}\left(\widetilde{\theta};\widetilde{\gamma}\right)=0$ and $\nabla_{\theta\theta^{\prime}}\widetilde{Q}\left(\widetilde{\theta};\widetilde{\gamma}\right)$ is positive definite, $\widetilde{\theta}=\arg\min_{\theta}\widetilde{Q}\left(\theta;\widetilde{\gamma}\right)$ holds. Because $G\left(\widetilde{\gamma}\right)=0$ implies $\widetilde{Y}=\Upsilon\left(\widetilde{\theta};\widetilde{\gamma}\right)$ by Assumption \ref{as:mapping_cdns}(a), $\widetilde{\gamma}\equiv\left(\widetilde{\theta},\widetilde{Y}\right)\in\Gamma_{KKT}$ is a fixed-point of Algorithm \ref{alg:general_seq}. Hence, $\widetilde{\gamma}\in\Gamma_{seq}$ holds.

(c). By setting $\widetilde{\gamma}=\widehat{\gamma}\in\Gamma_{KKT}$ in the proof of (b), an analogous argument applies, yielding $\widehat{\gamma}=\widetilde{\gamma}\in\Gamma_{seq}$.
\end{proof}

\subsubsection*{Proof of Proposition \ref{prop:FOC-KKT}}
\begin{proof}
(a). Because $\gamma\in\Gamma_{KKT,nonsingular}$ and the corresponding Lagrange multiplier $\lambda_{*}$ satisfies $\nabla_{\gamma}Q\left(\gamma\right)-\lambda_{*}^{T}\left(\nabla_{\gamma}G\left(\gamma\right)\right)=0$, $\nabla_{Y}Q\left(\gamma\right)-\lambda_{*}^{T}\left(\nabla_{Y}G\left(\gamma\right)\right)=0$ holds. Because $\nabla_{Y}G\left(\gamma\right)$ is nonsingular, $\lambda_{*}^{T}=\left(\nabla_{Y}Q\left(\gamma\right)\right)\left(\nabla_{Y}G\left(\gamma\right)\right)^{-1}$ holds. Then, using $\nabla_{\theta}Q\left(\gamma\right)-\lambda_{*}^{T}\left(\nabla_{\theta}G\left(\gamma\right)\right)=0$, we obtain $\nabla_{\theta}Q\left(\theta,Y\right)-\left(\nabla_{Y}Q\left(\theta,Y\right)\right)\left(\nabla_{Y}G\left(Y;\theta\right)\right)^{-1}\left(\nabla_{\theta}G\left(Y;\theta\right)\right)=0$.

(b). It suffices to show that $\Gamma_{FOC}=\Gamma_{KKT,nonsingular}$.

First, we show $\Gamma_{FOC}\subset\Gamma_{KKT,nonsingular}$. Let $\gamma\in\Gamma_{FOC}$. Then, $\nabla_{Y}G\left(\gamma\right)$ is nonsingular, which implies the rows of $\nabla_{Y}G\left(\gamma\right)$ are linearly independent. 

Here, let $\sum_{i=1}^{n_{Y}}c_{i}\left(\nabla_{\gamma}G_{i}\left(\gamma\right)\right)=0$. Then, $\sum_{i=1}^{n_{Y}}c_{i}\left(\nabla_{\theta}G_{i}\left(\gamma\right)\right)=0$ and $\sum_{i=1}^{n_{Y}}c_{i}\left(\nabla_{Y}G_{i}\left(\gamma\right)\right)=0$ hold. The linear independence of the rows of $\nabla_{Y}G\left(\gamma\right)$ implies $c_{i}=0\ \forall i$. Note that this is consistent with $\sum_{i=1}^{n_{Y}}c_{i}\left(\nabla_{\theta}G_{i}\left(\gamma\right)\right)=0$. Hence, the rows of $\nabla_{\gamma}G_{i}\left(\gamma\right)$ are linearly independent. Because $\gamma\in\Gamma_{FOC}$ satisfies $\nabla_{\theta}Q\left(\theta,Y\right)-\left(\nabla_{Y}Q\left(\theta,Y\right)\right)\left(\nabla_{Y}G\left(Y;\theta\right)\right)^{-1}\left(\nabla_{\theta}G\left(Y;\theta\right)\right)=0$ and $G\left(\gamma\right)=0$, $\gamma\in\Gamma_{FOC}\Rightarrow\gamma\in\Gamma_{KKT,nonsingular}$ holds.

Next, we show $\Gamma_{KKT,nonsingular}\subset\Gamma_{FOC}$. Let $\gamma\in\Gamma_{KKT,nonsingular}$. Then, $\nabla_{\theta}Q\left(\theta,Y\right)-\left(\nabla_{Y}Q\left(\theta,Y\right)\right)\left(\nabla_{Y}G\left(Y;\theta\right)\right)^{-1}\left(\nabla_{\theta}G\left(Y;\theta\right)\right)=0$ and $G\left(\gamma\right)=0$ holds as shown in (a), and consequently $\gamma\in\Gamma_{FOC}$ holds.
\end{proof}

\subsubsection*{Proof of Proposition \ref{prop:FOC}}

This follows directly from Proposition \ref{prop:FOC-KKT}.

\subsection{Zero Jacobian property}

\subsubsection{Proof that the SLC mapping satisfies Assumption \ref{as:mapping_cdns}\label{subsec:Proof-SLC-ZJP}}
\begin{proof}
(a).

\begin{eqnarray*}
\Upsilon\left(\theta,\gamma=\left(\theta,Y\right)\right)=Y & \Leftrightarrow & Y-\left(\nabla_{Y}G\left(Y;\theta\right)\right)^{-1}\left(G\left(Y;\theta\right)+\left(\nabla_{\theta}G\left(Y;\theta\right)\right)\left(\theta-\theta\right)\right)=Y\\
 & \Leftrightarrow & G\left(Y;\theta\right)=0\ \left(\because\nabla_{Y}G\left(Y;\theta\right):\text{Nonsingular}\right)
\end{eqnarray*}

(b). First,

\begin{eqnarray*}
 &  & \frac{\partial\Upsilon}{\partial\theta_{0}}\left(\theta,\gamma=\left(\theta_{0},Y_{0}\right)\right)\\
 & = & \frac{\partial}{\partial\theta_{0}}\left[Y_{0}-\left(\nabla_{Y}G\left(Y_{0};\theta_{0}\right)\right)^{-1}\left(G\left(Y_{0};\theta_{0}\right)+\left(\nabla_{\theta}G\left(Y_{0};\theta_{0}\right)\right)\left(\theta-\theta_{0}\right)\right)\right]\\
 & = & -\frac{\partial\left(\left(\nabla_{Y}G\left(Y_{0};\theta_{0}\right)\right)^{-1}\right)}{\partial\theta_{0}}\left[G\left(Y_{0};\theta_{0}\right)+\left(\nabla_{\theta}G\left(Y_{0};\theta_{0}\right)\right)\left(\theta-\theta_{0}\right)\right]\\
 &  & -\left(\nabla_{Y}G\left(Y_{0};\theta_{0}\right)\right)^{-1}\left[\nabla_{\theta}G\left(Y_{0};\theta_{0}\right)+\left(\nabla_{\theta\theta^{\prime}}G_{i}\left(Y_{0};\theta_{0}\right)\left(\theta-\theta_{0}\right)\right)_{i=1,\cdots,n_{Y}}+\left(\nabla_{\theta}G\left(Y_{0};\theta_{0}\right)\right)\cdot(-1)\right].
\end{eqnarray*}
Hence, $G\left(Y;\theta\right)=0$ implies $\left.\frac{\partial\Upsilon}{\partial\theta_{0}}\left(\theta,\gamma=\left(\theta_{0},Y\right)\right)\right|_{\theta=\theta_{0}}=0$.

Next, 

\begin{eqnarray*}
 &  & \frac{\partial\Upsilon}{\partial Y_{0}}\left(\theta,\gamma=\left(\theta_{0},Y_{0}\right)\right)\\
 & = & \frac{\partial}{\partial Y_{0}}\left[Y_{0}-\left(\nabla_{Y}G\left(Y_{0};\theta_{0}\right)\right)^{-1}\left(G\left(Y_{0};\theta_{0}\right)+\left(\nabla_{\theta}G\left(Y_{0};\theta_{0}\right)\right)\left(\theta-\theta_{0}\right)\right)\right]\\
 & = & I-\frac{\partial\left(\left(\nabla_{Y}G\left(Y_{0};\theta_{0}\right)\right)^{-1}\right)}{\partial Y}\left[G\left(Y_{0};\theta_{0}\right)+\left(\nabla_{\theta}G\left(Y_{0};\theta_{0}\right)\right)\left(\theta-\theta_{0}\right)\right]\\
 &  & -\left(\nabla_{Y}G\left(Y_{0};\theta_{0}\right)\right)^{-1}\left[\nabla_{Y}G\left(Y_{0};\theta_{0}\right)+\left(\nabla_{Y\theta^{\prime}}G_{i}\left(Y_{0};\theta_{0}\right)\right)_{i=1,\cdots,n_{Y}}\left(\theta-\theta_{0}\right)\right].
\end{eqnarray*}
Hence, $G\left(Y;\theta\right)=0$ implies $\frac{\partial\Upsilon}{\partial Y}\left(\theta,\gamma=\left(\theta,Y\right)\right)=0$.

(c). Because $\Upsilon\left(\theta,\gamma=\left(\theta_{0},Y_{0}\right)\right)=Y-\left(\nabla_{Y}G\left(Y_{0};\theta_{0}\right)\right)^{-1}\left(G\left(Y_{0};\theta_{0}\right)+\left(\nabla_{\theta}G\left(Y_{0};\theta_{0}\right)\right)\left(\theta-\theta_{0}\right)\right)$,

$\nabla_{\theta}\Upsilon\left(\theta;\gamma=\left(\theta,Y\right)\right)=-\left(\nabla_{Y}G\left(Y;\theta\right)\right)^{-1}\left(\nabla_{\theta}G\left(Y;\theta\right)\right)$.
\end{proof}

\subsubsection{Proof that the NPL mapping with \textquotedblleft Zero Jacobian property\textquotedblright{} satisfies Assumption \ref{as:mapping_cdns}\label{subsec:Proof-NPL-ZJP}}
\begin{proof}
(a).

\begin{eqnarray*}
\Upsilon\left(\theta,\gamma=\left(\theta,Y\right)\right)=Y & \Leftrightarrow & \Phi\left(Y;\theta\right)=Y\\
 & \Leftrightarrow & G\left(Y;\theta\right)=0
\end{eqnarray*}

(b). First, $\frac{\partial\Upsilon}{\partial\theta_{0}}\left(\theta,\gamma=\left(\theta_{0},Y_{0}\right)\right)=\frac{\partial\Phi\left(Y_{0};\theta_{0}\right)}{\partial\theta_{0}}=0$ holds.

In addition, $G\left(Y;\theta\right)=0$ implies $\left.\frac{\partial\Upsilon}{\partial Y}\left(\theta,\theta_{0},Y\right)\right|_{\theta=\theta_{0}}=\nabla_{Y}\Phi(Y;\theta_{0})=0$, because $\Phi$ should satisfy $\Phi\left(Y;\theta\right)=Y\Rightarrow\nabla_{Y}\Phi\left(Y;\theta\right)=0$.

(c). 

\begin{eqnarray*}
\nabla_{\theta}\Upsilon\left(\theta;\gamma=\left(\theta,Y\right)\right) & = & \nabla_{\theta}\Phi\left(Y;\theta\right)\\
 & = & -\left(\nabla_{\theta}G\left(Y;\theta\right)\right)\ \left(\because G\left(Y;\theta\right)=Y-\Phi\left(Y;\theta\right)\right)
\end{eqnarray*}

Because $G\left(Y;\theta\right)=0\Leftrightarrow\Phi\left(Y;\theta\right)=Y\text{\ implies\ }\nabla_{Y}\Phi\left(Y;\theta\right)=0\Leftrightarrow\nabla_{Y}G\left(Y;\theta\right)=I$, $G\left(Y;\theta\right)=0$ implies $\nabla_{\theta}\Upsilon\left(\theta;\gamma=\left(\theta,Y\right)\right)=-\left(\nabla_{Y}G\left(Y;\theta\right)\right)^{-1}\left(\nabla_{\theta}G\left(Y;\theta\right)\right)$.
\end{proof}

\subsection{Convexity}

\subsubsection*{Proof of Lemma \ref{lem:difference-Hessian-NFXP}}
\begin{proof}
First, by $\nabla_{\theta}\widetilde{Q}\left(\theta,\gamma\right)=\nabla_{\theta}Q\left(\theta,\Upsilon\left(\theta,\gamma\right)\right)+\left(\nabla_{Y}Q\left(\theta,\Upsilon\left(\theta,\gamma\right)\right)\right)\left(\nabla_{\theta}\Upsilon\left(\theta,\gamma\right)\right)$,
\begin{eqnarray*}
\nabla_{\theta\theta^{\prime}}\widetilde{Q}\left(\theta,\gamma\right) & = & \left(\nabla_{\theta\theta^{\prime}}Q\left(\theta,\Upsilon\left(\theta,\gamma\right)\right)\right)+\left(\nabla_{\theta Y^{\prime}}Q\left(\theta,\Upsilon\left(\theta,\gamma\right)\right)\right)\left(\nabla_{\theta}\Upsilon\left(\theta,\gamma\right)\right)+\\
 &  & \left(\nabla_{\theta Y^{\prime}}Q\left(\theta,\Upsilon\left(\theta,\gamma\right)\right)\right)\left(\nabla_{\theta}\Upsilon\left(\theta;\gamma\right)\right)+\sum_{i=1}^{n_{Y}}\frac{\partial Q\left(\theta,\Upsilon\left(\theta,\gamma\right)\right)}{\partial Y_{i}}\left(\nabla_{\theta\theta^{\prime}}\Upsilon_{i}\left(\theta;\gamma\right)\right)+\\
 &  & \left(\nabla_{\theta^{\prime}}\Upsilon\left(\theta;\gamma\right)\right)\left(\nabla_{YY^{\prime}}Q\left(\theta,\Upsilon\left(\theta,\gamma\right)\right)\right)\left(\nabla_{\theta}\Upsilon\left(\theta;\gamma\right)\right).
\end{eqnarray*}

Next, 

\begin{eqnarray*}
\nabla_{\theta\theta^{\prime}}Q_{NFXP}\left(\theta\right) & = & \frac{d}{d\theta}\left[\nabla_{\theta^{\prime}}Q\left(\theta,\mathcal{Y}\left(\theta\right)\right)+\left(\nabla_{\theta^{\prime}}\mathcal{Y}\left(\theta\right)\right)\left(\nabla_{Y^{\prime}}Q\left(\theta,\mathcal{Y}\left(\theta\right)\right)\right)\right]\\
 & = & \left(\nabla_{\theta\theta^{\prime}}Q\left(\theta,\mathcal{Y}\left(\theta\right)\right)\right)+\left(\nabla_{\theta Y^{\prime}}Q\left(\theta,\mathcal{Y}\left(\theta\right)\right)\right)\left(\nabla_{\theta}\mathcal{Y}\left(\theta\right)\right)+\\
 &  & \left(\nabla_{\theta Y^{\prime}}Q\left(\theta,\mathcal{Y}\left(\theta\right)\right)\right)\left(\nabla_{\theta}\mathcal{Y}\left(\theta\right)\right)+\sum_{i=1}^{n_{Y}}\frac{\partial Q\left(\theta,\mathcal{Y}\left(\theta\right)\right)}{\partial Y_{i}}\left(\nabla_{\theta\theta^{\prime}}\mathcal{Y}_{i}\left(\theta\right)\right)+\\
 &  & \left(\nabla_{\theta^{\prime}}\mathcal{Y}\left(\theta\right)\right)\left(\nabla_{YY^{\prime}}Q\left(\theta,\mathcal{Y}\left(\theta\right)\right)\right)\left(\nabla_{\theta}\mathcal{Y}\left(\theta\right)\right).
\end{eqnarray*}

Here, 

\begin{eqnarray*}
 &  & \left(\nabla_{\theta^{\prime}}\Upsilon\left(\widehat{\theta};\widehat{\gamma}\right)\right)\left(\nabla_{YY^{\prime}}Q\left(\widehat{\theta},\widehat{Y}\right)\right)\left(\nabla_{\theta}\Upsilon\left(\widehat{\theta};\widehat{\gamma}\right)\right)-\left(\nabla_{\theta^{\prime}}\mathcal{Y}\left(\widehat{\theta}\right)\right)\left(\nabla_{YY^{\prime}}Q\left(\widehat{\theta},\widehat{Y}\right)\right)\left(\nabla_{\theta}\mathcal{Y}\left(\widehat{\theta}\right)\right)\\
 & = & \left(\nabla_{\theta^{\prime}}\Upsilon\left(\widehat{\theta};\widehat{\gamma}\right)+\nabla_{\theta^{\prime}}\mathcal{Y}\left(\widehat{\theta}\right)\right)\left(\nabla_{YY^{\prime}}Q\left(\widehat{\theta},\widehat{Y}\right)\right)\left(\nabla_{\theta}\Upsilon\left(\widehat{\theta};\widehat{\gamma}\right)-\nabla_{\theta}\mathcal{Y}\left(\widehat{\theta}\right)\right)
\end{eqnarray*}
holds because the Hessian $\nabla_{YY^{\prime}}Q\left(\widehat{\theta},\widehat{Y}\right)$ is a symmetric matrix. Then, under $\mathcal{Y}\left(\widehat{\theta}\right)=\Upsilon\left(\widehat{\theta};\widehat{\gamma}\right)=\widehat{Y}$,

\begin{eqnarray*}
 &  & \nabla_{\theta\theta^{\prime}}\widetilde{Q}\left(\widehat{\theta};\widehat{\gamma}\right)-\nabla_{\theta\theta^{\prime}}Q_{NFXP}\left(\widehat{\theta}\right)\\
 & = & \sum_{i=1}^{n_{Y}}\frac{\partial Q\left(\widehat{\theta},\widehat{Y}\right)}{\partial Y_{i}}\left(\nabla_{\theta\theta^{\prime}}\Upsilon_{i}(\widehat{\theta};\widehat{\gamma})-\nabla_{\theta\theta^{\prime}}\mathcal{Y}_{i}(\widehat{\theta})\right)+\\
 &  & 2\left(\nabla_{\theta Y^{\prime}}Q\left(\widehat{\theta},\widehat{Y}\right)\right)\left(\nabla_{\theta}\Upsilon(\widehat{\theta};\widehat{\gamma})-\nabla_{\theta}\mathcal{Y}(\widehat{\theta})\right)+\\
 &  & \left(\nabla_{\theta^{\prime}}\Upsilon\left(\widehat{\theta};\widehat{\gamma}\right)+\nabla_{\theta^{\prime}}\mathcal{Y}\left(\widehat{\theta}\right)\right)\left(\nabla_{YY^{\prime}}Q\left(\widehat{\theta},\widehat{Y}\right)\right)\left(\nabla_{\theta}\Upsilon\left(\widehat{\theta};\widehat{\gamma}\right)-\nabla_{\theta}\mathcal{Y}\left(\widehat{\theta}\right)\right)
\end{eqnarray*}

Therefore,

\begin{eqnarray*}
\nabla_{\theta\theta^{\prime}}\widetilde{Q}\left(\widehat{\theta};\widehat{\gamma}\right)-\nabla_{\theta\theta^{\prime}}Q_{NFXP}\left(\widehat{\theta}\right) & = & \sum_{i=1}^{n_{Y}}\frac{\partial Q\left(\widehat{\theta},\widehat{Y}\right)}{\partial Y_{i}}\left(\nabla_{\theta\theta^{\prime}}\Upsilon_{i}(\widehat{\theta};\widehat{\gamma})-\nabla_{\theta\theta^{\prime}}\mathcal{Y}_{i}(\widehat{\theta})\right)
\end{eqnarray*}
holds by $\nabla_{\theta}\widehat{Y}\left(\widehat{\theta}\right)=-\left(\nabla_{Y}G\left(\widehat{Y};\widehat{\theta}\right)\right)^{-1}\left(\nabla_{\theta}G\left(\widehat{Y};\widehat{\theta}\right)\right)=\nabla_{\theta}\Upsilon\left(\widehat{\theta};\widehat{\gamma}\right)$ under Assumption \ref{as:mapping_cdns}. 
\end{proof}

\subsection{Asymptotic property (Proof of Proposition \ref{prop: asy_property_consistent_initial_values})}
\begin{proof}
(1). We show the claim by induction.\footnote{The strategy of the proof of (1) is the same as the one for Theorem 1(1) of \citet{dearing2024efficient}.} First, we have uniform continuity of $\widetilde{Q^{*}}\left(\theta;\gamma\right)\equiv Q^{*}\left(\theta,\Upsilon\left(\theta,\gamma\right)\right)$ and $\widetilde{Q}\left(\theta;\gamma\right)$ converges almost surely and uniformly in $\left(\theta,\gamma\right)\in\Theta\times\left(\Theta\times\mathcal{Y}\right)$ to $\widetilde{Q^{*}}\left(\theta;\gamma\right)$. Here, suppose $\gamma_{k-1}$ converges almost surely to $\gamma^{*}$. By Lemma 24.1 of \citet{gourieroux1995statistics}, $Q\left(\theta,\gamma_{k-1}\right)$ converges almost surely and uniformly to $\gamma^{*}\in\Theta\times\mathcal{Y}$. Because $\theta^{*}$ uniquely minimizes $\widetilde{Q^{*}}\left(\theta;\gamma^{*}\right)$ on $\Theta$, Property 24.2 of \citet{gourieroux1995statistics} implies that $\theta_{k}$ converges almost surely to $\theta^{*}$. Continuity of $\Upsilon\left(\theta,\gamma\right)$ and the Mann-Wald theorem then give almost sure convergence of $Y_{k}$ to $Y^{*}$.

(2).\footnote{The proof of (2) is motivated by the one for proving Proposition 2 of \citet{lee2015computationally}, which shows the asymptotic equivalence of the NFXP estimator and the sequential algorithm (ABLP) in the context of the static BLP model.} 

Because $\theta_{k}=\arg\min_{\theta}\widetilde{Q}\left(\theta;\gamma_{k-1}\right)$, $0=\nabla_{\theta}\widetilde{Q}\left(\theta_{k};\gamma_{k-1}\right)$ holds. By Taylor's theorem, there exists $c\in[0,1]$ such that $\overline{\theta}=c\theta^{*}+(1-c)\theta_{k}$, $\overline{\gamma}=c\gamma^{*}+(1-c)\gamma_{k-1}$, and 

\[
0=\nabla_{\theta}\widetilde{Q}\left(\theta^{*};\gamma^{*}\right)+\left(\nabla_{\theta\theta^{\prime}}\widetilde{Q}\left(\overline{\theta};\overline{\gamma}\right)\right)\left(\theta_{k}-\theta^{*}\right)+\nabla_{\theta\gamma^{\prime}}\widetilde{Q}\left(\overline{\theta};\overline{\gamma}\right)\left(\gamma_{k-1}-\gamma^{*}\right)
\]

Then, we have:

\begin{eqnarray*}
\sqrt{N}\left(\theta_{k}-\theta^{*}\right) & = & -\left(\nabla_{\theta\theta^{\prime}}\widetilde{Q}\left(\overline{\theta};\overline{\gamma}\right)\right)^{-1}\left[\sqrt{N}\left(\nabla_{\theta}\widetilde{Q}\left(\theta^{*};\gamma^{*}\right)\right)\right]-\\
 &  & \left(\nabla_{\theta\theta^{\prime}}\widetilde{Q}\left(\overline{\theta};\overline{\gamma}\right)\right)^{-1}\left[\left(\nabla_{\theta\gamma^{\prime}}\widetilde{Q}\left(\overline{\theta};\overline{\gamma}\right)\right)\sqrt{N}\left(\gamma_{k-1}-\gamma^{*}\right)\right]
\end{eqnarray*}

Hence, it suffices to show (a). $-\left(\nabla_{\theta\theta^{\prime}}\widetilde{Q}\left(\overline{\theta};\overline{\gamma}\right)\right)^{-1}\left[\sqrt{N}\left(\nabla_{\theta}\widetilde{Q}\left(\theta^{*};\gamma^{*}\right)\right)\right]\rightarrow_{d}N\left(0,\Sigma\right)$ and (b). \textbf{$\left(\nabla_{\theta\theta^{\prime}}\widetilde{Q}\left(\overline{\theta},\overline{\gamma}\right)\right)^{-1}\left[\left(\nabla_{\theta\gamma^{\prime}}\widetilde{Q}\left(\overline{\theta},\overline{\gamma}\right)\right)\sqrt{N}\left(\gamma_{k-1}-\gamma^{*}\right)\right]\rightarrow_{p}0$}.

\textbf{(a). Proof of $-\left(\nabla_{\theta\theta^{\prime}}\widetilde{Q}\left(\overline{\theta};\overline{\gamma}\right)\right)^{-1}\left[\sqrt{N}\left(\nabla_{\theta}\widetilde{Q}\left(\theta^{*};\gamma^{*}\right)\right)\right]\rightarrow_{d}N\left(0,\Sigma\right)$:}

By Lemma \ref{lem:difference-Hessian-NFXP}, 
\begin{eqnarray*}
\nabla_{\theta\theta^{\prime}}\widetilde{Q}\left(\widehat{\theta};\widehat{\gamma}\right)-\nabla_{\theta\theta^{\prime}}Q_{NFXP}\left(\widehat{\theta}\right) & = & \sum_{i=1}^{n_{Y}}\frac{\partial Q\left(\widehat{\theta},\widehat{Y}\right)}{\partial Y_{i}}\left(\nabla_{\theta\theta^{\prime}}\Upsilon_{i}(\widehat{\theta};\widehat{\gamma})-\nabla_{\theta\theta^{\prime}}\mathcal{Y}_{i}(\widehat{\theta})\right)
\end{eqnarray*}

By the discussion in Section \ref{subsec:dQ_dY}, $\nabla_{Y}Q\left(\widehat{\theta},\widehat{Y}\right)\rightarrow_{p}0$ holds. Because $\overline{\theta}=c\theta^{*}+(1-c)\theta_{k},\overline{\gamma}=c\gamma^{*}+(1-c)\gamma_{k-1}$ where $c\in[0,1]$, $\theta_{k}\rightarrow_{p}\theta^{*}$ and $\gamma_{k}\rightarrow_{p}\gamma^{*}$ , $\overline{\theta}\rightarrow_{p}\theta^{*}$ and $\overline{\gamma}\rightarrow_{p}\gamma^{*}$ hold. Hence, by the boundedness of $\nabla_{\theta\theta^{\prime}}\Upsilon_{i}(\widehat{\theta};\widehat{\gamma})-\nabla_{\theta\theta^{\prime}}\widehat{Y}_{i}(\widehat{\theta})$, 

\begin{eqnarray}
\nabla_{\theta\theta^{\prime}}\widetilde{Q}\left(\overline{\theta};\overline{\gamma}\right)-\nabla_{\theta\theta^{\prime}}Q_{NFXP}\left(\widehat{\theta}\right) & \rightarrow_{p} & 0.\label{eq:Hessian_asy_equiv}
\end{eqnarray}

$\widehat{\gamma}$ satisfies $0=\nabla_{\theta}Q_{NFXP}\left(\widehat{\theta}\right)$. Then, if $\widehat{\gamma}$ and $\gamma^{*}$ are sufficiently close, the following holds by the mean value theorem: 

\begin{eqnarray*}
0 & = & \nabla_{\theta}Q_{NFXP}\left(\theta^{*}\right)+\left(\nabla_{\theta\theta^{\prime}}Q_{NFXP}\left(\overline{\theta}\right)\right)\left(\widehat{\theta}-\theta^{*}\right)
\end{eqnarray*}
where $\overline{\theta}=c\widehat{\theta}+(1-c)\theta^{*}$, $c\in[0,1]$. Then, 

\begin{eqnarray}
\sqrt{N}\left(\widehat{\theta}-\theta^{*}\right) & = & -\left(\nabla_{\theta\theta^{\prime}}Q_{NFXP}\left(\overline{\theta}\right)\right)^{-1}\left(\sqrt{N}\nabla_{\theta}Q_{NFXP}\left(\theta^{*}\right)\right)\label{eq:asy_normal_NFXP}
\end{eqnarray}

In addition, the following holds by $G\left(\theta^{*},\gamma^{*}\right)=0$:{\footnotesize
\begin{eqnarray}
 &  & \nabla_{\theta}\widetilde{Q}\left(\theta^{*};\gamma^{*}\right)\nonumber \\
 & = & \nabla_{\theta}Q\left(\theta^{*},\Upsilon\left(\theta;\gamma^{*}\right)\right)+\left(\nabla_{Y}Q\left(\theta^{*},\Upsilon\left(\theta^{*},\gamma^{*}\right)\right)\right)\left(\nabla_{\theta}\Upsilon\left(\theta^{*};\gamma^{*}\right)\right)\nonumber \\
 & = & \nabla_{\theta}Q\left(\theta^{*},\Upsilon\left(\theta^{*};\gamma^{*}\right)\right)-\nonumber \\
 &  & \left(\nabla_{Y}Q\left(\theta^{*},\Upsilon\left(\theta^{*};\gamma^{*}\right)\right)\right)\left(\nabla_{Y}G\left(Y^{*};\theta^{*}\right)\right)^{-1}\left(\nabla_{\theta}G\left(Y^{*};\theta^{*}\right)\right)\ \left(\because\text{Assumption}\ \text{\ref{as:mapping_cdns}}\right)\nonumber \\
 & = & \nabla_{\theta}Q_{NFXP}\left(\theta^{*}\right).\label{eq:seq_NFXP_FOC_equiv}
\end{eqnarray}
}{\footnotesize\par}

Hence, by $\sqrt{N}\nabla_{\theta}Q_{NFXP}\left(\theta^{*}\right)=O_{p}\left(1\right)$,

{\footnotesize
\begin{eqnarray*}
 &  & \sqrt{N}\left(\widehat{\theta}-\theta^{*}\right)+\left(\nabla_{\theta\theta^{\prime}}\widetilde{Q}\left(\overline{\theta};\overline{\gamma}\right)\right)^{-1}\left[\sqrt{N}\left(\nabla_{\theta}\widetilde{Q}\left(\theta^{*};\gamma^{*}\right)\right)\right]\\
 & = & -\left(\nabla_{\theta\theta^{\prime}}Q_{NFXP}\left(\overline{\theta}\right)\right)^{-1}\left(\sqrt{N}\nabla_{\theta}Q_{NFXP}\left(\theta^{*}\right)\right)+\left(\nabla_{\theta\theta^{\prime}}\widetilde{Q}\left(\overline{\theta};\overline{\gamma}\right)\right)^{-1}\left[\sqrt{N}\left(\nabla_{\theta}\widetilde{Q}\left(\theta^{*};\gamma^{*}\right)\right)\right]\ \left(\because(\ref{eq:asy_normal_NFXP})\right)\\
 & = & \left[\left(\nabla_{\theta\theta^{\prime}}\widetilde{Q}\left(\overline{\theta};\overline{\gamma}\right)\right)^{-1}-\left(\nabla_{\theta\theta^{\prime}}Q_{NFXP}\left(\overline{\theta}\right)\right)^{-1}\right]\sqrt{N}\nabla_{\theta}Q_{NFXP}\left(\theta^{*}\right)+\\
 &  & \left(\nabla_{\theta\theta^{\prime}}\widetilde{Q}\left(\overline{\theta};\overline{\gamma}\right)\right)^{-1}\sqrt{N}\left[\left(\nabla_{\theta}\widetilde{Q}\left(\theta^{*};\gamma^{*}\right)\right)-\nabla_{\theta}Q_{NFXP}\left(\theta^{*}\right)\right]\\
 & \rightarrow_{p} & 0\ \left(\because(\ref{eq:Hessian_asy_equiv}),(\ref{eq:seq_NFXP_FOC_equiv})\right)
\end{eqnarray*}
}{\footnotesize\par}

\textbf{(b). Proof of $\left(\nabla_{\theta\theta^{\prime}}\widetilde{Q}\left(\overline{\theta},\overline{\gamma}\right)\right)^{-1}\left[\left(\nabla_{\theta\gamma^{\prime}}\widetilde{Q}\left(\overline{\theta},\overline{\gamma}\right)\right)\sqrt{N}\left(\gamma_{k-1}-\gamma^{*}\right)\right]\rightarrow_{p}0$:}

By (\ref{eq:d2Q_d2theta_gamma_at_sol}), $\nabla_{\theta\gamma^{\prime}}\widetilde{Q}\left(\widehat{\theta},\widehat{\gamma}\right)=\left(\nabla_{Y}Q\left(\widehat{\theta},\widehat{Y}\right)\right)\left(\nabla_{\theta\gamma^{\prime}}\Upsilon\left(\widehat{\theta},\widehat{\gamma}\right)\right)$ holds. By the discussions in Section \ref{subsec:dQ_dY}, $\nabla_{Y}Q\left(\widehat{\theta},\widehat{Y}\right)=O_{p}\left(N^{-\frac{1}{2}}\right)$ holds, which implies $\nabla_{\theta\gamma^{\prime}}\widetilde{Q}\left(\widehat{\theta},\widehat{\gamma}\right)=O_{p}\left(N^{-\frac{1}{2}}\right)$ by the boundedness of $\nabla_{\theta\gamma^{\prime}}\Upsilon\left(\widehat{\theta},\widehat{\gamma}\right)$. Because $\overline{\theta}-\widehat{\theta}\rightarrow_{p}0$ and $\overline{\gamma}-\widehat{\gamma}\rightarrow_{p}0$ hold, $\sqrt{N}\nabla_{\theta\gamma^{\prime}}\widetilde{Q}\left(\overline{\theta},\overline{\gamma}\right)=O_{p}(1)$.

In addition, $\gamma_{k-1}-\gamma^{*}=o_{p}(1)$ holds by the consistency of $\gamma_{k-1}$. Consequently, due to the boundedness of $\left(\nabla_{\theta\theta^{\prime}}\widetilde{Q}\left(\overline{\theta},\overline{\gamma}\right)\right)^{-1}$,

\[
\left(\nabla_{\theta\theta^{\prime}}\widetilde{Q}\left(\overline{\theta},\overline{\gamma}\right)\right)^{-1}\left[\sqrt{N}\left(\nabla_{\theta\gamma^{\prime}}\widetilde{Q}\left(\overline{\theta},\overline{\gamma}\right)\right)\right]\left(\gamma_{k-1}-\gamma^{*}\right)\rightarrow_{p}0.
\]

(3). We show the statement by induction. First, $O_{p}\left(N^{-\frac{1}{2}}\left\Vert \gamma_{0}-\widehat{\gamma}\right\Vert \right)=O_{p}\left(N^{-(b+\frac{1}{2})}\right)$ and $\left\Vert \gamma_{0}-\widehat{\gamma}\right\Vert ^{2}=O_{p}\left(N^{-2b}\right)$ hold by the assumption of $\gamma_{0}-\widehat{\gamma}=O_{p}\left(N^{-b}\right)$. Then, by Proposition \ref{prop: asy_property} (1), 
\begin{eqnarray*}
\gamma_{1}-\widehat{\gamma} & = & O_{p}\left(N^{-\frac{1}{2}}\left\Vert \gamma_{0}-\widehat{\gamma}\right\Vert +\left\Vert \gamma_{0}-\widehat{\gamma}\right\Vert ^{2}\right)\\
 & = & O_{p}\left(N^{-2b}\right)\ \left(\because-\frac{1}{2}-b\leq-2b\ \text{under}\ b\leq\frac{1}{2}\right)
\end{eqnarray*}
 holds. 

Next, we assume $\gamma_{k}-\widehat{\gamma}=O_{p}\left(N^{-(k-1)/2-2b}\right)$ $\left(k\geq1\right)$. Then, $O_{p}\left(N^{-\frac{1}{2}}\left\Vert \gamma_{k}-\widehat{\gamma}\right\Vert \right)=O_{p}\left(N^{-k/2-2b}\right)$ and $O_{p}\left(\left\Vert \gamma_{k}-\widehat{\gamma}\right\Vert ^{2}\right)=O_{p}\left(N^{-(k-1)-4b}\right)$ hold, which imply:

\begin{eqnarray*}
\gamma_{k+1}-\widehat{\gamma} & = & O_{p}\left(N^{-\frac{1}{2}}\left\Vert \gamma_{k}-\widehat{\gamma}\right\Vert +\left\Vert \gamma_{k}-\widehat{\gamma}\right\Vert ^{2}\right)\\
 & = & O_{p}\left(N^{-2b}\right)\ \left(\because-k/2-2b\geq-(k-1)-4b\ \text{under}\ b>\frac{1}{4}\right)
\end{eqnarray*}

Hence, $\gamma_{k}-\widehat{\gamma}=O_{p}\left(N^{-(k-1)/2-2b}\right)\Rightarrow\gamma_{k}-\widehat{\gamma}=O_{p}\left(N^{-((k+1)-1)/2-2b}\right)$ holds, and the statement is proved.
\end{proof}
\bibliographystyle{apalike}
\bibliography{literature}

@article{bray2019markov,
  title={Markov decision processes with exogenous variables},
  author={Bray, Robert L},
  journal={Management Science},
  volume={65},
  number={10},
  pages={4598--4606},
  year={2019},
  publisher={INFORMS}
}

@article{gallant2022constrained,
  title={Constrained estimation using penalization and MCMC},
  author={Gallant, A Ronald and Hong, Han and Leung, Michael P and Li, Jessie},
  journal={Journal of Econometrics},
  volume={228},
  number={1},
  pages={85--106},
  year={2022},
  publisher={Elsevier}
}

@article{aguirregabiria2026nested,
  title={{Nested Pseudo-GMM Estimation of Demand for Differentiated Products}},
  author={Aguirregabiria, Victor and Liu, Hui and Luo, Yao},
  journal={arXiv preprint arXiv:2602.05137},
  year={2026}
}

@article{li2000derivative,
  title={A derivative-free line search and global convergence of Broyden-like method for nonlinear equations},
  author={Li, Dong-Hui and Fukushima, Masao},
  journal={Optimization methods and software},
  volume={13},
  number={3},
  pages={181--201},
  year={2000},
  publisher={Taylor \& Francis}
}

@article{chen2025derivative,
  title={A Derivative-Free Regularized Primal-Dual Interior-Point Algorithm for Constrained Nonlinear Least Squares Problems},
  author={Chen, Xi and Fan, Jinyan},
  journal={Journal of Scientific Computing},
  volume={103},
  number={2},
  pages={48},
  year={2025},
  publisher={Springer}
}

@article{larson2019derivative,
  title={Derivative-free optimization methods},
  author={Larson, Jeffrey and Menickelly, Matt and Wild, Stefan M},
  journal={Acta Numerica},
  volume={28},
  pages={287--404},
  year={2019},
  publisher={Cambridge University Press}
}

@article{heinkenschloss2014matrix,
  title={A matrix-free trust-region SQP method for equality constrained optimization},
  author={Heinkenschloss, Matthias and Ridzal, Denis},
  journal={SIAM Journal on Optimization},
  volume={24},
  number={3},
  pages={1507--1541},
  year={2014},
  publisher={SIAM}
}

@article{arreckx2018regularized,
  title={A regularized factorization-free method for equality-constrained optimization},
  author={Arreckx, Sylvain and Orban, Dominique},
  journal={SIAM Journal on Optimization},
  volume={28},
  number={2},
  pages={1613--1639},
  year={2018},
  publisher={SIAM}
}

@article{monardo2025comparing,
  title={Comparing Methods for Estimating Random Coefficient Logit Demand Models},
  author={Monardo, Julien},
  year={2025}
}

@article{aitchison1958maximum,
  title={Maximum-likelihood estimation of parameters subject to restraints},
  author={Aitchison, John and Silvey, SD},
  journal={The annals of mathematical Statistics},
  pages={813--828},
  year={1958},
  publisher={JSTOR}
}

@article{bugni2021iterated,
  title={{On the iterated estimation of dynamic discrete choice games}},
  author={Bugni, Federico A and Bunting, Jackson},
  journal={The Review of Economic Studies},
  volume={88},
  number={3},
  pages={1031--1073},
  year={2021},
  publisher={Oxford University Press}
}

@article{hall1974estimation,
  title={{Estimation and inference in nonlinear structural models}},
  author={Hall, BH and Hall, RE and Hausman, JA},
  journal={Annals of economic and social measurement},
  volume={3},
  pages={653--666},
  year={1974}
}

@article{belloni2018high,
  title={{High-dimensional econometrics and regularized GMM}},
  author={Belloni, Alexandre and Chernozhukov, Victor and Chetverikov, Denis and Hansen, Christian and Kato, Kengo},
  journal={arXiv preprint arXiv:1806.01888},
  year={2018}
}

@book{gourieroux1995statistics,
  title={{Statistics and econometric models}},
  author={Gourieroux, Christian and Monfort, Alain},
  volume={1},
  year={1995},
  publisher={Cambridge University Press}
}

@article{robinson1972quadratically,
  title={{A quadratically-convergent algorithm for general nonlinear programming problems}},
  author={Robinson, Stephen M},
  journal={Mathematical programming},
  volume={3},
  number={1},
  pages={145--156},
  year={1972},
  publisher={Springer}
}

@article{chernozhukov2018double,
    author = {Chernozhukov, Victor and Chetverikov, Denis and Demirer, Mert and Duflo, Esther and Hansen, Christian and Newey, Whitney and Robins, James},
    title = {{Double/debiased machine learning for treatment and structural parameters}},
    journal = {The Econometrics Journal},
    volume = {21},
    number = {1},
    pages = {C1-C68},
    year = {2018},
    month = {01},
    issn = {1368-4221},
    doi = {10.1111/ectj.12097},
}

@article{kasahara2008pseudo,
  title={{Pseudo-likelihood estimation and bootstrap inference for structural discrete Markov decision models}},
  author={Kasahara, Hiroyuki and Shimotsu, Katsumi},
  journal={Journal of Econometrics},
  volume={146},
  number={1},
  pages={92--106},
  year={2008},
  publisher={Elsevier}
}

@article{chernozhukov2022locally,
  title={{Locally robust semiparametric estimation}},
  author={Chernozhukov, Victor and Escanciano, Juan Carlos and Ichimura, Hidehiko and Newey, Whitney K and Robins, James M},
  journal={Econometrica},
  volume={90},
  number={4},
  pages={1501--1535},
  year={2022},
  publisher={Wiley Online Library}
}

@article{sawadogo2025efficient,
  title={{Efficient and Debiased Estimation of Dynamic Discrete Choice Models}},
  author={Sawadogo, Sidi},
  journal={Available at SSRN 5325574},
  year={2025}
}

@article{pollock2021anderson,
  title={{Anderson acceleration for contractive and noncontractive operators}},
  author={Pollock, Sara and Rebholz, Leo G},
  journal={IMA Journal of Numerical Analysis},
  volume={41},
  number={4},
  pages={2841--2872},
  year={2021},
  publisher={Oxford University Press}
}

@article{solodov2009global,
  title={{Global convergence of an SQP method without boundedness assumptions on any of the iterative sequences}},
  author={Solodov, Mikhail V},
  journal={Mathematical programming},
  volume={118},
  number={1},
  pages={1--12},
  year={2009},
  publisher={Springer}
}

@article{murtagh1982projected,
  title={{A projected Lagrangian algorithm and its implementation for sparse nonlinear constraints}},
  author={Murtagh, Bruce A and Saunders, Michael A},
  journal={Mathematical Programming Study},
  volume={16},
  pages={84--117},
  year={1982},
  publisher={Springer}
}

@article{yamaguchi2019effects,
  title={{Effects of parental leave policies on female career and fertility choices}},
  author={Yamaguchi, Shintaro},
  journal={Quantitative Economics},
  volume={10},
  number={3},
  pages={1195--1232},
  year={2019},
  publisher={Wiley Online Library}
}

@article{imai2009bayesian,
  title={{Bayesian estimation of dynamic discrete choice models}},
  author={Imai, Susumu and Jain, Neelam and Ching, Andrew},
  journal={Econometrica},
  volume={77},
  number={6},
  pages={1865--1899},
  year={2009},
  publisher={Wiley Online Library}
}

@article{egesdal2015estimating,
  title={{Estimating dynamic discrete-choice games of incomplete information}},
  author={Egesdal, Michael and Lai, Zhenyu and Su, Che-Lin},
  journal={Quantitative Economics},
  volume={6},
  number={3},
  pages={567--597},
  year={2015},
  publisher={Wiley Online Library}
}

@article{hotz1993conditional,
  title={{Conditional choice probabilities and the estimation of dynamic models}},
  author={Hotz, V Joseph and Miller, Robert A},
  journal={The Review of Economic Studies},
  volume={60},
  number={3},
  pages={497--529},
  year={1993},
  publisher={Wiley-Blackwell}
}

@book{saad2003iterative,
  title={{Iterative methods for sparse linear systems}},
  author={Saad, Yousef},
  year={2003},
  publisher={SIAM}
}

@article{lu2023semi,
  title={{Semi-nonparametric estimation of random coefficients logit model for aggregate demand}},
  author={Lu, Zhentong and Shi, Xiaoxia and Tao, Jing},
  journal={Journal of Econometrics},
  volume={235},
  number={2},
  pages={2245--2265},
  year={2023},
  publisher={Elsevier}
}

@article{foster2023orthogonal,
  title={{Orthogonal statistical learning}},
  author={Foster, Dylan J and Syrgkanis, Vasilis},
  journal={The Annals of Statistics},
  volume={51},
  number={3},
  pages={879--908},
  year={2023},
  publisher={Institute of Mathematical Statistics}
}

@article{dearing2024efficient,
  title={{Efficient and convergent sequential pseudo-likelihood estimation of dynamic discrete games}},
  author={Dearing, Adam and Blevins, Jason R},
  journal={Review of Economic Studies},
  volume={92},
  number={2},
  pages={981--1021},
  year={2025},
  publisher={Oxford University Press UK}
}

@article{su2014estimating,
  title={{Estimating discrete-choice games of incomplete information: Simple static examples}},
  author={Su, Che-Lin},
  journal={Quantitative Marketing and Economics},
  volume={12},
  pages={167--207},
  year={2014},
  publisher={Springer}
}

@article{su2012constrained,
  title={{Constrained optimization approaches to estimation of structural models}},
  author={Su, Che-Lin and Judd, Kenneth L},
  journal={Econometrica},
  volume={80},
  number={5},
  pages={2213--2230},
  year={2012},
  publisher={Wiley Online Library}
}

@article{fukasawa2024fast,
  title={{Fast and simple inner-loop algorithms of static/dynamic BLP estimations}},
  author={Fukasawa, Takeshi},
  journal={arXiv preprint arXiv:2404.04494},
  year={2024}
}

@book{nocedal2006numerical,
  title={{Numerical optimization (Second Edition)}},
  author={Nocedal, Jorge and Wright, Stephen J},
  year={2006},
  publisher={Springer}
}

@article{iskhakov2016comment,
  title={{Comment on “constrained optimization approaches to estimation of structural models”}},
  author={Iskhakov, Fedor and Lee, Jinhyuk and Rust, John and Schjerning, Bertel and Seo, Kyoungwon},
  journal={Econometrica},
  volume={84},
  number={1},
  pages={365--370},
  year={2016},
  publisher={Wiley Online Library}
}

@article{bray2019strong,
  title={{Strong convergence and dynamic economic models}},
  author={Bray, Robert L},
  journal={Quantitative Economics},
  volume={10},
  number={1},
  pages={43--65},
  year={2019},
  publisher={Wiley Online Library}
}

@article{ahlfeldt2015economics,
  title={{The economics of density: Evidence from the Berlin Wall}},
  author={Ahlfeldt, Gabriel M and Redding, Stephen J and Sturm, Daniel M and Wolf, Nikolaus},
  journal={Econometrica},
  volume={83},
  number={6},
  pages={2127--2189},
  year={2015},
  publisher={Wiley Online Library}
}

@book{bonnans2006numerical,
  title={{Numerical optimization: theoretical and practical aspects}},
  author={Bonnans, Joseph-Fr{\'e}d{\'e}ric and Gilbert, Jean Charles and Lemar{\'e}chal, Claude and Sagastiz{\'a}bal, Claudia A},
  year={2006},
  publisher={Springer Science \& Business Media}
}

@article{pal2023comparing,
  title={{Comparing procedures for estimating random coefficient logit demand models with a special focus on obtaining global optima}},
  author={P{\'a}l, L{\'a}szl{\'o} and S{\'a}ndor, Zsolt},
  journal={International Journal of Industrial Organization},
  volume={88},
  pages={102950},
  year={2023},
  publisher={Elsevier}
}

@article{lee2015computationally,
  title={{A computationally fast estimator for random coefficients logit demand models using aggregate data}},
  author={Lee, Jinhyuk and Seo, Kyoungwon},
  journal={The RAND Journal of Economics},
  volume={46},
  number={1},
  pages={86--102},
  year={2015},
  publisher={Wiley Online Library}
}

@article{varadhan2008simple,
  title={{Simple and globally convergent methods for accelerating the convergence of any EM algorithm}},
  author={Varadhan, Ravi and Roland, Christophe},
  journal={Scandinavian Journal of Statistics},
  volume={35},
  number={2},
  pages={335--353},
  year={2008},
  publisher={Wiley Online Library}
}

@article{arcidiacono2011conditional,
  title={{Conditional choice probability estimation of dynamic discrete choice models with unobserved heterogeneity}},
  author={Arcidiacono, Peter and Miller, Robert A},
  journal={Econometrica},
  volume={79},
  number={6},
  pages={1823--1867},
  year={2011},
  publisher={Wiley Online Library}
}

@article{aguirregabiria2021imposing,
  title={{Imposing equilibrium restrictions in the estimation of dynamic discrete games}},
  author={Aguirregabiria, Victor and Marcoux, Mathieu},
  journal={Quantitative Economics},
  volume={12},
  number={4},
  pages={1223--1271},
  year={2021},
  publisher={Wiley Online Library}
}

@article{conlon2020best,
  title={{Best practices for differentiated products demand estimation with PyBLP}},
  author={Conlon, Christopher and Gortmaker, Jeff},
  journal={The RAND Journal of Economics},
  volume={51},
  number={4},
  pages={1108--1161},
  year={2020},
  publisher={Wiley Online Library}
}

@article{aguirregabiria2007sequential,
  title     = {{Sequential estimation of dynamic discrete games}},
  author    = {Aguirregabiria, Victor and Mira, Pedro},
  journal   = {Econometrica},
  volume    = {75},
  number    = {1},
  pages     = {1--53},
  year      = {2007},
  publisher = {Wiley Online Library}
}

@article{berry1995automobile,
  title={Automobile prices in market equilibrium},
  author={Berry, Steven and Levinsohn, James and Pakes, Ariel},
  journal={Econometrica},
  pages={841--890},
  volume={63},
  number={4},
  year={1995},
  publisher={JSTOR}
}

@article{dube2012improving,
  title     = {Improving the numerical performance of static and dynamic aggregate discrete choice random coefficients demand estimation},
  author    = {Dub{\'e}, Jean-Pierre and Fox, Jeremy T and Su, Che-Lin},
  journal   = {Econometrica},
  volume    = {80},
  number    = {5},
  pages     = {2231--2267},
  year      = {2012},
  publisher = {Wiley Online Library}
}

@article{sun2019computationally,
  title={A computationally efficient fixed point approach to dynamic structural demand estimation},
  author={Sun, Yutec and Ishihara, Masakazu},
  journal={Journal of Econometrics},
  volume={208},
  number={2},
  pages={563--584},
  year={2019},
  publisher={Elsevier}
}

@article{gowrisankaran2012dynamics,
  title     = {Dynamics of consumer demand for new durable goods},
  author    = {Gowrisankaran, Gautam and Rysman, Marc},
  journal   = {Journal of Political Economy},
  volume    = {120},
  number    = {6},
  pages     = {1173--1219},
  year      = {2012},
  publisher = {University of Chicago Press Chicago, IL}
}

@article{aguirregabiria2002swapping,
  title={Swapping the nested fixed point algorithm: A class of estimators for discrete Markov decision models},
  author={Aguirregabiria, Victor and Mira, Pedro},
  journal={Econometrica},
  volume={70},
  number={4},
  pages={1519--1543},
  year={2002},
  publisher={Wiley Online Library}
}

@article{kasahara2012sequential,
  title={Sequential estimation of structural models with a fixed point constraint},
  author={Kasahara, Hiroyuki and Shimotsu, Katsumi},
  journal={Econometrica},
  volume={80},
  number={5},
  pages={2303--2319},
  year={2012},
  publisher={Wiley Online Library}
}

@article{pesendorfer2008asymptotic,
  title={Asymptotic least squares estimators for dynamic games},
  author={Pesendorfer, Martin and Schmidt-Dengler, Philipp},
  journal={The Review of Economic Studies},
  volume={75},
  number={3},
  pages={901--928},
  year={2008},
  publisher={Wiley-Blackwell}
}

@article{fukasawa2025jacobian,
  title={{Jacobian-free Efficient Pseudo-Likelihood (EPL) Algorithm}},
  author={Fukasawa, Takeshi},
  journal={Economics Letters},
  volume={112195},
  year={2025},
  publisher={Elsevier}
}

@article{rust1987optimal,
  title     = {Optimal replacement of GMC bus engines: An empirical model of Harold Zurcher},
  author    = {Rust, John},
  journal   = {Econometrica},
  volume={55},
  number={5},
  pages     = {999--1033},
  year      = {1987},
  publisher = {JSTOR}
}

\clearpage
\setcounter{page}{1}

\begin{center}
\textbf{\Large{}Supplemental Appendix to \\
``Sequential algorithm for structural estimations with equilibrium constraints''}

\medskip{}

\large{}Takeshi Fukasawa
\end{center}
{\Large\par}

\renewcommand{\thesection}{S\arabic{section}}
\renewcommand{\thetable}{S\arabic{table}}

\setcounter{table}{0}
\setcounter{section}{0} 
\setcounter{footnote}{0}

\section{Additional discussions}

\subsection{Relations to Newton's method and Stable convergence of the SLC/EPL algorithm\label{subsec:Relations-to-Newton}}

The SLC and EPL algorithms essentially inherit the idea of Newton’s method for solving nonlinear equations. To illustrate this connection, consider a setting in which $G\left(Y;\theta\right)$ is linear in $\theta$, in which case the EPL and SLC updates coincide. Suppose that parameters $\theta$ are not updated in the SLC/EPL iteration. Then, the algorithm reduces to the following iteration: $Y_{k+1}=Y_{k}-\left(\nabla_{Y}G\left(Y_{k};\theta_{0}\right)\right)^{-1}\left(G\left(Y_{k};\theta_{0}\right)\right)$. The iteration corresponds to the standard Newton's method for solving the equation $G\left(Y;\theta_{0}\right)=0$ for $Y$ given the parameter $\theta_{0}$.

As noted in \citet{dearing2024efficient}, the convergence of Newton’s method is notoriously unstable without regularization (such as a line-search procedure), particularly when the initial value is far from the solution, despite its quadratic local convergence rate. Nevertheless, the same study reports numerical results for dynamic discrete games without unobserved heterogeneity in which the EPL iterations converge in most cases even without additional regularization.

At first glance, these observations may appear surprising given the close connection to Newton’s method. However, the discussion in Section \ref{sec:Sequential-algorithms} suggests that the SLC/EPL algorithm can exhibit stability properties that differ from those of Newton’s method. The remainder of this subsection explores potential reasons for this behavior, and the present study provides additional numerical results related to this issue. Here, we focus on the setting where $G\left(Y;\theta\right)$ is linear concerning $\theta$ and the EPL is equivalent to the SLC. 

\subsubsection*{Theoretical analysis}

In general, the convergence of a fixed-point iteration is influenced by the sup norm of the Jacobian of the fixed-point mapping, and we focus on the Jacobian. Concerning the Newton iteration, the Jacobian is given by 
\[
-\frac{\partial\left(\left(\nabla_{Y}G\left(Y;\theta_{0}\right)\right)^{-1}\right)}{\partial Y}G\left(Y;\theta_{0}\right).
\]
Especially when the value of $G(Y;\theta_{0})$ is far from zero, the Jacobian can be far from zero, and it may lead to the divergence of the iteration.

Concerning the SLC/EPL iteration using the mapping $\Upsilon\left(\theta;\gamma_{k}\equiv\left(\theta_{k},Y_{k}\right)\right)\equiv Y_{k}-\left(\nabla_{Y}G\left(Y_{k};\theta_{k}\right)\right)^{-1}\left(G\left(Y_{k};\theta\right)\right)$, the Jacobian of the SLC mapping, $\nabla_{\gamma}H\left(\gamma\right)$, depends on

\[
\nabla_{\gamma}\Upsilon\left(\theta;\gamma_{k}\equiv\left(\theta_{k},Y_{k}\right)\right)=\left(\begin{array}{c}
-\frac{\partial\left(\left(\nabla_{Y}G\left(Y_{k};\theta_{k}\right)\right)^{-1}\right)}{\partial\theta_{k}}G\left(Y_{k};\theta\right)\\
I-\frac{\partial\left(\left(\nabla_{Y}G\left(Y_{k};\theta_{k}\right)\right)^{-1}\right)}{\partial Y_{k}}G\left(Y_{k};\theta\right)-\left(\nabla_{Y}G\left(Y_{k};\theta_{k}\right)\right)^{-1}\left(\nabla_{Y}G\left(Y_{k};\theta\right)\right)
\end{array}\right).
\]

This suggests that large magnitude of $G(Y;\theta_{k})$ may also lead to divergence of the iteration, as in the Newton's iteration. However, in the case of the SLC, each update is equivalent to solving the linearly constrained optimization problem (\ref{eq:SLC_constrained_opt}) in which the constraint is the linearization of $G(Y;\theta)$. As a result, the value of $G\left(Y;\theta\right)$ do not move too far from zero unless the linearization works poorly. Although the convergence of the SLC/EPL is also influenced by other factors (e.g., the shape of the objective function), the discussion indicates that the convergence properties of the two algorithms differ. It is therefore not unusual that the SLC/EPL can remain stable even in situations where Newton’s method becomes unstable.

\subsubsection*{Numerical experiments}

Figure \ref{fig:Comparison-EPL-Newton} compares the convergence processes of the SLC/EPL and Newton iterations in \citet{pesendorfer2008asymptotic}'s dynamic game model without unobserved heterogeneity. Table 16 of \citet{dearing2024efficient} shows that the EPL iterations converge over 98.5\% of all the numerical experiments based on the \citet{pesendorfer2008asymptotic}'s model. The current study utilizes the replication code of the paper for the comparison of the EPL and Newton iterations.\footnote{\citet{dearing2024efficient} only showed results on the EPL iterations} Note that the model satisfies the condition that $G$ is linear concerning $\theta$, and the EPL coincides with the SLC. In this numerical experiment, we validate how the SLC/EPL and Newton iterations converge or diverge by starting from the same initial values.\footnote{The current study employs the following settings based on \citet{dearing2024efficient}: $N=250$; The data is generated from Equilibrium (ii); Use the NPL estimator (\citealp{aguirregabiria2007sequential}) as the initial $\theta$ using consistently estimated CCPs. The experiment is conducted using 100 randomly generated datasets.} Concerning the SLC/EPL, the left panel shows that the values of $\left\Vert G\left(Y_{k};\theta_{k}\right)\right\Vert _{\infty}$ gradually decrease, and their convergence is mostly stable. In contrast, the Newton iterations sometimes leads to the divergence after a few iterations, and roughly 15\% of the trials fail to converge.
\begin{center}
\begin{figure}[H]
\begin{centering}
\caption{Comparison of the SLC/EPL and Newton iterations\label{fig:Comparison-EPL-Newton}}
\includegraphics[scale=0.7]{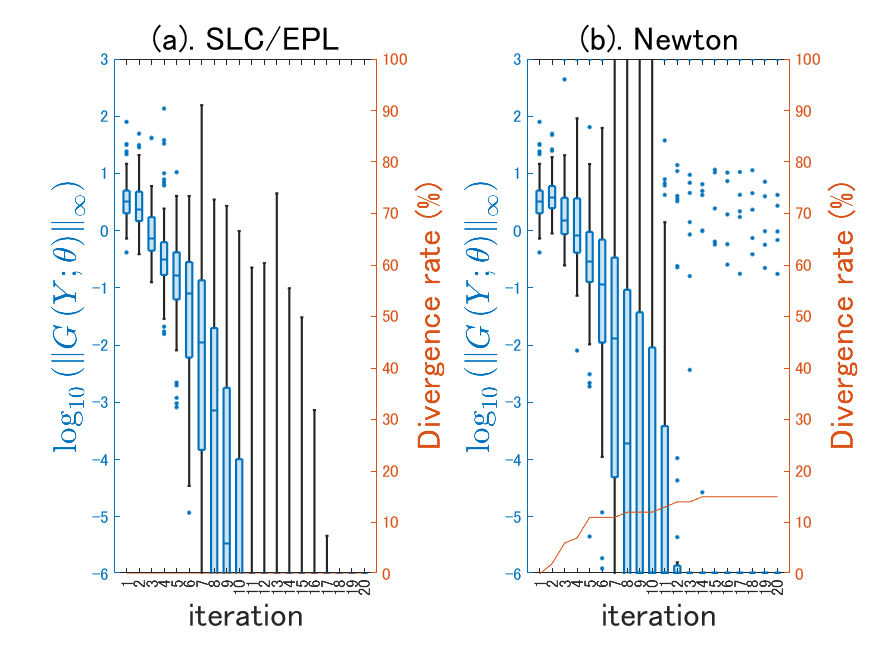}
\par\end{centering}
{\footnotesize Notes. The figure compares the convergence of the SLC/EPL and Newton iterations in \citet{pesendorfer2008asymptotic}'s dynamic game model without unobserved heterogeneity. The experiment is conducted using 100 randomly generated datasets. The box plot illustrates the distribution of the of the numerical error $\log_{10}\left(\left\Vert G\left(Y;\theta\right)\right\Vert _{\infty}\right)$ at each iteration, with the values shown on the left axis. The polyline depicts the divergence rate across trials, corresponding to the right axis. Note that the SLC and EPL iterations coincide in the dynamic game model currently considered due to the linearity of $G\left(Y;\theta\right)$ concerning $\theta$.}{\footnotesize\par}
\end{figure}
\par\end{center}

\subsection{Convexity of $\widetilde{Q}\left(\theta;\gamma\right)$\label{subsec:Convexity-Q_tilde}}

In some specific cases, we can guarantee the convexity of $\widetilde{Q}$ without relying on any statistical properties. The following lemma shows one instance:\footnote{The three conditions in the lemma are implicitly utilized in \citet{dearing2024efficient} (EPL for dynamic game without unobserved heterogeneity).}
\begin{lem}
\label{lem:pd_Q_tilde_Hessian}Suppose Assumptions \ref{as:existence_constrained_opt_sol}, \ref{as:conti_diffble}, and \ref{as:mapping_cdns} and the following conditions hold.

(a). $Q\left(\theta,Y\right)$ does not depend on $\theta$

(b). $\nabla_{YY^{\prime}}Q\left(\theta,Y\right)$ is positive definite

(c). $\Upsilon\left(\theta;\gamma\right)$ is linear in $\theta$

Then, 

(1). $\nabla_{\theta\theta^{\prime}}\widetilde{Q}\left(\theta;\gamma\right)$ is positive semidefinite, and $\widetilde{Q}\left(\theta;\gamma\right)$ is convex.

(2). $\nabla_{\theta\theta^{\prime}}\widetilde{Q}\left(\theta;\gamma\right)$ is positive definite if the columns of $\left(\nabla_{\theta}G\left(\gamma\right)\right)$ are linearly independent and $\Upsilon\left(\theta;\gamma\right)\equiv Y-\left(\nabla_{Y}G\left(\gamma\right)\right)^{-1}\left(G\left(\gamma\right)\right)-\left(\nabla_{Y}G\left(\gamma\right)\right)^{-1}\left(\nabla_{\theta}G\left(\gamma\right)\right)\left(\theta-\theta_{k}\right)$ (SLC mapping).
\end{lem}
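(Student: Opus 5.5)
Under (a) we have $\widetilde{Q}\left(\theta;\gamma\right)=Q\left(\Upsilon\left(\theta;\gamma\right)\right)$, with no direct dependence on $\theta$. The plan is to differentiate this composite twice with the chain rule, as in the first display of the proof of Lemma \ref{lem:difference-Hessian-NFXP}.

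\textbf{Step 1: the Hessian of $\widetilde{Q}$.} Every term involving $\nabla_{\theta\theta^{\prime}}Q$ or $\nabla_{\theta Y^{\prime}}Q$ vanishes by (a). By (c), $\nabla_{\theta\theta^{\prime}}\Upsilon_{i}\left(\theta;\gamma\right)=0$ for every $i$, so the term $\sum_{i}\frac{\partial Q}{\partial Y_{i}}\nabla_{\theta\theta^{\prime}}\Upsilon_{i}$ also vanishes. This leaves
\[
\nabla_{\theta\theta^{\prime}}\widetilde{Q}\left(\theta;\gamma\right)=\left(\nabla_{\theta}\Upsilon\left(\theta;\gamma\right)\right)^{\prime}\left(\nabla_{YY^{\prime}}Q\left(\Upsilon\left(\theta;\gamma\right)\right)\right)\left(\nabla_{\theta}\Upsilon\left(\theta;\gamma\right)\right).
\]
Write $B\equiv\nabla_{\theta}\Upsilon\left(\theta;\gamma\right)$, which by (c) does not depend on $\theta$.

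\textbf{Step 2: part (1).} For any $v\in\mathbb{R}^{n_{\theta}}$,
\[
v^{\prime}\left(\nabla_{\theta\theta^{\prime}}\widetilde{Q}\right)v=(Bv)^{\prime}\left(\nabla_{YY^{\prime}}Q\right)(Bv)\geq0
\]
by (b). Hence the Hessian is positive semidefinite at every $\theta$. Since $\Theta$ (or $\mathbb{R}^{n_{\theta}}$) is convex and $\widetilde{Q}$ is $C^{2}$, a positive semidefinite Hessian everywhere gives convexity of $\widetilde{Q}\left(\cdot;\gamma\right)$. Alternatively, $\widetilde{Q}$ is a convex function of $Y$ composed with an affine map of $\theta$, which gives convexity directly.

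\textbf{Step 3: part (2).} For the SLC mapping, $B=-\left(\nabla_{Y}G\left(\gamma\right)\right)^{-1}\nabla_{\theta}G\left(\gamma\right)$. This matrix is well defined by Assumption \ref{as:nonsingular_Jacobian}, i.e. nonsingularity of $\nabla_{Y}G$ at the current iterate. Because $\left(\nabla_{Y}G\right)^{-1}$ is nonsingular and $\nabla_{\theta}G$ has linearly independent columns, $B$ has full column rank, so $Bv\neq0$ whenever $v\neq0$. Positive definiteness of $\nabla_{YY^{\prime}}Q$ then gives $(Bv)^{\prime}\left(\nabla_{YY^{\prime}}Q\right)(Bv)>0$, so the Hessian is positive definite.

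Every step is routine. The only point needing care is Step 1, checking that (a) and (c) kill exactly the cross and curvature terms in the chain-rule expansion. Two minor remarks: condition (b) must hold at the point $\Upsilon\left(\theta;\gamma\right)$, which it does since (b) is assumed for all $Y$; and in Step 3 nonsingularity of $\nabla_{Y}G(\gamma)$ must be invoked explicitly.
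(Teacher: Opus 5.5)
Your proof is correct and follows the paper's argument essentially step for step: under (a) and (c) the Hessian reduces to $(\nabla_{\theta}\Upsilon)^{\prime}(\nabla_{YY^{\prime}}Q)(\nabla_{\theta}\Upsilon)$, which gives positive semidefiniteness, and for the SLC mapping $\nabla_{\theta}\Upsilon=-(\nabla_{Y}G)^{-1}\nabla_{\theta}G$ has full column rank, which gives positive definiteness. Your explicit appeal to nonsingularity of $\nabla_{Y}G(\gamma)$ (Assumption \ref{as:nonsingular_Jacobian}) is a fair tightening, because the paper uses $(\nabla_{Y}G)^{-1}$ in this step without listing that assumption among the lemma's hypotheses.
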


\subsubsection*{Linear independence of the columns of $\nabla_{\theta}G\left(\gamma\right)$\label{subsubsec:Linear-independence}}

Note that the assumption of linear independence of the columns of $\left(\nabla_{\theta}G\left(\gamma\right)\right)$ is natural for well-behaved models.

Let $\gamma$ be one element of $\Gamma_{FOC}$, which is a solution of the system of equations $\nabla_{\theta}Q\left(\theta,Y\right)-\left(\nabla_{Y}Q\left(\theta,Y\right)\right)\left(\nabla_{Y}G\left(Y;\theta\right)\right)^{-1}\left(\nabla_{\theta}G\left(Y;\theta\right)\right)=0$ and $G\left(Y;\theta\right)=0$. Suppose that the linear independence of the columns of $\nabla_{\theta}G\left(\gamma\right)$ does not hold. Without loss of generality, under the assumption, suppose $\left(\begin{array}{cccc}
\nabla_{\theta_{1}}G\left(\gamma\right) & \cdots & \nabla_{\theta_{n_{\theta}-1}}G\left(\gamma\right) & \nabla_{\theta_{n_{\theta}}}G\left(\gamma\right)\end{array}\right)\left(\begin{array}{c}
d_{1}\left(\gamma\right)\\
\vdots\\
d_{n_{\theta}-1}\left(\gamma\right)\\
d_{n_{\theta}}\left(\gamma\right)
\end{array}\right)=0$ where $d_{n_{\theta}}\left(\gamma\right)\neq0$. Then, $\nabla_{\theta_{n_{\theta}}}G\left(\gamma\right)=\sum_{i=1}^{n_{\theta}-1}\exists c_{i}\left(\gamma\right)\nabla_{\theta_{i}}G\left(\gamma\right)$ holds.

If $Q$ does not depend on $\theta$,

\begin{eqnarray*}
 &  & \nabla_{\theta}Q\left(\theta,Y\right)-\left(\nabla_{Y}Q\left(\theta,Y\right)\right)\left(\nabla_{Y}G\left(Y;\theta\right)\right)^{-1}\left(\nabla_{\theta}G\left(Y;\theta\right)\right)=0\\
 & \Leftrightarrow & \left(\begin{array}{c}
\left(\nabla_{Y}Q\left(\theta,Y\right)\right)\left(\nabla_{Y}G\left(Y;\theta\right)\right)^{-1}\left(\nabla_{\theta_{1}}G\left(Y;\theta\right)\right)\\
\vdots\\
\left(\nabla_{Y}Q\left(\theta,Y\right)\right)\left(\nabla_{Y}G\left(Y;\theta\right)\right)^{-1}\left(\nabla_{\theta_{n_{\theta}-1}}G\left(Y;\theta\right)\right)\\
\sum_{i=1}^{n_{\theta}-1}c_{i}\left(\gamma\right)\left(\nabla_{Y}Q\left(\theta,Y\right)\right)\left(\nabla_{Y}G\left(Y;\theta\right)\right)^{-1}\left(\nabla_{\theta_{i}}G\left(Y;\theta\right)\right)
\end{array}\right)=\left(\begin{array}{c}
0\\
\vdots\\
0\\
0
\end{array}\right),
\end{eqnarray*}
and the equality of the last element can be represented as the weighted sum of the first $n_{\theta}-1$ equalities.

Because the elements of $\Gamma_{FOC}$ are $n_{\theta}+n_{Y}$-dimensional, we need $n_{\theta}+n_{Y}$ independent equations to solve for $(\theta,Y)$. However, the discussion above implies that the number of independent equations can be at most $n_{\theta}+n_{Y}-1$, and it might lead to ill-conditioning of the solution. Consequently, the linear independence condition should hold.

\subsection{Computational speed of automatic differentiation\label{subsec:Computational-speed-of-AD}}

Although automatic differentiation techniques---made increasingly popular by the availability of high-quality software packages---offer convenience and help practitioners avoid manually coding complex derivatives or Jacobians, they can be slow in certain cases. In this section, we present numerical examples inspired by a dynamic discrete choice model, illustrating situations where automatic differentiation is significantly slower than manually coded derivatives. These examples suggest that practitioners should be mindful of the potential computational inefficiencies when applying automatic differentiation techniques.

We consider the computations of Jacobians $\frac{\partial\Phi(V_{0})}{\partial V}$ and Jacobian-Vector Product (JVP) $\frac{\partial\Phi(V_{0})}{\partial V}v$ for the following function $\Phi$:

\begin{eqnarray*}
\Phi\left(V\right) & = & \log\left(\exp\left(u_{0}+\beta PV\right)+\exp\left(u_{1}\right)\right)
\end{eqnarray*}
Here, $u_{0},u_{1},V,V_{0}$ are $|\chi|$-dimensional vectors, and $P$ is a $|\chi|\times|\chi|$ matrix. $\beta$ is a parameter.

The function $\Phi$ is motivated by a dynamic discrete choice model (optimal stopping problem). Let $a\in\{0,1\}$ be an individual's action, and let $x\in\chi$ be his or her state. Suppose that we consider a model where the choice-specific expected-discounted utility when choosing the alternative $a=0$ is $v_{0}\left(x,\epsilon\right)=u_{0}\left(x\right)+\beta\sum_{x^{\prime}}P\left(x^{\prime}|x\right)V\left(x^{\prime}\right)+\epsilon_{0}$, and the utility when choosing the alternative $a=1$ is $v_{1}\left(x,\epsilon\right)=u_{0}\left(x\right)+\epsilon_{1}$. Here, $V\left(x\right)$ denotes the (integrated) value function. Then, under the assumption that $\epsilon$ follows i.i.d. Gumbel distribution, value function $V$ satisfies the Bellman equation $V=\log\left(\exp\left(u_{0}+\beta PV\right)+\exp\left(u_{1}\right)\right)$, where $P\equiv\left(P\left(x^{\prime}|x\right)\right)_{x,x^{\prime}\in\chi}$. Then we can easily see that $\Phi\left(V\right)$ is the right hand side of the Bellman equation.

For the function $\Phi$, its Jacobian is given by:

\begin{eqnarray*}
\frac{\partial\Phi(V)}{\partial V} & = & \left(\frac{\exp\left(u_{0}+\beta PV\right)}{\exp\left(u_{0}+\beta PV\right)+\exp\left(u_{1}\right)}\right).*\left(\beta P\right)
\end{eqnarray*}
where the operator $.*$ denotes element-wise manipulations.

We try two types of numerical experiments. The first experiment is the computation of a Jacobian $\frac{\partial\Phi(V_{0})}{\partial V}$. We compare the computational speed using automatic differentiation and manually-coded analytical derivative. The second experiment is the computation of a Jacobian-vector product (JVP) $\frac{\partial\Phi(V_{0})}{\partial V}v_{0}$, which appears in the Jacobian-free SLC/EPL algorithms. Concerning the computation of the JVP, we compare the following five methods:

(1). Numerical derivative: Compute $\frac{\Phi\left(V_{0}+\epsilon v_{0}\right)-\Phi\left(V_{0}-\epsilon v_{0}\right)}{2\epsilon}$ ($\epsilon$=1E-5)

(2) Automatic differentiation (not computing $\frac{\partial\Phi}{\partial V}$): Compute $\frac{\partial}{\partial t}\left[\Phi\left(V_{0}+tv_{0}\right)\right]$

(3). Automatic differentiation (explicitly computing $\frac{\partial\Phi}{\partial V}$): Compute $\frac{\partial\Phi(V_{0})}{\partial V}v_{0}$

(4) Analytical derivative (not computing $\frac{\partial\Phi}{\partial V}$): Compute $\frac{\partial\Phi(V_{0})}{\partial V}v_{0}=\left(\frac{\exp\left(u_{0}+\beta PV_{0}\right)}{\exp\left(u_{0}+\beta PV_{0}\right)+\exp\left(u_{1}\right)}\right).*\left(\beta Pv_{0}\right)$

(5). Analytical derivative (explicitly computing $\frac{\partial\Phi}{\partial V}$): Compute $\frac{\partial\Phi(V_{0})}{\partial V}v_{0}$

All the experiments were run on a laptop computer with the CPU AMD Ryzen 5 6600H 3.30 GHz, 16.0 GB of RAM, Windows 11 64-bit, and Julia 1.12.3. Concerning automatic differentiations, the current study applies the ForwardDiff package, which is one of the most famous automatic differentiation packages in Julia.\footnote{The current study also tested the ReverseDiff and Zygote packages in Julia, but found that their performance was inferior to that of the ForwardDiff package.} In addition, the current study utilizes the BenchmarkTools.jl package to measure the computation time. In all numerical experiments, we set $\beta=0.9$, assume $P\left(x^{\prime}|x\right)=\frac{1}{N}\ \forall x,x^{\prime}$, and set the vectors as $u_{0}=u_{1}=0$, $V_{0}=1$, and $v_{0}=1$.

Table \ref{tab:Comparison-of-methods-N100} presents the results for $|\chi|=100$, while Table \ref{tab:Comparison-of-methods-N1000} shows those for $|\chi|=1000$. In both cases, computing the Jacobian via automatic differentiation is more than ten times slower than using manually coded analytical derivatives. These results suggest that automatic differentiation may perform significantly worse than analytical methods.

The bottom section of both tables compares five methods for computing the Jacobian-vector product (JVP). Notably, in the higher-dimensional case $(|\chi|=1000)$ the numerical derivative outperforms the other four methods. Interestingly, computing $\frac{\partial\Phi(V_{0})}{\partial V}v_{0}=\frac{\partial}{\partial t}\left[\Phi\left(V_{0}+tv_{0}\right)\right]$ using automatic differentiation, which does not require explicit computation of the Jacobian, is fairly competitive with the manually-coded analytical derivative.

\begin{table}[H]
\caption{Comparison of computation methods\label{tab:Comparison-of-methods-N100} $(|\chi|=100)$}

\begin{centering}
\begin{tabular}{ccccc}
\hline 
 &  & \multicolumn{2}{c}{Time ($ms$)} & memory\tabularnewline
 &  & Mean & Std & (kB)\tabularnewline
\hline 
\hline 
\multirow{2}{*}{Compute Jacobian} & Automatic diff. & 0.9 & 0.36 & 0.52\tabularnewline
 & Analytical deriv. & 0.06 & 0.29 & 0.24\tabularnewline
\hline 
\multirow{5}{*}{Compute JVP} & (1) Numerical deriv. & 0.02 & 0.09 & 0.01\tabularnewline
 & (2) Automatic diff. (not computing $\frac{\partial\Phi}{\partial V}$) & 0.03 & 0.15 & 0.01\tabularnewline
 & (3) Automatic diff. (explicitly computing $\frac{\partial\Phi}{\partial V}$) & 1.45 & 0.69 & 0.52\tabularnewline
 & (4) Analytical deriv. (not computing $\frac{\partial\Phi}{\partial V}$) & 0.05 & 0.28 & 0.09\tabularnewline
 & (5) Analytical deriv. (explicitly computing $\frac{\partial\Phi}{\partial V}$) & 0.09 & 0.34 & 0.24\tabularnewline
\hline 
\end{tabular}
\par\end{centering}
{\footnotesize Note. Based on 10,000 trials.}{\footnotesize\par}
\end{table}

\begin{table}[H]
\caption{Comparison of computation methods \label{tab:Comparison-of-methods-N1000}$(|\chi|=1000)$}

\begin{centering}
\begin{tabular}{ccccc}
\hline 
 &  & \multicolumn{2}{c}{Time ($ms$)} & memory\tabularnewline
 &  & Mean & Std & (kB)\tabularnewline
\hline 
\hline 
\multirow{2}{*}{Compute Jacobian} & Automatic diff. & 302.04 & 77.08 & 46.22\tabularnewline
 & Analytical deriv. & 7.18 & 2.45 & 22.94\tabularnewline
\hline 
\multirow{5}{*}{Compute JVP} & (1) Numerical deriv. & 0.6 & 0.32 & 0.12\tabularnewline
 & (2) Automatic diff. (not computing $\frac{\partial\Phi}{\partial V}$) & 0.96 & 0.29 & 0.11\tabularnewline
 & (3) Automatic diff. (explicitly computing $\frac{\partial\Phi}{\partial V}$) & 305.74 & 30.52 & 46.23\tabularnewline
 & (4) Analytical deriv. (not computing $\frac{\partial\Phi}{\partial V}$) & 2.9 & 1.12 & 7.71\tabularnewline
 & (5) Analytical deriv. (explicitly computing $\frac{\partial\Phi}{\partial V}$) & 7.17 & 1.4 & 22.95\tabularnewline
\hline 
\end{tabular}
\par\end{centering}
{\footnotesize Note. Based on 10,000 trials.}{\footnotesize\par}
\end{table}

\subsection{Further discussions on Neyman orthogonality}

Concerning Neyman-orthogonal score-based approaches, we can construct a score function other than the ones based on the mapping with ZJP presented in the current paper. \citet{sawadogo2025efficient}, discussing in the context of the DDC models, proposes the following Neyman-orthogonal score function for a MLE problem characterized by the objective function $Q\left(\theta,Y\right)$:

\begin{eqnarray*}
\phi\left(w;\theta,\eta\right) & = & \mu\frac{\partial Q\left(\theta,Y\right)}{\partial\theta},
\end{eqnarray*}
where the true value of $\mu$ is given by $\mu^{*}=I-\left(\nabla_{\theta Y^{\prime}}Q\right)^{\prime}\left(\left(\nabla_{\theta Y^{\prime}}Q\right)^{\prime}\left(\nabla_{\theta\theta^{\prime}}Q\right)^{-1}\left(\nabla_{\theta Y^{\prime}}Q\right)\right)^{-1}\left(\nabla_{\theta Y^{\prime}}Q\right)^{\prime}\left(\nabla_{\theta\theta^{\prime}}Q\right)^{-1}$. \footnote{All the second derivatives are evaluated at $\left(\theta^{*},Y^{*}\right)$.} As shown by \citet{sawadogo2025efficient}, the score function $\phi$ is Neyman orthogonal concerning nuisance parameters $\left(Y,\mu\right)$, and the estimator $\widehat{\theta}$ using the Neyman-orthogonal score function satisfies $\sqrt{N}\left(\widehat{\theta}-\theta^{*}\right)\rightarrow_{d}N\left(0,\left(\nabla_{\theta\theta^{\prime}}Q^{*}\left(\theta^{*},Y^{*}\right)\right)^{-1}\right)$, as long as the first-stage estimate $\widehat{Y}$ and the estimate of $\mu$ is consistent. 

Although both estimators satisfy Neyman orthogonality, the efficiency of the estimators may differ. For instance, the asymptotic variance of the Neyman-orthogonal score-based estimator is $\left(\nabla_{\theta\theta^{\prime}}Q^{*}\left(\theta^{*},Y^{*}\right)\right)^{-1}$ in the \citet{sawadogo2025efficient}'s method, but this is in general different from the asymptotic variance of the NFXP estimator $\left(\left(\nabla_{\theta\theta^{\prime}}Q_{NFXP}^{*}\left(\theta^{*},Y^{*}\right)\right)^{-1}\right)$,\footnote{As shown in the proof of Lemma \ref{lem:difference-Hessian-NFXP}, 
\begin{eqnarray*}
\nabla_{\theta\theta^{\prime}}Q_{NFXP}\left(\theta\right) & = & \left(\nabla_{\theta\theta^{\prime}}Q\left(\theta,\mathcal{Y}\left(\theta\right)\right)\right)+2\left(\nabla_{\theta Y^{\prime}}Q\left(\theta,\mathcal{Y}\left(\theta\right)\right)\right)\left(\nabla_{\theta}\mathcal{Y}\left(\theta\right)\right)+\\
 &  & \left(\nabla_{\theta^{\prime}}\mathcal{Y}\left(\theta\right)\right)\left(\nabla_{YY^{\prime}}Q\left(\theta,\mathcal{Y}\left(\theta\right)\right)\right)\left(\nabla_{\theta}\mathcal{Y}\left(\theta\right)\right)+\sum_{i=1}^{n_{Y}}\frac{\partial Q\left(\theta,\mathcal{Y}\left(\theta\right)\right)}{\partial Y_{i}}\left(\nabla_{\theta\theta^{\prime}}\mathcal{Y}_{i}\left(\theta\right)\right)
\end{eqnarray*}

holds.} which explicitly imposes the constraint $G\left(Y;\theta\right)=0$.

\subsection{Computational Complexity and Derivative Requirements in Lagrangian-based methods\label{subsec:Additional_discussions_Lagrangian_based}}

Table \ref{tab:Comparison-of-algorithms} states that it is unclear whether Lagrangian-based methods (e.g., SQP, Interior Point (IP)) can be implemented without explicitly computing the Jacobian matrix $\nabla_{Y}G\left(Y;\theta\right)$. This section clarifies this point. Because the original SQP/IP algorithms require explicit construction of the Jacobian matrix, which is computationally expensive in high-dimensional settings, previous studies in numerical analysis have investigated matrix-free approaches. One line of research focuses on SQP-based methods that only require Jacobian-vector products (JVPs) and vector-Jacobian products (VJPs). Examples include \citet{heinkenschloss2014matrix} and \citet{arreckx2018regularized}. Although promising, computing VJPs can be sometimes challenging in practice; unlike JVPs, VJPs cannot be efficiently approximated by standard numerical derivatives in high-dimensional spaces, typically requiring automatic differentiation or analytical derivation. In contrast, the SLC algorithm only requires JVPs, which can be easily and efficiently approximated by directional numerical derivatives. 

Another line of research involves derivative-free optimization (DFO), which avoids gradient computations entirely. This includes model-based methods, which construct local approximations of the objective or Lagrangian functions (e.g., COBYLA), and direct search (or model-free) methods, which update parameters without explicit functional approximations\footnote{See \citet{larson2019derivative} for survey.} Additionally, \citet{chen2025derivative} proposed an IP-based algorithm for constrained nonlinear least squares that approximates the full Jacobian using finite differences. However, since these methods require many function evaluations per iteration, its scalability to high-dimensional problems---common in structural estimation in economics---remains a concern.

\section{Extensions\label{sec:Extensions}}

\subsection{Spectral algorithm for accelerating and stabilizing convergence\label{subsec:Spectral-algorithm}}

Although sequential algorithms based on the ZJP exhibit nearly quadratic local convergence in large samples, their convergence may slow down when the sample size is not sufficiently large. Moreover, near quadratic local convergence does not guarantee that the iteration is contractive when the initial values are far from the solution. To accelerate and stabilize convergence, one can incorporate acceleration methods for fixed\nobreakdash-point iterations, particularly the spectral algorithm.

The spectral algorithm was originally developed as a method for solving nonlinear equations. Its design essentially draws on the idea of Newton’s method, particularly in the choice of step size; see also the discussions in \citet{varadhan2008simple}, \citet{aguirregabiria2021imposing}, and \citet{fukasawa2024fast}. Algorithm \ref{alg:SLC-spectral} shows the procedure of the sequential algorithm incorporating the spectral algorithm.

\begin{algorithm}[H]
\caption{Sequential algorithm with the spectral algorithm\label{alg:SLC-spectral}}

Set initial values $\gamma_{0}\equiv\left(\theta_{0},Y_{0}\right)$. Iterate the following until convergence $(k=0,1,2,\cdots)$:
\begin{enumerate}
\item Compute $\widetilde{\theta_{k+1}}=\arg\min_{\theta}Q\left(\theta,\Upsilon\left(\theta;\gamma_{k}\right)\right)$
\item Compute $\widetilde{Y_{k+1}}=\Upsilon\left(\theta_{k+1};\gamma_{k}\right)$. Let $F(\gamma_{k})\equiv H\left(\left(\begin{array}{c}
\theta_{k}\\
Y_{k}
\end{array}\right)\right)-\left(\begin{array}{c}
\theta_{k}\\
Y_{k}
\end{array}\right)=\left(\begin{array}{c}
\widetilde{\theta_{k+1}}-\theta_{k}\\
\widetilde{Y_{k+1}}-Y_{k}
\end{array}\right)$.
\item Compute the spectral coefficient $\sigma_{k}=\frac{\left\Vert s_{k}\right\Vert _{2}}{\left\Vert y_{k}\right\Vert _{2}}$ if $k\geq1$ where $s_{k}\equiv\gamma_{k}-\gamma_{k-1},y_{k}\equiv F\left(\gamma_{k}\right)-F\left(\gamma_{k-1}\right)$. Let $\sigma_{k}=1$ if $k=0$.
\item Compute $\theta_{k+1}\leftarrow\sigma_{k}\widetilde{\theta_{k+1}}+\left(1-\sigma_{k}\right)\theta_{k}$ and $Y_{k+1}\leftarrow\sigma_{k}\widetilde{Y_{k+1}}+\left(1-\sigma_{k}\right)Y_{k}$
\end{enumerate}
\end{algorithm}

As shown in \citet{aguirregabiria2021imposing} in the context of the NPL mapping for dynamic discrete games, which may not be a contraction even in the neighborhood of the solution, the use of the spectral algorithm largely stabilizes and accelerates the convergence. 

In Step 3 of Algorithm \ref{alg:SLC-spectral}, we use the step size $\sigma_{k}=\frac{\left\Vert s_{k}\right\Vert _{2}}{\left\Vert y_{k}\right\Vert _{2}}\geq0$. As discussed in \citet{fukasawa2024fast},\footnote{See also \citet{varadhan2008simple}.} other choices of step sizes such as $\sigma_{k}=-\frac{s_{k}^{\prime}s_{k}}{s_{k}^{\prime}y_{k}}$, $\sigma_{k}=-\frac{s_{k}^{\prime}y_{k}}{y_{k}^{\prime}y_{k}}$ are not desirable when the mapping $H\left(\gamma\right)$ is a contraction, because choosing the step size $\sigma_{k}<0$ may lead to divergence of the iteration absent any stabilization strategies. 

In principle, we may use any other acceleration methods of fixed point iterations, such as the Anderson acceleration or the SQUAREM, which have been applied in economics and other fields.\footnote{These acceleration methods have been used to accelerate the BLP contraction mapping in the static BLP estimation. See \citet{conlon2020best} and \citet{fukasawa2024fast}. In the case of the inner-loops of static and dynamic BLP estimations, Anderson acceleration is faster than the spectral and SQUAREM algorithms, as discussed in \citet{fukasawa2024fast}.} However, even when the SLC algorithm is used, the update direction may not be descending with respect to the merit function $\phi_{1}\left(\theta,Y;\mu\right)\equiv Q\left(\theta,Y\right)+\mu\left\Vert G\left(Y;\theta\right)\right\Vert _{1}$. Although the Supplemental Appendix \ref{subsec:Stabilized-algorithm} shows that global convergence of the SLC algorithm combined with the spectral algorithm can be ensured under moderate conditions by incorporating line\nobreakdash-search steps, it remains unclear whether similar guarantees apply when the SLC algorithm is combined with acceleration methods other than the spectral algorithm. Hence, when global convergence is of primary concern, the spectral algorithm may be preferable as the acceleration method for the SLC algorithm.

\subsection{Inequality constraints on $\theta$\label{subsec:Inequality-constraints}}

In the application of structural estimations, practitioners sometimes impose inequality constraints on the domain of structural parameters $\theta$ (e.g., nonnegativity constraints). Hence, in this subsection we consider the setting where the domain of $\theta$ is restricted to a convex set $\Omega\subset\mathbb{R}^{n_{\theta}}$. Let $B\left(\theta\right)\geq0$ be the inequality corresponding to the convex set $\Omega$. Then, the constrained optimization problem reduces to solving the following constrained optimization problem with both equality and inequality constraints:

\begin{eqnarray}
\min_{\theta\in\mathbb{R}^{n_{\theta}},Y\in\mathbb{R}^{n_{Y}}} & Q(\theta,Y)\label{eq:constrained_opt_problem_ineq}\\
s.t. & G(Y;\theta)=0\nonumber \\
 & B\left(\theta\right)\geq0\nonumber 
\end{eqnarray}

Even when we impose the constraint on the domain of $\theta$, we can continue using the sequential algorithm \ref{alg:general_seq} by letting $\theta_{k+1}=\arg\min_{\theta\in\Omega\subset\mathbb{R}^{n_{\theta}}}\widetilde{Q}\left(\theta;\gamma_{k}\right)\equiv Q\left(\theta,\Upsilon\left(\theta;\gamma_{k}\right)\right)$ in Step 1.\footnote{When we additionally introduce the spectral algorithm or other fixed-point iteration algorithms, $\sigma_{\theta,k}\widetilde{\theta_{k+1}}+\left(1-\sigma_{\theta,k}\right)\theta_{k}$ should be projected to $\Omega$.}

Let $\Gamma_{KKT}$ be the set of $\gamma\equiv\left(\theta,Y\right)$ satisfying the Karash-Kuhn-Tucker (KKT) condition of the problem (\ref{eq:constrained_opt_problem_ineq}). Then, the following proposition shows the relationship between the fixed point of the sequential algorithms and the KKT condition.
\begin{prop}
\label{prop:fixed-points-KKT-seq-ineq}Suppose Assumptions \ref{as:existence_constrained_opt_sol}, \ref{as:conti_diffble}, and \ref{as:mapping_cdns} hold. Then, 

(a). $\Gamma_{seq}\subset\Gamma_{KKT}$

(b). $\Gamma_{seq}=\Gamma_{KKT}$ if $\widetilde{Q}\left(\theta;\gamma\right)$ is strictly convex

(c). $\widehat{\gamma}\in\Gamma_{seq}$ holds if $\nabla_{\theta\theta^{\prime}}\widetilde{Q}\left(\theta;\gamma\right)$ is positive definite
\end{prop}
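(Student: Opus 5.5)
The argument mirrors the proof of Proposition \ref{prop:fixed-points-KKT-seq}, with the unconstrained first-order condition $\nabla_{\theta}\widetilde{Q}=0$ replaced by the first-order condition of the convex-constrained subproblem $\min_{\theta\in\Omega}\widetilde{Q}(\theta;\gamma)$. Throughout, the KKT conditions of (\ref{eq:constrained_opt_problem_ineq}) are taken as:
\begin{itemize}
\item $G(\gamma)=0$ and $B(\theta)\geq0$;
\item there exist $\lambda$ and $\nu\geq0$ with $\nu^{T}B(\theta)=0$ such that
\[
\nabla_{\theta}Q-\lambda^{T}\nabla_{\theta}G-\nu^{T}\nabla_{\theta}B=0,\qquad \nabla_{Y}Q-\lambda^{T}\nabla_{Y}G=0.
\]
\end{itemize}
Since $B$ does not involve $Y$, the $Y$-block forces $\lambda^{T}=(\nabla_{Y}Q)(\nabla_{Y}G)^{-1}$, exactly as in Proposition \ref{prop:FOC-KKT}(a). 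Substituting this $\lambda$ into the $\theta$-block reduces the KKT system to three requirements: $G=0$, $B\geq0$, and
\[
\nabla_{\theta}Q-(\nabla_{Y}Q)(\nabla_{Y}G)^{-1}\nabla_{\theta}G-\nu^{T}\nabla_{\theta}B=0\quad\text{with }\nu\geq0,\ \nu^{T}B=0.
\]
By Assumption \ref{as:mapping_cdns}(c), at any point with $G=0$ the first two terms equal $\nabla_{\theta}\widetilde{Q}(\theta;\gamma)$ evaluated at $\gamma=(\theta,Y)$. This identity is the bridge between the two sets.

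For (a), take a fixed point $\widetilde{\gamma}\in\Gamma_{seq}$. First, $\Upsilon(\widetilde{\theta};\widetilde{\gamma})=\widetilde{Y}$, so Assumption \ref{as:mapping_cdns}(a) gives $G(\widetilde{\gamma})=0$. Second, $\widetilde{\theta}$ minimizes $\widetilde{Q}(\cdot;\widetilde{\gamma})$ over $\Omega=\{B\geq0\}$. I would invoke the standard KKT necessary conditions for this $\theta$-only subproblem, which require a constraint qualification on $B$ (e.g.\ Slater's condition or linear independence of active constraint gradients, which is natural for box or nonnegativity constraints). These yield $\nu\geq0$ with $\nu^{T}B(\widetilde{\theta})=0$ and $\nabla_{\theta}\widetilde{Q}(\widetilde{\theta};\widetilde{\gamma})=\nu^{T}\nabla_{\theta}B(\widetilde{\theta})$. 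The bridge identity then turns this into the reduced KKT system with $\lambda$ defined as above, so $\widetilde{\gamma}\in\Gamma_{KKT}$.

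For (b) and (c), run the argument in reverse. Take $\widetilde{\gamma}\in\Gamma_{KKT}$; nonsingularity of $\nabla_{Y}G$ is used here exactly as in Proposition \ref{prop:FOC-KKT}(b). The bridge identity shows that $\widetilde{\theta}$ satisfies the KKT conditions of $\min_{\theta\in\Omega}\widetilde{Q}(\theta;\widetilde{\gamma})$. For (b), strict convexity of $\widetilde{Q}(\cdot;\widetilde{\gamma})$ together with convexity of $\Omega$ makes these conditions sufficient for a unique global minimizer, so $\widetilde{\theta}=\widetilde{\theta}(\widetilde{\gamma})$. Assumption \ref{as:mapping_cdns}(a) gives $\Upsilon(\widetilde{\theta};\widetilde{\gamma})=\widetilde{Y}$, hence $\widetilde{\gamma}\in\Gamma_{seq}$. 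Combining with (a) gives equality of the two sets. For (c), apply the same reasoning at $\widehat{\gamma}$, which lies in $\Gamma_{KKT}$ because it solves (\ref{eq:constrained_opt_problem_ineq}) and satisfies the constraint qualification. Positive definiteness of the Hessian makes $\widetilde{Q}$ convex, and the convexity argument then gives global optimality on convex $\Omega$. If positive definiteness is only local, the argument gives a local minimizer, so the global-minimizer step needs convexity on all of $\Omega$, or else the statement must be read as ``local'' in Step 1.

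The main obstacle is the constraint-qualification and multiplier bookkeeping. The necessary conditions in (a) need a qualification on $B$ that the proposition does not state, so I would add it as a maintained regularity condition. I also need to check that linear independence of the rows of $(\nabla_{\gamma}G;\nabla_{\gamma}B_{\text{active}})$ follows from nonsingularity of $\nabla_{Y}G$ plus linear independence of the active $\nabla_{\theta}B$, which holds because of the block-triangular structure. Once these are in place, the rest is the same substitution used in Propositions \ref{prop:fixed-points-KKT-seq} and \ref{prop:FOC-KKT}.
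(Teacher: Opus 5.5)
Your proposal is correct and takes essentially the same route as the paper's proof. The shared steps are: KKT multipliers for $B$ in the $\theta$-subproblem; the identity $\nabla_{\theta}\widetilde{Q}=\nabla_{\theta}Q-(\nabla_{Y}Q)(\nabla_{Y}G)^{-1}\nabla_{\theta}G$ at points with $G=0$, from Assumption \ref{as:mapping_cdns}(c), with $\lambda_{G}^{T}=(\nabla_{Y}Q)(\nabla_{Y}G)^{-1}$; and the reverse inclusion via sufficiency of the subproblem KKT conditions together with Assumption \ref{as:mapping_cdns}(a). The paper leaves implicit both the constraint qualifications you flag (for the subproblem in (a), and for $\widehat{\gamma}\in\Gamma_{KKT}$ in (c)) and your local-versus-global caveat on positive definiteness, so stating them explicitly tightens the argument.
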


\subsection{Stabilized SLC\label{subsec:Stabilized-algorithm}}

The current subsection shows that we can attain the global convergence of the SLC algorithm by introducing a line-search procedure. Note that the following discussions do not rely on any statistical properties. Let $\phi_{1}\left(\gamma;\mu\right)\equiv Q\left(\theta;Y\right)+\mu\left\Vert G\left(Y;\theta\right)\right\Vert _{1}$ be the merit function.\footnote{In the case of the SQP algorithm for solving constrained optimization problems with inequality constraints, the following alternative merit function $\widetilde{\phi_{1}}\left(\gamma;\mu\right)\equiv Q\left(\theta;Y\right)+\mu\left\Vert G\left(Y;\theta\right)\right\Vert _{1}+\mu\sum_{i=1}^{n_{B}}\left[B_{i}\left(\theta\right)\right]^{-}$, where $\left[B_{i}\left(\theta\right)\right]^{-}=\max\left\{ 0,-B_{i}\left(\theta\right)\right\} $, is often used (cf. \citealp{nocedal2006numerical}). However, in our algorithm, $B\left(\theta_{k}\right)\geq0$ , $B\left(\widetilde{\theta_{k+1}}\right)\geq0$, and $B\left(\theta_{k}+\sigma_{k}d_{\theta,k}\right)\geq0$ hold by the construction of the algorithm, and $\left[B_{i}\left(\theta\right)\right]^{-}\equiv\max\left\{ 0,-B_{i}\left(\theta\right)\right\} =0$ hold for any $\theta=\theta_{k}+\alpha_{k}\sigma_{k}d_{\theta,k}$ because the subset $\Theta$, characterizing the function $B$, is convex. Consequently, we do not introduce the term $\mu\sum_{i=1}^{n_{B}}\left[B_{i}\left(\theta\right)\right]^{-}$ in the merit function $\phi_{1}\left(\gamma;\mu\right)$.} Algorithm \ref{alg:SLC-line-search} shows the steps:\footnote{The updating rules for $\mu_{k}$ are motivated by those mentioned in \citet{nocedal2006numerical} for the SQP algorithm. Because the SLC algorithm without stabilization does not require computing the Lagrange multipliers, update schemes for $\mu_{k}$ that rely on multiplier information---commonly used in SQP---are not desirable in our setting.}

\begin{algorithm}[H]
\caption{Line search SLC (with spectral algorithm)\label{alg:SLC-line-search}}

Set initial values $\gamma_{0}\equiv\left(\theta_{0},Y_{0}\right)$ such that $B\left(\theta_{0}\right)\geq0$, parameter values $\mu_{0}\geq\underline{\mu}>0,\ c>0,\xi\in(0,1),\tau_{\alpha}\in(0,1)$, and a sequence $\left\{ \eta_{k}\geq0\right\} _{k=0,1,...}$. Iterate the following until convergence $(k=0,1,2,\cdots)$:
\begin{enumerate}
\item Compute $Z_{1,k}\equiv Y_{k}-\left(\nabla_{Y}G\left(Y_{k};\theta_{k}\right)\right)^{-1}\left(G\left(Y_{k};\theta_{k}\right)\right)$ and $Z_{2,k}\equiv-\left(\nabla_{Y}G\left(Y_{k};\theta_{k}\right)\right)^{-1}\left(\nabla_{\theta}G\left(Y_{k};\theta_{k}\right)\right)$
\item Compute $\widetilde{\theta_{k+1}}=\arg\min_{\theta\in\Theta}Q\left(\theta,Z_{1,k}+Z_{2,k}\left(\theta-\theta_{k}\right)\right)$
\item Compute $\widetilde{Y_{k+1}}=Z_{1,k}+Z_{2,k}\left(\widetilde{\theta_{k+1}}-\theta_{k}\right)$.
\item Compute $d_{\theta,k}\equiv\widetilde{\theta_{k+1}}-\theta_{k}$, $d_{Y,k}\equiv\widetilde{Y_{k+1}}-Y_{k}$.
\item (Spectral algorithm) Compute the spectral coefficient $\sigma_{k}$ such that $\theta_{k}+\sigma_{k}d_{\theta,k}\in\Theta$
\item (For guaranteeing global convergence) Choose $\mu_{k}$ such that

\[
\mu_{k}=\max\left\{ \mu_{k-1},\frac{\left(\nabla_{\gamma}Q\left(\gamma_{k}\right)\right)d_{k}+c\left\Vert d_{k}\right\Vert ^{2}}{\left\Vert G\left(Y_{k};\theta_{k}\right)\right\Vert _{1}+\eta_{k}}\right\} 
\]

\item \label{enu:backtrack-SLC}(Backtracking) Let $\alpha_{k}\leftarrow1$. Iterate the following:
\begin{enumerate}
\item Compute $\Delta_{k}\equiv D\left(\phi_{1}\left(\theta_{k},Y_{k};\mu_{k}\right);d_{k}\right)=\left(\nabla_{(\theta,Y)}Q\left(\theta,Y\right)\right)d_{k}-\mu_{k}\left(\left\Vert G\left(Y_{k};\theta_{k}\right)\right\Vert _{1}\right)$
\item If $\phi_{1}\left(\gamma_{k}+\alpha_{k}\sigma_{k}d_{k};\mu_{k}\right)<\phi_{1}\left(\gamma_{k};\mu_{k}\right)+\zeta\alpha_{k}\sigma_{k}\Delta_{k}+(1-\zeta)\alpha_{k}\sigma_{k}\mu_{k}\eta_{k}$, let $\gamma_{k+1}\leftarrow\gamma_{k}+\alpha_{k}\sigma_{k}d_{k}$ and go back to Step 1.
\item Reset $\alpha_{k}\leftarrow\tau_{\alpha}\alpha_{k}$. Go back to Step \ref{enu:backtrack-SLC}(b).
\end{enumerate}
\end{enumerate}
\end{algorithm}

To show the global convergence of the algorithm, we impose the following two assumptions.

\begin{assumption}

The solution of the minimization problem $\min_{\theta\in\Theta}\widetilde{Q}\left(\theta,\gamma_{k}\right)=Q\left(\theta,\Upsilon\left(\theta,\gamma_{k}\right)\right)$  is unique.

\label{as:unique_sol_Q_tilde}
\end{assumption}

\begin{assumption}[Boundedness]

The sequences $\left\{ \gamma_{k}\right\} _{k\in\mathbb{N}}$ and $\left\{ \mu_{k}\right\} _{k\in\mathbb{N}}$ are bounded. In addition, there exists $\overline{K}\in\mathbb{N}$ such that $\mu_{k}=\exists\overline{\mu}\ \forall k\geq\overline{K}$.

\label{as:boundedness}
\end{assumption}

Assumption \ref{as:unique_sol_Q_tilde} is satisfied if $\widetilde{Q}\left(\theta;\gamma\right)$ is guaranteed to be convex. The assumption of bounded sequence (Assumption \ref{as:boundedness} is sometimes assumed in the literature on the SQP.\footnote{\citet{solodov2009global} investigated the global convergence of the SQP algorithm without relying on the boundedness assumption.} 

The following lemma shows that the backtracking step will terminate in a finite number of steps.
\begin{lem}
\label{lem:finite-terminate-backtracking}There exists $\alpha_{k}\in(0,1]$ such that $\phi_{1}\left(\gamma_{k}+\alpha_{k}\sigma_{k}d_{k};\mu_{k}\right)<\phi_{1}\left(\gamma_{k};\mu_{k}\right)+\zeta\alpha_{k}\sigma_{k}\Delta_{k}+(1-\zeta)\alpha_{k}\sigma_{k}\mu_{k}\eta_{k}$ for any $k\in\mathbb{N}$ and any tuning parameter $\eta_{k}\geq0$.
\end{lem}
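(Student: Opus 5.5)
The plan is to reduce Lemma \ref{lem:finite-terminate-backtracking} to the two-sided bound of Lemma \ref{lem: ineq_merit_func}, applied with step length $t\equiv\alpha\sigma_{k}$ and penalty $\mu=\mu_{k}$. That lemma only uses the linearized-constraint identity $G(\gamma_{k})+(\nabla_{\gamma}G(\gamma_{k}))d_{k}=0$, which holds by Steps 1--3 of Algorithm \ref{alg:SLC-line-search}. It therefore gives, for $t\in(0,1)$,
\[
\phi_{1}\left(\gamma_{k}+t d_{k};\mu_{k}\right)-\phi_{1}\left(\gamma_{k};\mu_{k}\right)\leq t\Delta_{k}+Ct^{2}\left\Vert d_{k}\right\Vert ^{2},
\]
where $\Delta_{k}=(\nabla_{\gamma}Q(\gamma_{k}))d_{k}-\mu_{k}\left\Vert G(\gamma_{k})\right\Vert _{1}$ and $C=c_{Q}+\mu_{k}c_{G}$. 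I take $\sigma_{k}>0$ as given; the spectral coefficient is a ratio of norms, and it is set to $1$ in degenerate cases.

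\textbf{Step 1: sign of $\Delta_{k}$ from the choice of $\mu_{k}$.} Step 6 gives $\mu_{k}\left(\left\Vert G(\gamma_{k})\right\Vert _{1}+\eta_{k}\right)\geq(\nabla_{\gamma}Q(\gamma_{k}))d_{k}+c\left\Vert d_{k}\right\Vert ^{2}$. Rearranging yields
\[
\Delta_{k}\leq\mu_{k}\eta_{k}-c\left\Vert d_{k}\right\Vert ^{2}.
\]
We need the backtracking target $t\Delta_{k}+Ct^{2}\|d_{k}\|^{2}<\zeta t\Delta_{k}+(1-\zeta)t\mu_{k}\eta_{k}$. This is equivalent to $(1-\zeta)(\Delta_{k}-\mu_{k}\eta_{k})+Ct\|d_{k}\|^{2}<0$. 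By Step 1, $\Delta_{k}-\mu_{k}\eta_{k}\leq-c\|d_{k}\|^{2}$, so it suffices that $Ct\|d_{k}\|^{2}<(1-\zeta)c\|d_{k}\|^{2}$.

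\textbf{Step 2: split into two cases.} If $d_{k}\neq0$, the condition holds for every $t<(1-\zeta)c/C$. Because $\alpha_{k}=\tau_{\alpha}^{j}$ decreases geometrically, a finite $j$ gives $\alpha_{k}\sigma_{k}$ below this threshold (and below $1$), and the strict inequality holds.

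If $d_{k}=0$, then $\gamma_{k}+t d_{k}=\gamma_{k}$ and the left side equals $\phi_{1}(\gamma_{k};\mu_{k})$. Moreover $G(\gamma_{k})=-\nabla_{\gamma}G(\gamma_{k})d_{k}=0$, so $\Delta_{k}=0$. The required inequality then reads $0<(1-\zeta)t\mu_{k}\eta_{k}$, which holds whenever $\eta_{k}>0$. If $\eta_{k}=0$ as well, $\gamma_{k}$ is a fixed point of the SLC iteration, and the algorithm has already terminated at $\gamma_{k}\in\Gamma_{seq}$. I will state this degenerate case explicitly, with the convention that such a point is accepted.

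\textbf{Main obstacle.} The delicate point is the strictness of the inequality when $d_{k}=0$ or $\eta_{k}=0$, together with the requirement $\zeta\in(0,1)$. The paper's notation switches between $\xi$ and $\zeta$, and I will take them to denote the same parameter in $(0,1)$. A secondary point is that the constants $c_{Q},c_{G}$ of Lemma \ref{lem: ineq_merit_func} must be uniform along the segment $\{\gamma_{k}+t d_{k}:t\in[0,\sigma_{k}]\}$. This follows from Assumption \ref{as:conti_diffble}, since $Q$ and $G$ are $C^{2}$ on this compact segment.
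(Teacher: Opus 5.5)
Your proposal is correct and follows essentially the paper's route. The paper takes $\overline{\alpha_{k}}$ from the directional-derivative limit of Proposition \ref{prop:descent_direction} (itself a consequence of Lemma \ref{lem: ineq_merit_func}) and then applies the bound $\Delta_{k}\leq-c\Vert d_{k}\Vert^{2}+\mu_{k}\eta_{k}$ of Lemma \ref{lem:Delta_k_upperbound}; these are exactly your two ingredients, with your explicit threshold $\alpha_{k}\sigma_{k}<(1-\zeta)c/C$ replacing the limit argument. Your separate treatment of $d_{k}=0$ is a real improvement, since there the paper's intermediate strict inequality reads $0<\Delta_{k}=0$, and in that case the lemma itself holds only when $\eta_{k}>0$. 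Note only that your Step 1, like the paper's Lemma \ref{lem:Delta_k_upperbound}, tacitly needs $\Vert G(\gamma_{k})\Vert_{1}+\eta_{k}>0$ for Step 6 to define $\mu_{k}$, along with the $\sigma_{k}>0$ you already assume.
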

Then, we can obtain the following global convergence result:
\begin{prop}
(Global convergence)\label{prop:Global-convergence}

Under Assumptions \ref{as:existence_constrained_opt_sol}, \ref{as:conti_diffble}, \ref{as:mapping_cdns}, \ref{as:nonsingular_Jacobian}, and \ref{as:unique_sol_Q_tilde}, \ref{as:boundedness}, the sequence $\left\{ \gamma_{k}\equiv\left(\theta_{k},Y_{k}\right)\right\} _{k=1,2\cdots}$ based on the line-search SLC algorithm (Algorithm \ref{alg:SLC-line-search}) converges to a point in $\Gamma_{FOC}$.
\end{prop}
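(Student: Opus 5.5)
The plan is the standard merit-function argument used for SQP global convergence (cf. \citet{bonnans2006numerical}, \citet{nocedal2006numerical}). It is adapted to the fact that the SLC direction $d_k$ solves the linearly constrained subproblem (\ref{eq:SLC_constrained_opt}).

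\textbf{Step 1: sufficient decrease of the merit function.} By Assumption \ref{as:boundedness}, $\mu_k=\overline{\mu}$ for all $k\geq\overline{K}$, so from then on we work with a fixed merit function $\phi_1(\cdot;\overline{\mu})$. The update rule for $\mu_k$ in Step 6 gives
\[
\overline{\mu}\left(\|G(\gamma_k)\|_1+\eta_k\right)\geq(\nabla_\gamma Q(\gamma_k))d_k+c\|d_k\|^2 .
\]
Combining this with Proposition \ref{prop:descent_direction} yields
\[
\Delta_k+\overline{\mu}\eta_k\leq -c\|d_k\|^2 .
\]
Lemma \ref{lem:finite-terminate-backtracking} says the backtracking terminates, and the accepted step satisfies
\[
\phi_1(\gamma_{k+1})-\phi_1(\gamma_k)<\zeta\alpha_k\sigma_k\Delta_k+(1-\zeta)\alpha_k\sigma_k\overline{\mu}\eta_k .
\]
Taking $\zeta$ so that the right-hand side is at most $-\zeta'\alpha_k\sigma_k c\|d_k\|^2$ plus a term proportional to $\eta_k$, I would require $\sum_k\eta_k<\infty$ (the role of the tuning sequence). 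Then $\phi_1(\gamma_k)$ is bounded below on the bounded sequence (continuity of $Q,G$ on a compact set) and almost decreasing. Summing gives
\[
\sum_k\alpha_k\sigma_k\|d_k\|^2<\infty .
\]

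\textbf{Step 2: step sizes bounded away from zero.} This is the main obstacle. I would use Lemma \ref{lem: ineq_merit_func}, whose upper bound is $\alpha\Delta_k+C\alpha^2\|d_k\|^2$ with $C$ uniform because the iterates are bounded and Assumption \ref{as:conti_diffble} holds. From this, any $\alpha\leq\bar\alpha$ for a fixed $\bar\alpha>0$ (depending on $c,C,\zeta$) passes the Armijo-type test. Hence $\alpha_k\geq\tau_\alpha\bar\alpha$. For the spectral coefficient $\sigma_k$ I would need a uniform lower bound. Under boundedness of the iterates, I would impose that $\sigma_k$ is safeguarded in an interval $[\sigma_{\min},\sigma_{\max}]$, or derive such bounds from the $\|s_k\|/\|y_k\|$ formula and the Lipschitz continuity of $H$ near bounded sets. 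This is where I expect the most delicacy. Granting it, $\|d_k\|\to0$.

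\textbf{Step 3: characterizing limit points.} Take a limit point $\bar\gamma$ (it exists by boundedness). By Assumption \ref{as:unique_sol_Q_tilde} and continuity (a Berge-type maximum theorem, with $\Upsilon$ continuous and $\nabla_YG$ nonsingular by Assumption \ref{as:nonsingular_Jacobian}), the map $\gamma\mapsto H(\gamma)$ is continuous. Then $d_k=H(\gamma_k)-\gamma_k\to0$ gives $H(\bar\gamma)=\bar\gamma$, that is, $\bar\gamma\in\Gamma_{seq}$. Proposition \ref{prop:fixed-points-KKT-seq}(a) then gives $\bar\gamma\in\Gamma_{FOC}$.

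\textbf{Step 4: convergence of the whole sequence.} To upgrade from limit points to convergence of the sequence, I would use $\|\gamma_{k+1}-\gamma_k\|\leq\sigma_{\max}\|d_k\|\to0$. This makes the set of limit points connected. Convergence then follows if points of $\Gamma_{FOC}$ are isolated, which holds locally under the nonsingularity in Assumption \ref{as:existence_constrained_opt_sol}. Otherwise, I would state the result as convergence of every limit point to $\Gamma_{FOC}$.
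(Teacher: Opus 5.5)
Your Steps 1--2 are the paper's argument. You freeze $\mu_k=\overline{\mu}$ for $k\geq\overline{K}$, telescope the backtracking inequality using summable $\eta_k$ and boundedness of $\phi_1$, and keep the step sizes away from zero by combining the upper bound of Lemma \ref{lem: ineq_merit_func} with $\Delta_k\leq\overline{\mu}\eta_k-c\|d_k\|^2$ (Lemma \ref{lem:Delta_k_upperbound}). The paper gets $\inf_k\alpha_k>0$ by contradiction: if $\alpha_k\to0$, the rejected trial step $\alpha_k/\tau_\alpha$ forces $\alpha_k\sigma_k/\tau_\alpha>c(1-\zeta)/C_1$. You instead show directly that all sufficiently small trial steps pass, which is equivalent.

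Fix the sign in your display: Step 6 gives $\Delta_k-\overline{\mu}\eta_k\leq-c\|d_k\|^2$, not $\Delta_k+\overline{\mu}\eta_k\leq-c\|d_k\|^2$. Your later use, with a remainder proportional to $\eta_k$, is consistent with the correct version, and any $\zeta\in(0,1)$ works.

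Your concern about $\sigma_k$ is justified. The paper uses an unstated bound $|\sigma_k|\leq\overline{\sigma}$. Its step $\sum\alpha_k\sigma_k\Delta_k\leq\overline{\sigma}\sum\alpha_k\Delta_k$ fails exactly when $\Delta_k<0$, so a positive lower bound on $\sigma_k$ is genuinely needed to conclude $\|d_k\|\to0$. State the safeguard $\sigma_k\in[\sigma_{\min},\sigma_{\max}]$ as an explicit hypothesis. The Lipschitz route would give at best $\sigma_k\geq 1/L$, and Lipschitz continuity of $H$ does not follow from the stated assumptions.

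The paper's proof stops at $d_k\to0$. Your Steps 3--4 go further, but Step 4 rests on a claim that does not hold. The nonsingularity in Assumption \ref{as:existence_constrained_opt_sol} concerns only the point $\widehat{\gamma}$; read as a condition on $\nabla_YG$, it is a constraint qualification. It does not make points of $\Gamma_{FOC}$ isolated. Isolation would require nonsingularity of the Jacobian of the system defining $\Gamma_{FOC}$ (locally, of $\nabla_{\theta\theta^{\prime}}Q_{NFXP}$) at every candidate limit point. Even a unique global minimizer can be an accumulation point of critical points. Without such a nondegeneracy hypothesis, your argument, like the paper's, yields only that every limit point lies in $\Gamma_{FOC}$ and that the limit set is connected. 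It does not give convergence of the whole sequence.

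In Step 3, note that Assumptions \ref{as:nonsingular_Jacobian} and \ref{as:unique_sol_Q_tilde} are imposed at the iterates only.
\begin{itemize}
\item You do not need uniqueness of the argmin at $\overline{\gamma}$. The closed graph from Berge's theorem together with $\theta_{k_j}+d_{\theta,k_j}\to\overline{\theta}$ suffices.
\item You do need $\nabla_YG(\overline{\gamma})$ nonsingular.
\item Proposition \ref{prop:fixed-points-KKT-seq}(a) yields $\Gamma_{FOC}$ only when $\overline{\theta}$ is interior to $\Theta$.
\end{itemize}
A shorter route is to pass to the limit in the linearized constraint, $G(\gamma_k)=-\nabla_\gamma G(\gamma_k)d_k\to0$, and in the first-order condition of the SLC subproblem.
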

\begin{rem}
In principle, we can set $\eta_{k}=0$. In this case, we can guarantee that $\Delta_{k}<0$ for a sufficiently large $\mu_{k}$. However, when the initial value $\gamma_{k}$ satisfies $G(\gamma_{k})\approx0$ even when $\gamma_{k}$ is far from $\widehat{\theta}$, the value of $\mu_{k}$, which should be larger than $\frac{\left(\nabla_{\gamma}Q\left(\gamma_{k}\right)\right)d_{k}+c\left\Vert d_{k}\right\Vert ^{2}}{\left\Vert G\left(Y_{k};\theta_{k}\right)\right\Vert _{1}+\eta_{k}}$, can be large. It is known that large $\mu_{k}$ make the convergence of iterations slower for the SQP (cf. \citealp{nocedal2006numerical}), and this is not desirable. Consequently, the current study introduces the term $\eta_{k}$ so that $\mu_{k}$ does not get too large.

Note that setting $\eta_{k}>0$ also plays the role of a nonmonotone line search. The existence of the term $(1-\zeta)\alpha_{k}\sigma_{k}\mu_{k}\eta_{k}$ in Step 7(b) suggests that the value of $\alpha_{k}$ is accepted even when the merit function value slightly increases. In the numerical optimization literature, it is well known that line-search SQP methods with a monotone line-search procedure may suffer from the Maratos effect, whereby the line search step inhibits fast local convergence. Nonmonotone line-search strategies have been applied as one way to mitigate this issue (cf. \citealp{nocedal2006numerical}), and the present study adopts such an approach.
\end{rem}

\section{Details of the numerical experiments\label{sec:Details-numerical}}

\subsection{Dynamic game with time-varying unobserved heterogeneity\label{subsec:Dynamic-game-details}}

We partition the state $\omega$ into two components: the observed state $x$ and the unobserved state $s$. Let $v$ be the choice-specific value functions, and let $\overline{p_{1}}\equiv\left(\mathcal{L}\left(x_{1},s_{1}\right)\right)_{(x_{1},s_{1})}$ be the initial distribution.

Maximum likelihood estimator maximizes the following objective function:

\begin{eqnarray*}
 &  & \mathcal{L}\left(a,x|x_{1};\theta_{u},\pi,v,\overline{p_{1}}\right)\\
 & = & \sum_{s_{1}\in\mathcal{S}}\sum_{s_{2}\in\mathcal{S}}\cdots\sum_{s_{T}\in\mathcal{S}}\mathcal{L}\left(a_{1},\cdots,a_{T},x_{1},\cdots,x_{T},s_{1},\cdots,s_{T}|x_{1};\theta_{u},\pi,v,\overline{p_{1}}\right)\\
 & = & \sum_{s_{1}\in\mathcal{S}}\sum_{s_{2}\in\mathcal{S}}\cdots\sum_{s_{T}\in\mathcal{S}}\mathcal{L}\left(s_{1}|x_{1};\overline{p_{1}}\right)\mathcal{L}\left(a_{1},x_{2}|x_{1},s_{1};\theta_{u},v\right)\cdot\\
 &  & \mathcal{L}\left(s_{2}|s_{1};\pi\right)\mathcal{L}\left(a_{2},x_{3}|x_{2},s_{2};\theta_{u},v\right)\cdots\\
 &  & \mathcal{L}\left(s_{T}|s_{T-1};\pi\right)\mathcal{L}\left(a_{T}|x_{T-1},s_{T-1};\theta_{u},v\right)\\
 & = & \sum_{s_{T}\in\mathcal{S}}\mathcal{L}\left(a_{T}|x_{T},s_{T};\theta_{u},v\right)\sum_{s_{T-1}\in\mathcal{S}}\mathcal{L}\left(s_{T}|s_{T-1};\pi\right)\sum_{s_{T-2}\in\mathcal{S}}\mathcal{L}\left(a_{T-1},x_{T-1}|x_{T-2},s_{T-2};\theta_{u},v\right)\cdot\\
 &  & \cdots\cdot\sum_{s_{1}\in\mathcal{S}}\mathcal{L}\left(s_{2}|s_{1};\pi\right)\mathcal{L}\left(a_{1},x_{2}|x_{1},s_{1};\theta_{u},v\right)\cdot\mathcal{L}\left(s_{1}|x_{1};\overline{p_{1}}\right)
\end{eqnarray*}
where $v$ satisfies $v=\Phi_{v}(v;\theta_{u},\pi)$. Under the assumption of stationary distribution in the initial period, $\overline{p_{1}}$ satisfies $\overline{p_{1}}=\Phi_{\overline{p_{1}}}\left(\overline{p_{1}};v,\theta_{u},\pi\right)$, where $\overline{p_{1}}\left(x^{\prime},s^{\prime}\right)=\sum_{x,s}\mathcal{L}\left(x^{\prime},s^{\prime}|x,s;\theta_{u},\pi,v\right)\overline{p_{1}}\left(x,s\right)$. Note that $\mathcal{L}\left(s_{1}|x_{1};\overline{p_{1}}\right)=\frac{\overline{p_{1}}\left(x_{1},s_{1}\right)}{\sum_{x_{1}}\overline{p_{1}}\left(x_{1},s_{1}\right)}$ holds.

By defining $h_{1}\left(s_{1};\theta_{u},\pi,v,\overline{p_{1}}\right)\equiv\mathcal{L}\left(s_{1}|x_{1};\theta_{u},\pi,v,\overline{p_{1}}\right)$ and $h_{t}\left(s_{t};\theta_{u},\pi,v,\overline{p_{1}}\right)\equiv\sum_{s_{t-1}\in\mathcal{S}}\mathcal{L}\left(s_{t}|s_{t-1};\pi\right)\mathcal{L}\left(d_{t-1},x_{t}|x_{t-1},s_{t-1};\theta_{u},v\right)h_{t-1}\left(s_{t-1};\theta_{u},\pi,v,\overline{p_{1}}\right)$ for $t=2,\cdots,T$, 

\begin{eqnarray*}
\mathcal{L}\left(a,x|x_{1};\theta_{u},\pi,v,\overline{p_{1}}\right) & = & \mathcal{L}\left(s_{T}|s_{T-1};\pi\right)\mathcal{L}\left(a_{T}|x_{T},s_{T};\theta_{u},\pi,v\right)h_{T}\left(s_{T};\theta_{u},\pi,v,\overline{p_{1}}\right)
\end{eqnarray*}
holds. Hence, given the values of $\theta_{u},\pi,v,\overline{p_{1}}$, we can compute the likelihood $\mathcal{L}\left(a,x|x_{1};\theta_{u},\pi,v,\overline{p_{1}}\right)$.

In the estimation, the initial parameter values are randomly drawn as follows. Concerning parameters except for $\theta_{RS_{2}}$ and $\pi$, the current study first estimates them by first-step NPL assuming no unobserved heterogeneity, and multiplies the values by values randomly drawn from $U\left[0.5,1.5\right]$. Parameter $\theta_{RS_{2}}$ is drawn from $U[0,2]$, and $\pi$ is drawn from $U[0,1]$. The initial choice-specific value functions $v$ are those estimated in the first\nobreakdash-step NPL under the assumption of no unobserved heterogeneity. For $\overline{p_{1}}$, the present study sets the initial values to $\frac{1}{|\Omega|}$. To exclude unusually large or small values, we impose additional inequality constraints requiring parameter values to lie between $-$10 and 10.

\subsection{Dynamic BLP model\label{subsec:Dynamic-BLP-details}}

As in \citet{sun2019computationally}, let $X_{jt}=[1,\chi_{jt},-p_{jt}]$. Product characteristics $\chi_{jt}$ and $\xi_{jt}$ are generated as $\chi_{jt}\equiv\left[\begin{array}{c}
\chi_{1jt}\\
\chi_{2jt}\\
\chi_{3jt}
\end{array}\right]\sim N\left(\left[\begin{array}{c}
0\\
0\\
0
\end{array}\right],\left[\begin{array}{ccc}
0.5^{2}\\
 & 0.5^{2}\\
 &  & 0.5^{2}
\end{array}\right]\right)$ and $\xi_{jt}\sim N(0,1)$. The price $p_{jt}$ is generated from $p_{jt}=\gamma_{0}+\gamma_{x}^{\prime}\chi_{jt}+\gamma_{z}z_{jt}+\gamma_{w}w_{jt}+\gamma_{\xi}\xi_{jt}-\gamma_{p}^{\prime}\sum_{k\neq j}\chi_{kt}+u_{jt}$, where $z_{jt}=\rho_{0}+\rho_{1}z_{jt-1}+\eta_{jt}$, $\eta_{jt}\sim N(0,0.1^{2})$, $w_{jt}\sim N(0,1)$, $u_{jt}\sim N(0,0.01^{2})$, $[\gamma_{0},\gamma_{X1},\gamma_{X2},\gamma_{X3},\gamma_{z},\gamma_{w},\gamma_{\xi}]=[1,0.2,0.2,0.1,1,0.2,0.7],\gamma_{p}=[0.1,0.1,0.1]$, $z_{j0}=8$, $[\rho_{0},\rho_{z}]=[0.1,0.95]$. 

Regarding the demand parameters $\theta_{i}$, let $\theta_{i}=[\theta_{i}^{X0},\theta_{i}^{X1},\theta_{i}^{X2},\theta_{i}^{X3},\theta_{i}^{p}]$ and $\left[\begin{array}{c}
\theta_{i}^{X1}\\
\theta_{i}^{X2}\\
\theta_{i}^{p}
\end{array}\right]\sim N\left(\left[\begin{array}{c}
1\\
1\\
2
\end{array}\right],\left[\begin{array}{ccc}
0.5^{2}\\
 & 0.5^{2}\\
 &  & 0.25^{2}
\end{array}\right]\right)$, $\theta_{i}^{X0}=6,\text{\ensuremath{\theta_{i}^{X3}=0.5}}$.

In the estimation, the initial parameter values (the standard deviation parameters of the random coefficients) are randomly drawn as twice the true value multiplied by a uniform random variable on $U[0,1]$. To exclude unusually large or small values, we impose inequality constraints requiring the parameter values to lie between $-$5 and 5. The initial values of the value functions $V$ are taken so that they satisfy the constraint given the initial parameter values.

Concerning standard deviation parameters of the random coefficients,, the current study does not impose nonnegativity constraints. Regarding such parameters, \citet{monardo2025comparing} demonstrated by numerical experiments on static BLP models that imposing nonnegativity constraints on normal distribution standard deviation parameters may increase the risk of convergence to local rather than global minima in the NFXP algorithm. The present study finds a similar pattern for the SLC algorithm.

\subsubsection*{Local minima}

Concerning the simulation results, in 5 out of the 20 randomly generated datasets---each based on 5 random initial parameter values---the SLC-Spectral and NFXP estimators differ,\footnote{We assume they differ if the sup norm of the distance is larger than 1E-6.} although the maximum value of the difference is less than 0.05. In the current numerical experiment, the standard deviation parameters of the random coefficients are not restricted to be positive, which increases the number of potential local minima.\footnote{Because the normal distribution is symmetric, $-\widehat{\sigma_{p}}$ can be the solution if $\widehat{\sigma_{p}}$ is the solution for instance, in the limit where an infinite number of simulation draws is used for the normal distribution. However, in practice the number of simulation draws is finite, and the solution may instead be $-\widetilde{\sigma_{p}}$, which differs slightly from $-\widehat{\sigma_{p}}$.} Regarding the observed difference, we find that this discrepancy is driven by the choice of initial parameter values: when the SLC is initialized at slightly perturbed versions of the NFXP solution, it converges to the NFXP solution. Moreover, in 2 of the 5 datasets where the two estimators differ, the SLC attains parameter values with a smaller objective value.\footnote{In the case of the original SLC, the estimator coincides with the NFXP estimator in 16 out of the 17 datasets.} This suggests that the SLC-Spectral can perform competitively in terms of locating global optima. 

\section{Additional results\label{sec:Additional-results}}

\subsection{EVFI for Dynamic game estimation\label{subsec:EVFI-additional}}

The current section shows numerical results where we can utilize the idea of the Endogenous Value Function Iteration (EVFI) proposed by \citet{bray2019markov} not only for the NFXP but also the SLC in the dynamic game model considered in Section \ref{subsec:Dynamic-discrete-game-numerical}. \citet{bray2019markov} showed that EVFI attains faster convergence than the traditional Value Function Iteration (VFI) and the Relative Value Function Iteration (\citealp{bray2019strong}) for solving dynamic discrete choice models.

The essential idea of the EVFI is that what matters is the value function shape over the endogenous states in each exogenous state, rather than the value function levels in all exogenous and endogenous states. Here, rather than the original constraint $v=\Phi_{v}\left(v;\theta\right)$ concerning choice-specific value functions $v$, we consider the following alternative constraint concerning an alternative variable $v_{E}$:

\[
v_{E}=\Lambda\left(\Phi_{v}\left(v_{E};\theta\right)\right)
\]
Here, $\Lambda$ is a mapping defined by:

\[
\Lambda\widetilde{v}\left(\omega\equiv\left(z,y\right),a\right)=\widetilde{v}\left(\omega,a\right)-\frac{1}{|\mathcal{Y}|}\sum_{y^{\prime}\in\mathcal{Y}}\widetilde{v}\left(z,y^{\prime},a_{0}\right).
\]
We let $\mathcal{Y}$ be the set of endogenous states. We assume that the state $\omega$ can be divided into exogenous states $z$ and endogenous states $y$. In contrast to the exogenous states, endogenous states depend on the agents' previous actions. We also take an arbitrary action $a_{0}$.

By letting $\Pi$ be a mapping from choice specific value functions $v$ to CCPs, we can obtain $\Pi v=\Pi\Lambda v$.\footnote{Let $C(z)\equiv\frac{1}{|\mathcal{Y}|}\sum_{y^{\prime}\in\mathcal{Y}}\widetilde{v}\left(z,y^{\prime},a_{0}\right)$. Then, CCP of action $a$ given state $\omega$ is given by: 
\begin{eqnarray*}
Pr\left(a|\omega;v\right) & = & Pr\left(v\left(\omega,a\right)+\epsilon\left(\omega,a\right)\geq v\left(\omega,a^{\prime}\right)+\epsilon\left(\omega,a^{\prime}\right)\ \forall a^{\prime}\neq a\right)\\
 & = & Pr\left(v\left(\omega,a\right)-C(z)+\epsilon\left(\omega,a\right)>v\left(\omega,a^{\prime}\right)-C(z)+\epsilon\left(\omega,a^{\prime}\right)\ \forall a^{\prime}\neq a\right)\\
 & = & Pr\left(a|\omega;\Lambda v\right)
\end{eqnarray*}
} This suggests that, rather than solving the original problem with constraints $v=\Phi_{v}\left(v;\theta\right)$ and $\overline{p_{1}}=\Phi_{\overline{p_{1}}}\left(v,\overline{p_{1}};\theta\right)$, it suffices to obtain the estimate $\theta$ by solving the following alternative problem:

\begin{eqnarray*}
 & \min_{\theta,v_{E},\overline{p_{1}}} & Q\left(\theta,v_{E},\overline{p_{1}}\right)\\
 & s.t. & v_{E}=\Lambda\Phi_{v}\left(v_{E};\theta\right)\\
 &  & \overline{p_{1}}=\Phi_{\overline{p_{1}}}\left(v_{E},\overline{p_{1}};\theta\right)
\end{eqnarray*}

Table \ref{tab:EVFI-based} shows the results. The table suggests that using the idea of the EVFI leads to faster convergence of the NFXP and (numerical derivative-based) SLC. In addition, the VFI-based SLC is still faster than the EVFI-based NFXP.
\begin{center}
{\scriptsize{}
\begin{table}[H]
{\scriptsize\caption{Results of numerical experiments (Dynamic game with time-varying unobserved heterogeneity; EVFI-based){\scriptsize\label{tab:EVFI-based}}}
}{\scriptsize\par}
\begin{centering}
{\scriptsize{}%
\begin{tabular}{cccccccc}
\hline 
{\footnotesize Analy.} & \multirow{2}{*}{{\footnotesize Algorithm}} & \multicolumn{2}{c}{{\footnotesize Comp. Time (sec)}} & {\footnotesize Feval $Q$} & {\footnotesize Feval $G$} & {\footnotesize\# of} & {\footnotesize Difference with}\tabularnewline
\cline{3-4}
{\footnotesize Derivatives} &  & {\footnotesize Mean} & {\footnotesize Std.} & {\footnotesize (mean)} & {\footnotesize (mean)} & {\footnotesize Main iter.} & {\footnotesize SLC-VFI (Analy)}\tabularnewline
\hline 
\multirow{4}{*}{{\footnotesize Yes}} & {\footnotesize SLC-VFI} & {\footnotesize 9.3} & {\footnotesize 1.2} & {\footnotesize 316} & {\footnotesize 9.3} & {\footnotesize 9.3} & {\footnotesize -}\tabularnewline
 & {\footnotesize NFXP-VFI} & {\footnotesize 33.3} & {\footnotesize 2.1} & {\footnotesize 42.9} & {\footnotesize 5201.6} & {\footnotesize 33.4} & {\footnotesize 2.88E-06}\tabularnewline
 & {\footnotesize SLC-EVFI} & {\footnotesize 9.3} & {\footnotesize 1.4} & {\footnotesize 316} & {\footnotesize 9.3} & {\footnotesize 9.3} & {\footnotesize 0}\tabularnewline
 & {\footnotesize NFXP-EVFI} & {\footnotesize 21.4} & {\footnotesize 2.1} & {\footnotesize 42.9} & {\footnotesize 2425.5} & {\footnotesize 33.4} & {\footnotesize 2.88E-06}\tabularnewline
\hline 
\multirow{4}{*}{{\footnotesize No}} & {\footnotesize SLC-VFI} & {\footnotesize 42.3} & {\footnotesize 5.2} & {\footnotesize 3827.3} & {\footnotesize 5515.7} & {\footnotesize 9.3} & {\footnotesize 6.70E-07}\tabularnewline
 & {\footnotesize NFXP-VFI} & {\footnotesize 321.4} & {\footnotesize 26.9} & {\footnotesize 10094.1} & {\footnotesize 76960.5} & {\footnotesize 33.4} & {\footnotesize 2.78E-06}\tabularnewline
 & {\footnotesize SLC-EVFI} & {\footnotesize 32.7} & {\footnotesize 4.8} & {\footnotesize 3824.1} & {\footnotesize 4286.9} & {\footnotesize 9.1} & {\footnotesize 9.80E-07}\tabularnewline
 & {\footnotesize NFXP-EVFI} & {\footnotesize 150.4} & {\footnotesize 12.7} & {\footnotesize 10097} & {\footnotesize 36055.9} & {\footnotesize 33.4} & {\footnotesize 2.77E-06}\tabularnewline
\hline 
\end{tabular}}{\scriptsize\par}
\par\end{centering}
{\footnotesize Notes. Based on 20 randomly generated datasets with five random initial starting points. $N=640$. ``Feval'' denotes the mean number of function evaluations.}{\footnotesize\par}
\end{table}
}{\scriptsize\par}
\par\end{center}

Concerning the speedup of the Jacobian-free SLC due to the use of the EVFI, we can give a simple explanation. Here, we clarify that using a fixed\nobreakdash-point mapping with a smaller modulus for constructing the constraint function $G$ is generally preferable for achieving faster convergence of iterative methods to solve the associated linear equations. 

In general, linear systems whose coefficient matrices have smaller condition numbers exhibit faster convergence of Krylov subspace methods. For example, when the coefficient matrix is symmetric and positive definite, the convergence rate of GMRES is characterized by the condition number of the matrix (cf. Chapter 6 of \citealp{saad2003iterative}).

In the extreme case where the modulus of the fixed\nobreakdash-point mapping is zero---that is, when $\Phi\left(Y;\theta\right)=\exists h\left(\theta\right)$---the Jacobian $\nabla_{Y}G\left(Y;\theta\right)=\nabla_{Y}\left(Y-\Phi\left(Y;\theta\right)\right)$ becomes the identity matrix. In this case, the condition number is equal to 1, which is the theoretical lower bound.

When the modulus is strictly positive, the Jacobian deviates from the identity matrix, and the condition number increases accordingly. As the modulus becomes larger, the conditioning of the associated linear system generally worsens, leading to slower convergence of Krylov methods.

\section{Proof of additional results\label{sec:Proof-of-additional}}

\subsection{Convexity of $\widetilde{Q}\left(\theta;\gamma\right)$}

\subsubsection*{Proof of Lemma \ref{lem:pd_Q_tilde_Hessian}}
\begin{proof}
If $\Upsilon(\theta,\gamma)$ is linear concerning $\theta$, $\nabla_{\theta}\Upsilon\left(\theta,\gamma\right)$ does not depend on $\theta$. Hence, if $Q\left(\theta,Y\right)$ does not depend on $\theta$, $\nabla_{\theta}Q\left(\Upsilon(\theta,\gamma)\right)=\left(\nabla_{Y}Q\left(\Upsilon(\theta,\gamma)\right)\right)\left(\nabla_{\theta}\Upsilon\left(\theta,\gamma\right)\right)$ holds, which implies:

\[
\nabla_{\theta\theta^{\prime}}Q\left(\Upsilon(\theta,\gamma)\right)=\left(\nabla_{\theta}\Upsilon(\theta,\gamma)\right)^{\prime}\left(\nabla_{YY^{\prime}}Q\left(\Upsilon(\theta,\gamma)\right)\right)\left(\nabla_{\theta}\Upsilon(\theta,\gamma)\right).
\]

Then, for any $d\in\mathbb{R}^{n_{\theta}}-\{0\}$, 

\begin{eqnarray*}
d^{T}\left(\nabla_{\theta\theta^{\prime}}Q\left(\Upsilon(\theta,\gamma)\right)\right)d & = & d^{T}\left(\nabla_{\theta}\Upsilon(\theta,\gamma)\right)^{\prime}\left(\nabla_{YY^{\prime}}Q\left(\Upsilon(\theta,\gamma)\right)\right)\left(\nabla_{\theta}\Upsilon(\theta,\gamma)\right)d\\
 & = & \left(\left(\nabla_{\theta}\Upsilon(\theta,\gamma)\right)d\right)^{\prime}\left(\nabla_{YY^{\prime}}Q\left(\Upsilon(\theta,\gamma)\right)\right)\left(\left(\nabla_{\theta}\Upsilon(\theta,\gamma)\right)d\right)\\
 & \geq & 0\ \left(\because\nabla_{YY^{\prime}}Q\left(\Upsilon(\theta,\gamma)\right)\text{\ is\ positive definite}\right).
\end{eqnarray*}

If $\Upsilon\left(\theta;\gamma_{k}\right)\equiv Y_{k}-\left(\nabla_{Y}G\left(Y_{k};\theta_{k}\right)\right)^{-1}\left(G\left(Y_{k};\theta_{k}\right)\right)-\left(\nabla_{Y}G\left(Y_{k};\theta_{k}\right)\right)^{-1}\left(\nabla_{\theta}G\left(Y_{k};\theta_{k}\right)\right)\left(\theta-\theta_{k}\right)$, 

linear independence of the columns of$\nabla_{\theta}G\left(\gamma\right)$ implies

$d\neq0\Rightarrow\nabla_{\theta}G\left(\gamma\right)d\neq0$. If we assume $\left(\nabla_{Y}G\left(\gamma\right)\right)^{-1}\nabla_{\theta}G\left(\gamma\right)d=0$, $\nabla_{\theta}G\left(\gamma\right)d=0$ holds, which contradicts with $\nabla_{\theta}G\left(\gamma\right)d\neq0$. Hence, $d\neq0\Rightarrow\left(\nabla_{\theta}\Upsilon(\theta,\gamma)\right)d=\left(\nabla_{Y}G\left(\gamma\right)\right)^{-1}\nabla_{\theta}G\left(\gamma\right)d\neq0$ holds, and

\begin{eqnarray*}
d^{T}\left(\nabla_{\theta\theta^{\prime}}Q\left(\Upsilon(\theta,\gamma)\right)\right)d & = & \left(\left(\nabla_{\theta}\Upsilon(\theta,\gamma)\right)d\right)^{\prime}\left(\nabla_{YY^{\prime}}Q\left(\Upsilon(\theta,\gamma)\right)\right)\left(\left(\nabla_{\theta}\Upsilon(\theta,\gamma)\right)d\right)\\
 & > & 0\ \left(\because\nabla_{YY^{\prime}}Q\left(\Upsilon(\theta,\gamma)\right)\text{\ is\ positive definite}\right).
\end{eqnarray*}
\end{proof}

\subsection{Propositions related to the KKT conditions}

\subsubsection*{Proof of Propositions \ref{prop:fixed-points-KKT-seq-ineq}}
\begin{proof}
(a). Let $\widetilde{\gamma}\equiv\left(\widetilde{\theta},\widetilde{Y}\right)\in\Gamma_{seq}$ be a fixed point of Algorithm \ref{alg:general_seq}. By $\widetilde{\theta}=\arg\min_{\theta\in\Theta}\widetilde{Q}\left(\theta;\widetilde{\gamma}\right)$, there exists $\widetilde{\lambda_{B}}$ such that

\begin{eqnarray*}
 &  & \nabla_{\theta}\widetilde{Q}\left(\widetilde{\theta};\widetilde{\gamma}\right)-\widetilde{\lambda_{B}}^{T}\nabla_{\theta}B\left(\widetilde{\theta}\right)=0\\
 &  & B\left(\widetilde{\theta}\right)\geq0\\
 &  & \widetilde{\lambda_{B}}\geq0\\
 &  & \widetilde{\lambda_{B_{i}}}B_{i}\left(\widetilde{\theta}\right)=0\ \forall i=1,\cdots n_{B}
\end{eqnarray*}

Then, $\nabla_{\theta}\Upsilon\left(\theta;\gamma=\left(\theta,Y\right)\right)=-\left(\nabla_{Y}G\left(Y;\theta\right)\right)^{-1}\left(\nabla_{\theta}G\left(Y;\theta\right)\right)$ by Assumption \ref{as:mapping_cdns} (c) and $\nabla_{\theta}\widetilde{Q}\left(\widetilde{\theta};\widetilde{\gamma}\right)=\nabla_{\theta}Q\left(\widetilde{\theta},\widetilde{Y}\right)+\left(\nabla_{Y}Q\left(\widetilde{\theta},\widetilde{Y}\right)\right)\left(\nabla_{\theta}\Upsilon\left(\widetilde{\theta};\widetilde{\gamma}\right)\right)$ implies:

\[
\nabla_{\theta}Q\left(\widetilde{\theta},\widetilde{Y}\right)-\left(\nabla_{Y}Q\left(\widetilde{\theta},\widetilde{Y}\right)\right)\left(\nabla_{Y}G\left(\widetilde{Y};\widetilde{\theta}\right)\right)^{-1}\left(\nabla_{\theta}G\left(\widetilde{Y};\widetilde{\theta}\right)\right)-\widetilde{\lambda_{B}}^{T}\left(\nabla_{\theta}B\left(\widetilde{\theta}\right)\right)=0
\]
In addition, $\Upsilon\left(\widetilde{\theta};\widetilde{\gamma}\right)=\widetilde{Y}$ and Assumption \ref{as:mapping_cdns} (a) implies $G\left(\widetilde{Y};\widetilde{\theta}\right)=0$. 

Therefore, the discussion above implies that there exist $\lambda_{G}^{T}=\left(\nabla_{Y}Q\left(\widetilde{\gamma}\right)\right)\left(\nabla_{Y}G\left(\widetilde{\gamma}\right)\right)^{-1}$ and $\lambda_{B}=\widetilde{\lambda_{B}}$ such that

\begin{eqnarray*}
 &  & \nabla_{\theta}Q\left(\widetilde{\gamma}\right)-\lambda_{G}^{T}\nabla_{\theta}G\left(\widetilde{\gamma}\right)-\lambda_{B}^{T}\nabla_{\theta}B\left(\widetilde{\theta}\right)=0\\
 &  & \nabla_{Y}Q\left(\widetilde{\gamma}\right)-\lambda_{G}^{T}\nabla_{Y}G\left(\widetilde{\gamma}\right)=0\\
 &  & G\left(\widehat{\gamma}\right)=0\\
 &  & B\left(\widetilde{\theta}\right)\geq0\\
 &  & \lambda_{B}\geq0\\
 &  & \lambda_{B_{i}}B_{i}\left(\widetilde{\theta}\right)=0\ \forall i=1,\cdots n_{B}
\end{eqnarray*}

Hence, $\widetilde{\gamma}\equiv\left(\widetilde{\theta},\widetilde{Y}\right)\in\Gamma_{KKT}$ holds.

(b). Let $\widetilde{\gamma}\equiv\left(\widetilde{\theta},\widetilde{Y}\right)\in\Gamma_{KKT}$. Then, there exist $\lambda_{G}$ and $\lambda_{B}$ such that:

\begin{eqnarray*}
 &  & \nabla_{\theta}Q\left(\widetilde{\gamma}\right)-\lambda_{G}^{T}\nabla_{\theta}G\left(\widetilde{\gamma}\right)-\lambda_{B}^{T}\nabla_{\theta}B\left(\widetilde{\theta}\right)=0\\
 &  & \nabla_{Y}Q\left(\widetilde{\gamma}\right)-\lambda_{G}^{T}\nabla_{Y}G\left(\widetilde{\gamma}\right)=0\\
 &  & G\left(\widetilde{\gamma}\right)=0\\
 &  & B\left(\widetilde{\theta}\right)\geq0\\
 &  & \lambda_{B}\geq0\\
 &  & \lambda_{B_{i}}B_{i}\left(\widetilde{\theta}\right)=0\ \forall i=1,\cdots n_{B}
\end{eqnarray*}

Then, 

\begin{eqnarray*}
0 & = & \nabla_{\theta}Q\left(\widetilde{\gamma}\right)-\left(\nabla_{Y}Q\left(\widetilde{\theta},\widetilde{Y}\right)\right)\left(\nabla_{Y}G\left(\widetilde{Y};\widetilde{\theta}\right)\right)^{-1}\left(\nabla_{\theta}G\left(\widetilde{Y};\widetilde{\theta}\right)\right)-\lambda_{B}^{T}\left(\nabla_{\theta}B\left(\widetilde{\theta}\right)\right)\\
 & = & \nabla_{\theta}\widetilde{Q}\left(\widetilde{\theta},\widetilde{Y}\right)-\lambda_{B}^{T}\left(\nabla_{\theta}B\left(\widetilde{\theta}\right)\right)\ \left(\because\text{Assumption \ref{as:mapping_cdns} (c)}\right)
\end{eqnarray*}

Because there exists $\lambda_{B}$ such that

\begin{eqnarray*}
 &  & \nabla_{\theta}\widetilde{Q}\left(\widetilde{\theta};\widetilde{\gamma}\right)-\lambda_{B}^{T}\nabla_{\theta}B\left(\widetilde{\theta}\right)=0\\
 &  & B\left(\widetilde{\theta}\right)\geq0\\
 &  & \lambda_{B}\geq0\\
 &  & \lambda_{B_{i}}B_{i}\left(\widetilde{\theta}\right)=0\ \forall i=1,\cdots n_{B},
\end{eqnarray*}
and $\nabla_{\theta\theta^{\prime}}\widetilde{Q}\left(\widetilde{\theta};\widetilde{\gamma}\right)$ is positive definite, $\widetilde{\theta}=\arg\min_{\theta\in\Theta}\widetilde{Q}\left(\theta;\widetilde{\gamma}\right)$ holds. 

Because $G\left(\widetilde{\gamma}\right)=0$ implies $\widetilde{Y}=\Upsilon\left(\widetilde{\theta};\widetilde{\gamma}\right)$ by Assumption \ref{as:mapping_cdns}(a), $\widetilde{\gamma}\equiv\left(\widetilde{\theta},\widetilde{Y}\right)\in\Gamma_{KKT}$ is a fixed-point of Algorithm \ref{alg:general_seq}. Hence, $\widetilde{\gamma}\in\Gamma_{seq}$ holds.

(c). By setting $\widetilde{\gamma}=\widehat{\gamma}\in\Gamma_{KKT}$ in the proof of (b), an analogous argument applies, yielding $\widehat{\gamma}=\widetilde{\gamma}\in\Gamma_{seq}$.
\end{proof}

\subsection{Termination of backtracking (Proof of Lemma \ref{lem:finite-terminate-backtracking})}

In the following, let $\Delta_{k}\equiv D\left(\phi_{1}\left(\theta_{k},Y_{k};\mu_{k}\right);d_{k}\right)$. To show Lemma \ref{lem:finite-terminate-backtracking}, we use the following lemma.

\begin{lem}
\label{lem:Delta_k_upperbound}If $\mu_{k}=\max\left\{ \mu_{k-1},\frac{\left(\nabla_{\gamma}Q\left(\gamma_{k}\right)\right)d_{k}+c\left\Vert d_{k}\right\Vert ^{2}}{(1-\kappa)\left\Vert G\left(\gamma_{k}\right)\right\Vert _{1}+\eta_{k}}\right\} $, $\Delta_{k}=\left(\nabla_{\gamma}Q\left(\gamma_{k}\right)\right)d_{k}-\mu_{k}\left\Vert G\left(\gamma_{k}\right)\right\Vert _{1}\leq-c\left\Vert d_{k}\right\Vert ^{2}+\mu_{k}\eta_{k}\leq\mu_{k}\eta_{k}$.
\end{lem}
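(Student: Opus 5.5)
The plan is to read the bound straight off the defining property of $\mu_k$. The only facts needed are the definition of $\Delta_k$ (the directional derivative computed in Proposition \ref{prop:descent_direction}) and the sign information on the tuning parameters.

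First, I will record the signs. Since $\mu_0\geq\underline{\mu}>0$ and the update takes a maximum with $\mu_{k-1}$, induction gives $\mu_k\geq\mu_{k-1}\geq\cdots\geq\mu_0>0$. I will also take $c>0$, $\eta_k\geq0$, and $\kappa\in[0,1)$; the role of $\kappa$ is not spelled out in the statement, so I read it as a fixed constant in this range, with $\kappa=0$ recovering Step 6 of Algorithm \ref{alg:SLC-line-search}. I will then treat the generic case in which the denominator $D_k\equiv(1-\kappa)\left\Vert G\left(\gamma_{k}\right)\right\Vert _{1}+\eta_{k}$ is strictly positive.

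In that case, $\mu_k$ is at least the second argument of the maximum. Multiplying through by $D_k>0$ gives
\[
\mu_{k}(1-\kappa)\left\Vert G\left(\gamma_{k}\right)\right\Vert _{1}+\mu_{k}\eta_{k}\geq\left(\nabla_{\gamma}Q\left(\gamma_{k}\right)\right)d_{k}+c\left\Vert d_{k}\right\Vert ^{2}.
\]
Because $\kappa\geq0$ and $\mu_k>0$, we have $\mu_k\left\Vert G\left(\gamma_{k}\right)\right\Vert _{1}\geq\mu_k(1-\kappa)\left\Vert G\left(\gamma_{k}\right)\right\Vert _{1}$. Hence
\[
\Delta_{k}=\left(\nabla_{\gamma}Q\left(\gamma_{k}\right)\right)d_{k}-\mu_{k}\left\Vert G\left(\gamma_{k}\right)\right\Vert _{1}\leq\left(\nabla_{\gamma}Q\left(\gamma_{k}\right)\right)d_{k}-\mu_{k}(1-\kappa)\left\Vert G\left(\gamma_{k}\right)\right\Vert _{1}\leq-c\left\Vert d_{k}\right\Vert ^{2}+\mu_{k}\eta_{k}.
\]
The final inequality $-c\left\Vert d_{k}\right\Vert ^{2}+\mu_k\eta_k\leq\mu_k\eta_k$ holds because $c>0$.

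The main obstacle is the degenerate case $D_k=0$, which occurs exactly when $G(\gamma_k)=0$ and $\eta_k=0$; there the ratio in the update is undefined. I would handle it with the usual convention: set the quotient to $+\infty$ when its numerator is positive, and drop it from the maximum when the numerator is nonpositive.

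With that convention, a finite $\mu_k$ in this case forces $\left(\nabla_{\gamma}Q\left(\gamma_{k}\right)\right)d_{k}+c\left\Vert d_{k}\right\Vert ^{2}\leq0$. The claimed inequality $\Delta_k=\left(\nabla_{\gamma}Q\left(\gamma_{k}\right)\right)d_{k}\leq-c\left\Vert d_{k}\right\Vert ^{2}=-c\left\Vert d_{k}\right\Vert ^{2}+\mu_k\eta_k$ then holds directly. I would state this convention explicitly as part of the hypothesis that $\mu_k$ is well defined.
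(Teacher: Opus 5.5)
Your proposal is correct and follows essentially the same route as the paper. Like the paper, you multiply the defining inequality for $\mu_k$ by the positive denominator and then discard the nonpositive terms $-\kappa\mu_k\left\Vert G(\gamma_k)\right\Vert_1$ and $-c\left\Vert d_k\right\Vert^2$. Your explicit sign conditions ($\kappa\in[0,1)$, $\mu_k>0$, $c>0$) and your treatment of the degenerate case $G(\gamma_k)=0$, $\eta_k=0$ make precise what the paper's proof leaves implicit.
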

\begin{proof}
Because $\mu_{k}\geq\frac{\left(\nabla_{\gamma}Q\left(\gamma_{k}\right)\right)d_{k}+c\left\Vert d_{k}\right\Vert ^{2}}{(1-\kappa)\left\Vert G\left(\gamma_{k}\right)\right\Vert _{1}+\eta_{k}}$, 

\[
\left(1-\kappa\right)\mu_{k}\left\Vert G\left(\gamma_{k}\right)\right\Vert _{1}+\mu_{k}\eta_{k}\geq\left(\nabla_{\gamma}Q\left(\gamma_{k}\right)\right)d_{k}+c\left\Vert d_{k}\right\Vert ^{2}
\]
holds. Hence, we obtain:

\begin{eqnarray*}
\Delta_{k}=\left(\nabla_{\gamma}Q\left(\gamma_{k}\right)\right)d_{k}-\mu_{k}\left\Vert G\left(\gamma_{k}\right)\right\Vert _{1} & \leq & -\kappa\mu_{k}\left\Vert G\left(\gamma_{k}\right)\right\Vert _{1}-c\left\Vert d_{k}\right\Vert ^{2}+\mu_{k}\eta_{k}\\
 & \leq & -c\left\Vert d_{k}\right\Vert ^{2}+\mu_{k}\eta_{k}\\
 & \leq & \mu_{k}\eta_{k}
\end{eqnarray*}
\end{proof}

\subsubsection*{Proof of Lemma \ref{lem:finite-terminate-backtracking}}
\begin{proof}
Because $\Delta_{k}=\lim_{\alpha\rightarrow0}\frac{\phi_{1}\left(\gamma_{k}+\alpha d_{k};\mu\right)-\phi_{1}\left(\gamma_{k};\mu\right)}{\alpha}$, $\exists\overline{\alpha_{k}}>0$ s.t. $\frac{\phi_{1}\left(\gamma_{k}+\overline{\alpha_{k}}d_{k};\mu_{k}\right)-\phi_{1}\left(\gamma_{k};\mu_{k}\right)}{\overline{\alpha_{k}}}<\Delta_{k}+(1-\zeta)c\left\Vert d_{k}\right\Vert ^{2}$. By letting $\alpha_{k}=\frac{\overline{\alpha_{k}}}{\sigma_{k}}$, we obtain 
\begin{eqnarray*}
 &  & \phi_{1}\left(\gamma_{k}+\alpha_{k}\sigma_{k}d_{k};\mu_{k}\right)-\phi_{1}\left(\gamma_{k};\mu_{k}\right)\\
 & < & \alpha_{k}\sigma_{k}\left(\Delta_{k}+(1-\zeta)c\left\Vert d_{k}\right\Vert ^{2}\right)\\
 & = & \zeta\alpha_{k}\sigma_{k}\Delta_{k}+\left(1-\zeta\right)\alpha_{k}\sigma_{k}\left(\Delta_{k}+c\left\Vert d_{k}\right\Vert ^{2}\right)\\
 & \leq & \zeta\alpha_{k}\sigma_{k}\Delta_{k}+\left(1-\zeta\right)\alpha_{k}\sigma_{k}\mu_{k}\eta_{k}\ \left(\because\Delta_{k}\leq-c\left\Vert d_{k}\right\Vert ^{2}+\mu_{k}\eta_{k}\text{\ by\ Lemma \ref{lem:Delta_k_upperbound}}\right)
\end{eqnarray*}
\end{proof}

\subsection{Global convergence (Proof of Proposition \ref{prop:Global-convergence})}

\subsubsection*{Proof of Proposition \ref{prop:Global-convergence}}

The following proposition is motivated by the proof of the global convergence of SQP algorithm for solving constrained optimization problems (cf. Section 17 of \citealp{bonnans2006numerical}).
\begin{proof}
Because $\phi_{1}\left(\gamma_{k+1};\overline{\mu}\right)<\phi_{1}\left(\gamma_{k};\overline{\mu}\right)+\zeta\alpha_{k}\sigma_{k}\Delta_{k}+\left(1-\zeta\right)\alpha_{k}\sigma_{k}\overline{\mu}\eta_{k}$ for $k\geq\overline{K}$, the following holds for $j\in\mathbb{N}$ and $k\geq\overline{K}$:

\begin{eqnarray*}
\phi_{1}\left(\gamma_{k+j+1};\overline{\mu}\right) & < & \phi_{1}\left(\gamma_{k};\overline{\mu}\right)+\zeta\sum_{m=0}^{j}\alpha_{k+m}\sigma_{k+m}\Delta_{k+m}+\left(1-\zeta\right)\sum_{m=0}^{j}\alpha_{k+m}\sigma_{k+m}\overline{\mu}\eta_{k+m}\\
 & \leq & \phi_{1}\left(\gamma_{k};\overline{\mu}\right)+\zeta\overline{\sigma}\sum_{m=0}^{j}\alpha_{k+m}\Delta_{k+m}+\left(1-\zeta\right)\overline{\sigma}\overline{\mu}\sum_{m=0}^{j}\eta_{k+m}
\end{eqnarray*}

If the merit function is bounded, $\sum_{m=0}^{j}\eta_{k+m}<\infty$ implies $-\infty<\sum_{m=0}^{j}\alpha_{k+m}\Delta_{k+m}<\infty$, namely, $\lim_{k\rightarrow\infty}\alpha_{k}\Delta_{k}=0$. If $\alpha_{k}\geq\exists\underline{\alpha}>0\ \forall k$, $\lim_{k\rightarrow\infty}\Delta_{k}\rightarrow0$ holds. Because $\Delta_{k}\leq-c\left\Vert d_{k}\right\Vert ^{2}+\mu_{k}\eta_{k}$ by Lemma \ref{lem:Delta_k_upperbound} and $\lim_{k\rightarrow\infty}\eta_{k}\rightarrow0,\mu_{k}\leq\overline{\mu}$, $\lim_{k\rightarrow\infty}d_{k}\rightarrow0$ holds.

Hence, it suffices to show $\lim_{k\rightarrow\infty}\alpha_{k}>0$. Suppose $\lim_{k\rightarrow\infty}\alpha_{k}=0$ and $\left\Vert d_{k}\right\Vert >0\ \forall k\in\mathbb{N}$. Then, by $\left|\sigma_{k}\right|\leq\overline{\sigma}$, $\exists\overline{K_{1}}\in\mathbb{N}$ s.t. $k\geq\overline{K_{1}}\Rightarrow0<\alpha_{k}\sigma_{k}<1$ and $0<\alpha_{k}\leq\frac{1}{\tau_{\alpha}}<1$. Hence, for $k\geq\overline{K_{1}}$, 
\begin{equation}
\phi_{1}\left(\gamma_{k}+\frac{1}{\tau_{\alpha}}\alpha_{k}\sigma_{k}d_{k};\mu_{k}\right)>\phi_{1}\left(\gamma_{k};\mu_{k}\right)+\zeta\frac{1}{\tau_{\alpha}}\alpha_{k}\sigma_{k}\Delta_{k}+\left(1-\zeta\right)\frac{1}{\tau_{\alpha}}\alpha_{k}\sigma_{k}\mu_{k}\eta_{k}\label{eq:ineq_for_global_conv_1}
\end{equation}
 holds.

By Lemma \ref{lem: ineq_merit_func}, 
\begin{equation}
\phi_{1}\left(\gamma_{k}+\frac{1}{\tau_{\alpha}}\alpha_{k}\sigma_{k}d_{k};\mu_{k}\right)\leq\phi_{1}\left(\gamma_{k};\mu_{k}\right)+\frac{1}{\tau_{\alpha}}\alpha_{k}\sigma_{k}\Delta_{k}+\exists C_{1}\left(\frac{1}{\tau_{\alpha}}\alpha_{k}\sigma_{k}\right)^{2}\left\Vert d_{k}\right\Vert ^{2}\label{eq:ineq_for_global_conv_2}
\end{equation}
 holds, where $C_{1}>0$.

Then, for $k\geq\overline{K_{1}}$, (\ref{eq:ineq_for_global_conv_1}) and (\ref{eq:ineq_for_global_conv_2}) imply

\[
\phi_{1}\left(\gamma_{k};\mu_{k}\right)+\zeta\frac{1}{\tau_{\alpha}}\alpha_{k}\sigma_{k}\Delta_{k}+\left(1-\zeta\right)\frac{1}{\tau_{\alpha}}\alpha_{k}\sigma_{k}\mu_{k}\eta_{k}<\phi_{1}\left(\gamma_{k};\mu_{k}\right)+\frac{1}{\tau_{\alpha}}\alpha_{k}\sigma_{k}\Delta_{k}+C_{1}\left(\frac{1}{\tau_{\alpha}}\alpha_{k}\sigma_{k}\right)^{2}\left\Vert d_{k}\right\Vert ^{2},
\]
namely,

\[
\left(1-\zeta\right)\frac{1}{\tau_{\alpha}}\alpha_{k}\sigma_{k}\left(\mu_{k}\eta_{k}-\Delta_{k}\right)<C_{1}\left(\frac{1}{\tau_{\alpha}}\alpha_{k}\sigma_{k}\right)^{2}\left\Vert d_{k}\right\Vert ^{2}.
\]

Because $\mu_{k}$ is chosen so that $\Delta_{k}\leq\mu_{k}\eta_{k}-c\left\Vert d_{k}\right\Vert ^{2}$ holds by Lemma \ref{lem:Delta_k_upperbound}, 
\[
c\left(1-\zeta\right)\frac{1}{\tau_{\alpha}}\alpha_{k}\sigma_{k}\left\Vert d_{k}\right\Vert ^{2}<C_{1}\left(\frac{1}{\tau_{\alpha}}\alpha_{k}\sigma_{k}\right)^{2}\left\Vert d_{k}\right\Vert ^{2},
\]
namely, 
\begin{equation}
0<\frac{1}{\tau_{\alpha}}\alpha_{k}\sigma_{k}C_{1}\left(\frac{1}{\tau_{\alpha}}\alpha_{k}\sigma_{k}-\frac{c}{C_{1}}\left(1-\zeta\right)\right)\left\Vert d_{k}\right\Vert ^{2}\label{eq:ineq_global_conv_contradiction}
\end{equation}
 holds for $k\geq\overline{K_{1}}$.

By $\lim_{k\rightarrow\infty}\alpha_{k}\sigma_{k}=0$, there exists $\overline{K_{2}}>\overline{K_{1}}$ such that $k\geq\overline{K_{2}}\Rightarrow\frac{1}{\tau_{\alpha}}\alpha_{k}\sigma_{k}-\frac{c}{C_{1}}\left(1-\zeta\right)<0$. Then, $\frac{1}{\tau_{\alpha}}\alpha_{k}\sigma_{k}-\frac{c}{C_{1}}\left(1-\zeta\right)<0$ holds for $k\geq\overline{K_{2}}>\overline{K_{1}}$. This implies that the right hand side of (\ref{eq:ineq_global_conv_contradiction}) takes a negative value, which is a contradiction. Therefore, $\lim_{k\rightarrow\infty}\alpha_{k}>0$.
\end{proof}

\end{document}